\documentclass[a4paper,UKenglish,cleveref, autoref, thm-restate]{lipics-v2021}

\usepackage[T1]{fontenc}
\usepackage[utf8]{inputenc}
\usepackage{amsmath,amssymb,mathtools}
\usepackage{graphicx}
\usepackage{booktabs}
\usepackage{array}
\usepackage{enumitem}
\usepackage{tikz}
\usetikzlibrary{arrows.meta,positioning}
\usepackage{microtype}
\usepackage{url}
\usepackage{cleveref}
\usepackage{todonotes}
\usepackage[most]{tcolorbox}
\usepackage{algpseudocode}
\usepackage{amsmath}

\usepackage{marginnote}
\usepackage[strict]{changepage}

\newif\ifshowdefnotes
\showdefnotestrue

\newcommand{\rightmarginnote}[1]{%
  \checkoddpage
  \ifoddpage
    \normalmarginpar
  \else
    \reversemarginpar
  \fi
  \marginnote{\footnotesize\raggedright\textsf{#1}}%
}

\newcommand{\defmargin}[1]{%
  \ifshowdefnotes
    \rightmarginnote{#1}%
  \fi
}

\title{Integer reachability in VASS with transfers: a refined complexity analysis} 

\author{Tymoteusz Kucharek}{The University of Warsaw, Poland }{tk448520@students.mimuw.edu.pl }{https://orcid.org/0000-0002-1825-0097}{}
\author{Piotr Hofman}{The University of Warsaw, Poland }{piotr.hofman@uw.edu.pl}{https://orcid.org/0000-0001-9866-3723}{}

\authorrunning{Tymoteusz Kucharek and Piotr Hofman} 

\Copyright{Tymoteusz Kucharek and Piotr Hofman} 

\ccsdesc[500]{Theory of computation~Logic and verification}
\ccsdesc[500]{Theory of computation~Problems, reductions and completeness}
\ccsdesc[300]{Theory of computation~Automata over infinite objects}

\keywords{VASS,
VASS with Transfers,
Petri Nets with Transfers,
Affine nets,
Reachability,
Relaxed semantics,
Integer reachability,
NP-complete vs PSPACE} 

\category{} 

\relatedversion{} 

\acknowledgements{This work was supported by Polish NCN grant UMO-2024/53/B/ST6/02776.}

\nolinenumbers 

\hideLIPIcs

\input{macros.sty}

\begin{document}

\maketitle

\begin{abstract}
Integer reachability is NP-complete for vector addition systems with states
(VASS), but becomes PSPACE-complete in the presence of transfer operations.
We refine this complexity gap for single-transfer VASS by identifying structural
features of transfers responsible for the increase in complexity. Each system
induces a transfer graph whose vertices are counters and whose edges represent
possible transfers. We classify its vertices as good or bad, according to the
branching and cyclic structure of their reachable subgraphs.

Let $b$ be the number of bad vertices. We show that every positive instance
admits a polynomially verifiable certificate of size $\size{I}^{O(b+1)}$, where
$\size{I}$ is the input size. Consequently, integer reachability for single-transfer
VASS can be decided in nondeterministic time $\size{I}^{O(b+1)}$; in particular, it
belongs to NP for every class with a bounded number of bad counters.

Conversely, we show that bad counters provide sufficient structural power to
encode space-bounded computation. For every transfer graph with $b$ bad
vertices, we construct a single-transfer VASS that encodes the acceptance of a
Turing machine using $b^{O(1)}$ tape cells. This yields PSPACE-hardness for
every polynomial-time constructible family of transfer graphs containing
linearly many bad vertices. Our results isolate the transfer patterns
responsible for the complexity of integer reachability.
\end{abstract}

\section{Introduction}
Vector addition systems (VASS) are one of the most important classes of infinite-state systems. They are a mathematically elegant formalism that is equivalent to Petri nets, which are used to model computer systems~\cite{DBLP:journals/topnoc/Wolf19}, business process management~\cite{DBLP:journals/sosym/Aalst15}, biology~\cite{DBLP:journals/siglog/Blondin20}, and many other domains. VASS and Petri nets have been studied extensively in theoretical computer science~\cite{DBLP:conf/ac/Esparza96,DBLP:journals/jacm/KunnemannMSSW25,DBLP:conf/focs/CzerwinskiO21}.

In this paper we consider transfer VASS, which belong to the family of VASS extensions~\cite{Bonnet2012WSTS,DBLP:conf/icalp/Valk78} introduced to model operations with a global nature, in contrast to the local modifications of pure Petri nets or ordinary VASS. The best-known example of a global operation is a reset~\cite{DBLP:journals/topnoc/AalstHHSVVW09,DBLP:conf/lics/BlondinFHMO24}.

Perhaps one of the most studied problems in computer science is reachability: given a system and two configurations, decide whether there is a run from the first configuration to the second. For ordinary VASS, the reachability problem is Ackermann-complete~\cite{DBLP:conf/focs/CzerwinskiO21,DBLP:conf/focs/Leroux21,DBLP:conf/lics/LerouxS15}, while reachability is undecidable for both reset and transfer VASS~\cite{DBLP:conf/icalp/DufourdFS98,DBLP:journals/siglog/Blondin20}. One approach to circumvent this undecidability is to consider relaxed semantics, such as integer semantics~\cite{DBLP:conf/ac/SilvaTC96,DBLP:conf/rp/HaaseH14,DBLP:journals/lmcs/BlondinHMR21}, where configurations are taken from $\Places\times\Int^d$ instead of $\Places\times\Nat^d$.

It is known that integer reachability is NP-complete for VASS and for VASS with resets~\cite{DBLP:conf/rp/HaaseH14}, while it is PSPACE-complete for VASS with transfers~\cite{DBLP:journals/lmcs/BlondinHMR21}. NP-completeness can still be considered manageable because of advances in SAT solvers, but PSPACE-completeness rules out the same practical applicability for transfer VASS that is possible for pure VASS. In this work we seek to better understand what makes transfer VASS hard. We identify a parameter that is precisely correlated with the exponent in the function bounding the length of a reachability witness.

Consider the directed graph $G$ whose vertices are counters and whose edges represent transfers in the VASS. More precisely, there is an edge from vertex $c_1$ to vertex $c_2$ if the VASS has a transition that transfers the content of counter $c_1$ to counter $c_2$. We call this graph the \emph{transfer graph}, and we use it to study the complexity of integer reachability. Vertices of this graph can be partitioned into good and bad vertices; the precise rules of partitioning are technically involved and appear in~\Cref{def:badVertices,def:counterGraphs}. The main claim is that the size of a polynomial-time verifiable witness for integer reachability depends exponentially on the number of bad vertices and polynomially on the number of good vertices. Intuitively, a vertex is good if the graph of nodes reachable from it is a path, so the choices are limited. Let $\bad(G)$ be the set of bad vertices in the graph $G$. Fix $\bad\in \Nat$, and let $\Class$ be the class of all transfer VASS whose transfer graph has at most $\bad$ bad vertices.

Our main results are captured by \Cref{thm:main:bound}, \Cref{col:main_complexity_upperbound} and \Cref{lem:main_hardness}.
\begin{itemize}
    \item 
    \Cref{thm:main:bound} establishes an upper bound on the size of a reachability witness, while~\Cref{col:main_complexity_upperbound} shows that integer reachability is in NP for transfer VASSes in $\Class$.    
    \item \Cref{lem:main_hardness} gives a matching lower bound. In particular, for any fixed transfer graph $G$ and any Turing machine using $\frac{\size{\bad(G)}}{30}$ tape cells, we construct in polynomial time a transfer VASS whose transfer graph is $G$ and two configurations such that the second configuration is reachable from the first if and only if the machine accepts.
\end{itemize} 

These results both identify transfer VASS instances for which integer reachability is tractable and clarify which transfer patterns make transfer VASS hard.

\section{Preliminaries}

We denote by $\Nat$, $\Int$\defmargin{$\Nat, \Int, [d]$} the sets of natural and integer numbers, respectively. 
Let $[d]$ be the set of all numbers from $1$ to $d$. For a set $\Set{S}$, we denote by $\size{\Set{S}}$\defmargin{$\size{X}$} the cardinality of $\Set{S}$.

If $f\colon A\to B$ is a function and $S\subseteq A$, then the \emph{restriction of $f$ to $S$} is the function
$f|_S\colon S\to B$\defmargin{$f|_S$} defined by $f|_S(x)=f(x)$ for every $x\in S$.

We denote by $\operatorname{NTIME}(\operatorname{poly}(x))$ the complexity class of problems that can be solved in nondeterministic polynomial time from $x$.

\subsection{Graphs}
A directed graph is a pair $(\Places, \Edges)$ where
$\Places$ is the set of vertices and $\Edges\subset \Places\times\Places$ is the set of edges.

For every vertex $v$ we define its \emph{open out-neighbourhood}\defmargin{$\neighbourhoodOpen{v}$} denoted as $\neighbourhoodOpen{v}$ as the set of 
vertices that are in the $\Edges$ relation with $v$, i.e. $\set{w\in \Places | (v,w)\in \Edges}$;
in particular the vertex $v$ does not belong to $\neighbourhoodOpen{v}$ unless there is an edge $(v,v)\in \Edges$.
The \emph{closed out-neighbourhood of $v$}\defmargin{$\neighbourhoodClosed{v}$} is $\neighbourhoodClosed{v} = \neighbourhoodOpen{v}\cup \set{v}$.

For a subset of the set of vertices $X\subseteq \Places$, the \emph{open} and \emph{closed out-neighbourhoods}\defmargin{$\neighbourhoodOpen{X}, \neighbourhoodClosed{X}$}
of $X$ are defined as $\neighbourhoodOpen{X} = \cup_{v \in X} \neighbourhoodOpen{v}$ and
$\neighbourhoodClosed{X} = \cup_{v \in X} \neighbourhoodClosed{v}$. 

A \emph{path} in a directed graph $G=(\Places,\Edges)$ is a finite sequence of vertices
$u_0,u_1,\dots,u_k$ such that $(u_{i-1},u_i)\in\Edges$ for every $1\le i\le k$.
The number $k$ is the length of the path. A path is \emph{simple} if all its vertices are distinct.

For vertices $u,v\in\Places$, we say that $v$ is \emph{reachable} from $u$ if there exists a path
from $u$ to $v$. In this case we also write $u\leadsto v$.
We denote by $\neighbourhoodStar{v}$\defmargin{$\neighbourhoodStar{v}$} the set of nodes reachable from $v$.

A subset $C\subseteq \Places$ of vertices is a \emph{strongly connected component} if for every two vertices $u,v\in C$, there is a path from $u$ to $v$ and a path from $v$ to $u$. In other words, all vertices in $C$ can reach each other by following directed edges.

Two vertices $u,v\in\Places$ are \emph{connected} if there exists a sequence of edges that can be traversed in either direction and that links $u$ to $v$. Equivalently, there is a path from $u$ to $v$ or from $v$ to $u$ in the underlying undirected graph obtained from $G$ by forgetting the orientation of edges.

A directed graph $G=(\Places,\Edges)$ is a \emph{cycle} if its vertices can be ordered as $v_1,\dots,v_k$ with $k\ge 1$ such that $\Edges=\{(v_i,v_{i+1})\mid 1\le i<k\}\cup\{(v_k,v_1)\}$. A \emph{simple cycle} is a cycle in which all vertices $v_1,\dots,v_k$ are distinct.

A directed graph $G=(\Places,\Edges)$ is a \emph{lasso} if it consists of a simple path $u_1,u_2,\dots,u_m$ together with a simple cycle $v_1,v_2,\dots,v_k$ such that $u_m=v_1$ and the path and cycle share no other vertices. In other words, a lasso is a simple-directed path whose last vertex is the first vertex of a simple directed cycle.

A graph $G'=(\Places',\Edges')$ is a \emph{subgraph} of $G=(\Places,\Edges)$ if $\Places'\subseteq\Places$ and $\Edges'\subseteq\Edges$. An \emph{induced subgraph} of $G$ on $\Places'$ is the graph $G[\Places']=(\Places',\Edges\cap (\Places'\times\Places'))$, containing every edge of $G$ whose endpoints both lie in $\Places'$.

For a graph $G=(\Places, \Edges)$ and a given $u \in \Places$ an \emph{out-spanning tree of $G$ rooted in $u$}
is any subgraph $H_u=(\Places', \Edges')$ of $G$, such that:
\begin{itemize}[nosep,leftmargin=1.3em]
  \item $\Places' = \Places$,
  \item for every $v \in \Places'$ there is exactly one path from $u$ to $v$. 
\end{itemize}
Notice that such a graph may not always exist, but it always exists for $\neighbourhoodStar{u}$.

\subsection{Single-transfer VASS}

\begin{definition}
For a given $d$, a transfer function is any function $[d]\to[d]$.
We denote by $\Tset_d=[d]^{[d]}$ the set of all transfer functions for a given dimension $d$.
\end{definition}

Note that transfer functions do not form a group as not every element can be reversed. For us the most important properties are that the set $\Tset$ is closed under composition, and finite. In $\Tset$ we identify a subset called:

\begin{definition}

For a given $d$, and $1\leq i,j\leq d$, let $f_i^j:[d]\to[d]$ be defined as
\[
f_i^j(k)=\begin{cases}
k & \text{for } k\neq i,\\
j & \text{for } k=i.
\end{cases}
\]
We will call such functions \emph{single-transfer functions}. 
\end{definition}

Note that such a function is also a transfer function.
Let $\Tset_d^s=\{f_i^j\mid 1\leq i,j\leq d\}$. 
This is the set of \emph{single-transfer functions} of dimension $d$.
In case $d$ is clear from the context we omit it.
Observe that by setting $i=j$ we get $f_i^j=\id$, thus $\id\in\Tset_d^s$.

\begin{definition}

  A \emph{transfer VASS} is a tuple $\V=(d,Q,\Delta)$, where:
  \begin{itemize}[nosep,leftmargin=1.3em]
    \item $d\geq 1$ is the \emph{number of counters} of $\V$,
    \item $Q$ is a finite set of elements called \emph{control space},
    \item $\Delta\subseteq Q\times\Tset_d\times\Z^d\times Q$ is a finite set of elements called \emph{transitions}.
  \end{itemize}
  The control space together with the transitions projected onto $Q\times Q$ we call the \emph{control automaton.}

\end{definition}
For a single transition $t=(p,f,v,q)\in\Delta$, let $\inn(t)\coloneqq p$, $\outt(t)\coloneqq q$, $f(t)\coloneqq f$, $\ddelta(t)=v$\defmargin{$\inn(t), \outt(t), f(t), \ddelta(t)$}.

A \emph{pure VASS} is a transfer VASS in which all transfer functions are identity i.e. for all $t\in \Delta$ it holds $f(t)=\id$. It is well known that 
integer reachability in pure VASS is an NP-complete problem. Basically, it can be encoded in the existential Presburger arithmetic~\cite{DBLP:conf/rp/HaaseH14}. 

\emph{Semantics.} A configuration is a pair $(q,v)\in Q\times\Z^d$, denoted by $q(v)$. The individual dimensions in $Z^d$ we call \emph{counters}. Note that this is not a standard definition, as usually configurations of VASS are defined as pairs of a state and a vector of natural numbers. However, in this paper, we will work with integer VASS, so we allow counters to take negative values. As the whole paper is about integer VASS, we will simply call them VASS and configurations, without the ``integer'' prefix.

For all $t\in\Delta$ we define a relation $\xrightarrow{t}_{\V}$\defmargin{$\xrightarrow{t}_{\V}$} on $Q\times\Z^d$, such that $q_1(v_1)\xrightarrow{t}_{\V}q_2(v_2)$ if and only if $\inn(t)=q_1$, $\outt(t)=q_2$ and for every $1\leq j\leq d$,
\[
v_2[j]=\left(\sum_{i\in f(t)^{-1}(\{j\})}v_1[i]\right)+\ddelta(t)[j].
\]
This means that when applying a transition $t$ to the counters, first the contents of each counter are moved to the counter specified by $f(t)$ and then $\ddelta(t)$ is added.
When $f(t)^{-1}(\{j\})$ is empty, the sum is $0$, so $t$ resets $j$-th counter to $delta(t) [j]$.

To simplify the notation whenever we consider a function $f:[d]\to [d]$ it can be lifted naturally to a function $\Z^d\to \Z^d$. Simply for a vector $v\in \Z^d$, $f(v)=v'$ such that 
$v'[j]=\sum_{i\in f^{-1}({j})}v[i]$.

Observe that $\xrightarrow{t}_{\V}$ naturally extends, for a given $w\in\Delta^*$, to a relation $\xrightarrow{w}_{\V}$. If $w=t_1\ldots t_n$, then we define $\xrightarrow{w}_{\V}$ as $\xrightarrow{t_n}_{\V}\circ\cdots\circ\xrightarrow{t_1}_{\V}$.

Moreover, we consider relations:
\begin{align*}
q_1(v_1)\xrightarrow{}_{\V}q_2(v_2) &\iff \exists t\in\Delta\ q_1(v_1)\xrightarrow{t}_{\V}q_2(v_2),\\
q_1(v_1)\xrightarrow{*}_{\V}q_2(v_2) &\iff \exists w\in\Delta^*\ q_1(v_1)\xrightarrow{w}_{\V}q_2(v_2).
\end{align*}
If the VASS $\V$ is known from the context we drop the subscript and write $\xrightarrow{t}$, $\xrightarrow{w}$, $\to$, $\to^*$\defmargin{$\xrightarrow{t}$}.

\emph{Single-transfer VASS.}
A single-transfer VASS is a transfer VASS, with the restriction that for every transition,
its associated transfer function is a single-transfer function.
In single-transfer VASS, for a transition $t=(p,f_i^j,v,q)\in Q\times\Tset_d^s\times\Z^d\times Q$, let $\src(t) = i$ and $\tgt(t) = j$.\defmargin{$\src(t), \tgt(t)$}
For the identity function $\src$ and $\tgt$ are undefined.

Reachability in Single-transfer VASSes is as hard as reachability transfer VASS, this is captured by ~\Cref{lem:simpleTransferEq}.
We use single-transfer VASS instead of general ones, because for single-transfer VASS we were able to concisely characterize parameters important for the complexity.

The \emph{size} of a transfer VASS $\V=(d,Q,\Delta)$ 
is 
$
 \size{\V}=d + |Q| + |\Delta| + \sum_{t\in\Delta} |\delta(t)| + \sum_{t\in\Delta} |f(t)|,
$
where $|\delta(t)|$ denotes the bit-length of the vector $\delta(t)$ and $\size{f(t)}$ denotes the bit-length of the representation of the transfer function $f(t)$.
For single-transfer VASS, each single-transfer function can be represented by the pair $(\src(t),\tgt(t))$ (plus special representation for $\id$), so the cost of representing $f(t)$ is $O(\log d)$.

\section{The reachability problem}\label{Sec:BigReachability}
The reachability problem for transfer VASS $\V=(d,Q,\Delta)$ is the following decision problem: 

\begin{tcolorbox}[title=\textbf{Integer Reachability Problem}]
\textbf{Input:} A single-transfer VASS $\V$ and two configurations $\initial$, $\final$.\defmargin{$\initial, \final$}
\\
\textbf{Decide:} $\initial\to^* \final$ in $\V$?\\
The input size is the size of the input VASS
plus binary encoded initial and final configurations.
\end{tcolorbox}%

As already mentioned in the introduction, this problem is PSPACE-complete, even for single-transfer VASS \cite{DBLP:journals/lmcs/BlondinR21,DBLP:journals/lmcs/BlondinHMR21}. 

The key concept in their paper we call \emph{summary functions}, defined as follows. 
\begin{definition}\label{def:summaryFunctions}
Suppose $\initial=p(\vecinitial),$ and $\final=q(\vecfinal)$\defmargin{$p(\vecinitial), q(\vecfinal)$}.
 Let  $p(\vecinitial)\xrightarrow{w}q(v)$.
We decorate each transition $w_i\in w$ with a \emph{summary function} $g_i\in \Tset$ such that
$g_i=f(w_{\size{w}})\circ f(w_{\size{w}-1})\circ \ldots \circ f(w_{i+1})$ for $i<|w|$ and
$g_{|w|}=id$. A set $\Gamma(w) = \{g_0, ..., g_{|w|}\}$ of all summary functions $g_i$ labelling a given run $w$ we call a \emph{set of summary functions induced by $w$}.
We drop the argument if the run is known from the context.
\end{definition}

It is easy to see that this labelling is equivalent to 

\begin{equation}\label{eq:composition}
g_{|w|}=id\text{ and }g_i= g_{i+1}(f(w_{i+1}))\text{ for }i<|w|  
\end{equation}

Now, observe that 
\[
v=g_0(\vecinitial)+\sum_{i=1}^{|w|} g_i(\ddelta(w_i)).
\]
 Moreover,
existence of a run 
$p(\vecinitial)\xrightarrow{*}q(\vecfinal)$ 
in the transfer VASS is equivalent to the existence of the above sum, 
where $g_i$ satisfy~\Cref{eq:composition} and $w_i$ are vectors labelling consecutive transition in the control automaton of the VASS. 

This allows us to construct a pure $d$-dimensional VASS, such that reachability in our transfer VASS
is equivalent to reachability in the pure VASS.
In the pure VASS, the states are $Q\times \Tset$, and there is a transition
 $((r,g), id, g'(v), (r',g'))$ if and only if in the transfer 
 VASS there is a transition $(r,f,v,r')$ such that 
 $g= g' \circ f$. Then in the transfer VASS there is a run $p(\vecinitial)\xrightarrow{*}q(\vecfinal)$,
if and only if there is a run from $((p,g_0),g_0(\vecinitial))$ to $((q,id),\vecfinal)$ in the pure VASS.

The constructed pure VASS is of exponential size, as $\size{\Tset}$ is exponential. The authors of~\cite{DBLP:journals/lmcs/BlondinHMR21} in their PSPACE algorithm construct the integer run in the constructed pure VASS on the fly, which is the standard trick in such algorithms\footnote{Almost, the authors of~\cite{DBLP:journals/lmcs/BlondinHMR21} do not want to start with counters $g_0(\initial)$ as this is not allowed in pure VASS, so they introduce additional dimensions and add a gadget encoding this first transfer operation. The construction presented here is sufficient for our purpose.}. 

From the above we can derive a simple lemma:

\begin{lemma}[Certificate from bounded summaries]
\label{lem:complexityFromBound}
There exists a polynomial $Poly$ such that the following holds.
Let $
    I=(V,\initial,\final)
$
be an instance of the integer reachability problem for single-transfer VASS. Suppose that there exists a run
$
    \initial\xrightarrow{t_1\cdots t_n}\final
$
whose induced set of summary functions $\Gamma(t_1\cdots t_n)$ satisfies
$
    |\Gamma(t_1\cdots t_n)| \leq m .
$
Then the reachability of $\final$ from $\initial$ in $V$ has a certificate of
size at most $Poly(m\cdot \size{I})$, verifiable in time polynomial in $m$ and $\size{I}$.
\end{lemma}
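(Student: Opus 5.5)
The plan is to use the pure-VASS simulation described above, but restricted to the summary functions that actually occur. The certificate consists of three parts:
\begin{itemize}[nosep,leftmargin=1.3em]
  \item the set $\Gamma=\Gamma(t_1\cdots t_n)$, written as at most $m$ explicit functions $[d]\to[d]$, each of size $O(d\log d)$;
  \item the distinguished function $g_0\in\Gamma$;
  \item a certificate that integer reachability holds in the pure VASS $\V_\Gamma$, the induced sub-VASS of the exponential construction whose states are $Q\times\Gamma$.
\end{itemize}
The verifier first builds $\V_\Gamma$ explicitly. Its states are the pairs $(r,g)$ with $r\in Q$ and $g\in\Gamma$. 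For each transition $(r,f,v,r')\in\Delta$ and each pair $g,g'\in\Gamma$ with $g=g'\circ f$, it adds a pure transition from $(r,g)$ to $(r',g')$ labelled $g'(v)$. Each composition check takes time polynomial in $d$. Hence $\V_\Gamma$ has at most $m\size{Q}$ states and at most $m^2\size{\Delta}$ transitions. Every label $g'(v)$ is a sum of at most $d$ entries of $v$, so its bit-length is polynomial in $\size{I}$. Altogether $\size{\V_\Gamma}\le Poly(m\cdot\size{I})$, and $\V_\Gamma$ is computed in that time.

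Next comes correctness, in both directions. For soundness, take any run of $\V_\Gamma$ from $(p,g_0)$ with counters $g_0(\vecinitial)$ to $(q,\id)$ with counters $\vecfinal$. It projects to a transition sequence of $\V$ whose attached functions satisfy \Cref{eq:composition}. By the displayed formula $v=g_0(\vecinitial)+\sum_i g_i(\ddelta(w_i))$, this sequence is then a run $\initial\to^*\final$ in $\V$; this is exactly the argument for the full construction on $Q\times\Tset$. For completeness, the given run $t_1\cdots t_n$, decorated with its summary functions, uses only functions from $\Gamma$. It therefore lifts to a run of $\V_\Gamma$ between the two configurations above. The verifier must also check that $\id\in\Gamma$, and it computes $g_0(\vecinitial)$ directly.

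Finally, I invoke the known NP upper bound for integer reachability in pure VASS~\cite{DBLP:conf/rp/HaaseH14}. That result gives a certificate of size polynomial in the size of the pure instance, checkable in polynomial time. One concrete form of such a certificate is an Eulerian-style Parikh vector with binary counts satisfying the flow and connectivity conditions; the counts have polynomial bit-length by standard small-solution bounds for existential Presburger arithmetic. The pure instance has size $Poly(m\cdot\size{I})$, so composing the two polynomials gives the required $Poly$, independent of the instance.

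The main obstacle is only bookkeeping. One has to make sure the pure instance really has size polynomial in $m\cdot\size{I}$, and not in $\size{\Tset}$. That is why it matters that only pairs from $\Gamma$ are used and that the labels $g'(v)$ do not blow up in bit-length. Beyond that, the argument is a direct reuse of the construction and of the NP membership for pure VASS.
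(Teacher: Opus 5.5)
Your proposal is correct and is essentially the paper's proof. Like the paper, you restrict the pure-VASS simulation to states $Q\times\Gamma$, note that this VASS has size polynomial in $m\cdot|I|$, and guess it together with an NP certificate for integer reachability in pure VASS. You spell out more than the paper does: the soundness and completeness of the restriction, the bit-length of the labels $g'(v)$, starting from counters $g_0(\vecinitial)$, and checking that $g_0$ and the identity lie in $\Gamma$. All of these details are sound.
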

\begin{proof}
  Suppose $\initial=p(\vecinitial)$ and $\final=q(\vecfinal)$.  
  We follow the construction from~\cite{DBLP:journals/lmcs/BlondinHMR21}, more specifically Theorem 3.1 and a trick used in the proof of Theorem 5.3.
  Let $V'$, $((p,g),\vecinitial)$, and $((q,id),\vecfinal)$ be the pure VASS, and its two configurations constructed as described above. Observe that in $V'$ there is a run from $((p,g),\vecinitial)$ to $((q,id),\vecfinal)$ that visits only states
  in $Q\times \SummaryFunctions$. We define $V''$ as $V'$ restricted to 
  states in $Q\times \SummaryFunctions$, observe that its size is bounded by 
  $O(m\cdot \size{I})$. Now let $Y$ be a witness of the existence of 
  the integer run from $((p,g),\vecinitial)$ to $((q,id),\vecfinal)$
  in $V''$. Thus, in time proportional to 
  $m\cdot \size{I}$ we can guess $V''$ and $Y$ and verify them in 
  $\operatorname{Poly}(m\cdot \size{I})$ time, where $Poly$ is a fixed polynomial. \qedhere
\end{proof}

In this paper, we investigate for which transfer VASS we can show that if there is an integer run then there is one that induces the set of summary functions of size bounded by $m= Poly(k)$ where $k$ is the input transfer VASS size. 

The key contributions in this paper, proposed to characterize the well behaving subclass of transfer VASS, are definitions of transfer graph and its bad vertices.

\section{Transfer graphs and bad vertices}\label{sec:four}

Recall that for a directed graph $G=(V,E)$ and $X\subseteq V$ we use the following notation:
    \begin{itemize}
\item $G[X]$ is the subgraph of $G$ induced by $X$,
\item $\neighbourhoodClosed{X}$ is the closed out-neighbourhood of $X$, so $X\cup\{v\mid\exists u\in X\ (u,v)\in E\}$,
\item $\neighbourhoodStar{X}$ is the set of vertices reachable from $X$.
\end{itemize}
If $X$ consists of one element $u$, we use the notation $\neighbourhoodClosed{u}$ and $\neighbourhoodStar{u}$.

\begin{definition}[Bad vertices]\label{def:badVertices}
Let $G=(V,E)$ be a directed graph. We call a vertex $v\in V$ \emph{bad} if it satisfies at least one of the following conditions:
\begin{enumerate}
\item it lies on some simple cycle (no repeating vertices) of length at least $3$;
\item it has out-degree at least $2$;
\item there is a simple path from it to a bad vertex of type $1$;
\item there is a simple path from it to a different vertex which has at least $2$ out-neighbours not lying on that path.
\end{enumerate}
We denote by $\bad_i(G)$ the set of bad vertices of $G$ satisfying the $i$-th condition. Let $\bad(G)=\bad_1(G)\cup\bad_2(G)\cup\bad_3(G)\cup\bad_4(G)$.

We call a vertex $v\in V$ \emph{good} if it is not bad. We denote by $\good(G)=V\setminus\bad(G)$ the set of good vertices of $G$.
\end{definition}

\begin{definition}[Transfer graph]\label{def:counterGraphs}
  Let $\V=(d,Q,\Delta)$ be a single-transfer VASS. Observe that possible transfers associated with transitions in $\Delta$ form a graph on $[d]$, where an edge connects one counter to another if and only if there is a transition transferring from the first counter to the second one.

Formally, we denote by $G(\V)$, called the transfer graph of $\V$:
\[
G(\V)\coloneqq\left([d],\{(i,j)\mid\exists t\in\Delta\ \src(t)=i\wedge\tgt(t)=j\}\right).
\]
\end{definition}
 
Observe that $G(\V)$ depends only on the transfers of the transitions in $\Delta$.

\begin{lemma}\label{lem:simpleTransferEq}
For every instance of the reachability problem for a transfer VASS, 
there is a linear-time computable instance of the reachability 
problem for a single-transfer VASS. 
The number of counters increases by one but the number of bad vertices in the transfer graph of the single-transfer VASS can increase from $0$ to all the counters.
\end{lemma}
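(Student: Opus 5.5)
The plan is to simulate every general transfer function by a sequence of single transfers that routes through one fresh auxiliary counter $d+1$, while the rest of the VASS is left unchanged. Fix a transition $t=(p,f,v,q)$ with $f\in\Tset_d$ arbitrary. The lifted map $f\colon\Z^d\to\Z^d$ sends the content of counter $i$ into counter $f(i)$, adding up contents that land in the same target and resetting counters with empty preimage. I will realise $f$ as a composition of single-transfer functions on $[d+1]$ and insert fresh intermediate states between $p$ and $q$. All steps of the chain except the last carry update $0$, and the last carries $v$ extended by $0$ on counter $d+1$. The auxiliary counter is $0$ in the initial and final configurations and is kept at $0$ between simulated transitions. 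Reachability is therefore preserved: the inner states are fresh, so a run can enter a chain only at its start and must leave it at its end.

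The main step is the decomposition of $f$, which I regard as a functional graph on $[d]$. I first handle the counters that are not in the image of $f$ and must be reset. Then I realise the mapping itself, taking care that no content is overwritten before it has been moved.

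I will process counters in a careful order, viewing $f$ as a functional graph whose components are in-trees attached to cycles.
\begin{itemize}
\item Counters that are not fixed points and have no preimage can be moved first. For such a counter $i$ I apply $f_i^{f(i)}$. This adds $v[i]$ to $f(i)$ and zeroes $i$; note that the semantics of $f_i^j$ leaves $i$ empty.
\item Moving $i$ can make $f(i)$ preimage-free in the remaining map, so I repeat this leaf-peeling until only cycles remain. Fixed points are simply ignored.
\item A cycle $c_1\to c_2\to\dots\to c_k\to c_1$ with $k\ge 2$ is rotated using the spare counter: apply $f_{c_k}^{d+1}$, then $f_{c_{k-1}}^{c_k},\dots,f_{c_1}^{c_2}$, and finally $f_{d+1}^{c_1}$.
\end{itemize}

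I then need to check that the accumulated contents coincide with $f(v)$. Each original content travels along exactly one $f$-edge, and sums accumulate correctly because every counter is moved only after all of its preimages have arrived. The exception is a cycle counter, which is moved before its in-tree preimages are merged; this is harmless because those preimages are peeled beforehand. Resetting comes for free: a counter with empty $f$-preimage that is not the target of anything ends up $0$ after its own transfer. The remaining case is a counter $i$ with $f(i)\neq i$ and no preimage, which is left at $0$ as required.

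The number of steps is $O(d)$, each of size $O(\log d)$, which is linear in the $\Theta(d\log d)$ representation of $f$. The main obstacle I expect is exactly the ordering argument on the functional graph, together with stating the invariant that counter $d+1$ is $0$ outside the cycle rotations.

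For the remark on bad vertices, the new transfer graph contains edges $c\to d+1$ and $d+1\to c$ for every counter on a rotated cycle. A single original transition with a $3$-cycle $f$ (or a $2$-cycle together with $d+1$, which forms a simple cycle of length $3$) therefore makes counters bad of type $1$, even though the original general transfer VASS has no notion of bad vertices. To illustrate the extreme case, a permutation cycling all $d$ counters yields a transfer graph in which every vertex lies on a simple cycle of length at least $3$, so all counters are bad.
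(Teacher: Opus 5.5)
\textbf{Gap: the in-trees are processed in the wrong order.} Your overall plan matches the paper's: fresh intermediate states, one auxiliary counter that is $0$ between simulated transitions, and cycles broken by parking one value in that counter. Your cycle rotation is also correct in isolation. The problem is that you peel the in-trees leaves-first, which is backwards, so the chain does not compute $f$. A transfer $f_i^j$ moves the \emph{current} content of $i$. Once a preimage has been merged into a counter, moving that counter carries the preimage's value along a second $f$-edge.

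Two examples show the failure:
\begin{itemize}
\item Take $d=3$ and $f(1)=2$, $f(2)=3$, $f(3)=3$. We need $(x_1,x_2,x_3)\mapsto(0,x_1,x_2+x_3)$, but leaf-peeling applies $f_1^2$ and then $f_2^3$, which yields $(0,0,x_1+x_2+x_3)$.
\item At cycles the same thing happens. For $f(1)=2$, $f(2)=1$, $f(3)=1$ you peel $3$ into $1$ and then rotate, obtaining $(x_2,x_1+x_3,0)$ instead of $(x_2+x_3,x_1,0)$.
\end{itemize}
Your justification, ``every counter is moved only after all of its preimages have arrived'', is precisely the condition that must \emph{not} hold. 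It contradicts your own claim that each original content travels along exactly one $f$-edge, and your sentence about cycle counters contradicts the order you actually state.

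The correct invariant is the reverse: a counter that is not a fixed point may receive content only after its own original content has left. So first rotate every cycle of length at least $2$ through the spare counter. Only then transfer the tree vertices, in order of increasing distance from their cycle or fixed point, i.e.\ in the reverse of your peeling order (in the first example, $f_2^3$ before $f_1^2$). With this change the rest of your argument goes through: resets come for free, the spare counter returns to $0$, fresh states force complete chains, and $v$ sits on the last step. This is what the paper's example $c_1\to c_+$, $c_3\to c_1$, $c_2\to c_3$, $c_+\to c_2$ does implicitly: every counter is emptied before it is filled.

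\textbf{Smaller points.}
\begin{itemize}
\item For the ``from $0$'' part of the remark, a long cycle is the wrong illustration. Under the natural graph with edges $(i,f(i))$, a cycle of length at least $3$ is already bad before the translation. The paper's point is that a $2$-cycle has good vertices but becomes a simple $3$-cycle through the auxiliary counter. Hence a single transition whose $f$ is a product of disjoint transpositions covering all counters goes from no bad vertices to all counters bad.
\item Your rotation adds only the edges $c_k\to d+1$ and $d+1\to c_1$, not edges to and from every cycle counter.
\item Each intermediate transition carries a $(d+1)$-dimensional zero update. Under the paper's size measure a chain can therefore cost $\Theta(d^2)$ bits rather than $O(d\log d)$, unless updates are encoded sparsely. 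The paper's proof is equally silent on this.
\end{itemize}
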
    
The proof is moved to the appendix.

\subsection{Structure of the graph}
We begin with some basic observations regarding the structure of the induced subgraph of vertices reachable from a good vertex.

\begin{lemma}\label{lem:structure_of_the_graph}
Let $G=([d],E)$ be a graph and let $u\in\good(G)$ be a good vertex. Then $G[\neighbourhoodStar{u}]$ has the following structure:
\begin{itemize}
\item all its vertices lie on a single path, meaning we have $\neighbourhoodStar{u}=\{u=v_1,v_2,\ldots,v_n\}$ such that $(v_1,v_2)\in E,\ldots,(v_{n-1},v_n)\in E$,
\item the only other possible edges go to the previous vertex on that path (from $v_i$ to $v_{i-1}$).
\end{itemize}
An example of a graph with such properties can be seen in Fig.~\ref{fig:good-path}.
\end{lemma}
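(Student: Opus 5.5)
We first establish that every vertex reachable from a good vertex is itself good. Let $u\in\good(G)$ and $w\in\neighbourhoodStar{u}$, and fix a simple path $P$ from $u$ to $w$. Suppose $w$ were bad.

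If $w$ has type $1$, then $P$ witnesses that $u$ has type $3$. If $w$ has type $3$, concatenate $P$ with a simple path from $w$ to a type-$1$ vertex $x$. The concatenation need not be simple, but it contains a simple subpath from $u$ to $x$, so again $u$ has type $3$.

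If $w$ has type $2$, split on whether $w=u$. If $w=u$, then $u$ has type $2$ directly. Otherwise, suppose $w$ has out-degree at least $2$ but at most one out-neighbour off $P$. Then some out-neighbour of $w$ lies on $P$, so $w$ closes a cycle through part of $P$. If that cycle has length at least $3$, it is simple and $u$ reaches a type-$1$ vertex, so $u$ has type $3$. If it has length $2$, it goes back to the predecessor on $P$. We then use the shortest such $P$ and argue that the remaining out-neighbour is off $P$. If fewer than $2$ out-neighbours remain off $P$, we instead need to check the extra out-neighbour. This local case analysis is the fiddly part.

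Type $4$ is handled similarly by concatenation. In each case $u$ turns out to be bad, a contradiction. Thus every vertex of $\neighbourhoodStar{u}$ is good, and in particular every such vertex has out-degree at most $1$ and lies on no simple cycle of length $\ge 3$.

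Rather than rely on closure, we prove the structure directly using types 2 and 4. Build the sequence $v_1=u$, and let $v_{i+1}$ be the unique out-neighbour of $v_i$ not among $\{v_1,\dots,v_i\}$, stopping when none exists.

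Uniqueness comes from the conditions on $u$. By type $2$, $u$ has out-degree at most $1$. For $i\ge 2$, since $u$ is not of type $4$, the simple path $v_1\cdots v_i$ shows that $v_i$ has at most one out-neighbour off that path.

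Every out-neighbour of $v_i$ on the path is some $v_j$ with $j<i$. If $j<i-1$, then $v_j,\dots,v_i$ is a simple cycle of length $\ge 3$, so $v_j$ has type $1$ and $u$ has type $3$ (or type $1$ if $j=1$), a contradiction. Hence the only back edges go from $v_i$ to $v_{i-1}$.

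Any self-loops are harmless. Because the outside neighbour is unique at each step, every vertex reachable from $u$ is some $v_i$. This gives both items of the statement.

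The main obstacle is checking that type $4$ applies along exactly the path $v_1\cdots v_i$, including the edge case $v_i=u$, where type $2$ is used instead. We also need to confirm that "at least $2$ out-neighbours not on the path" is the negation needed to obtain uniqueness of $v_{i+1}$.
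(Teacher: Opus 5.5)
Your direct argument (the greedy construction of $v_1,v_2,\ldots$) is correct and is essentially the paper's proof. The paper takes an out-spanning tree of $G[N^*(u)]$. It uses condition~4 to force the tree to be a single branch and to exclude edges $v_i\to v_j$ with $j>i+1$, and conditions~1/3 to exclude $j<i-1$. Your greedy path handles the first two points at once. You are also slightly more careful than the paper in using condition~2 at $i=1$, where condition~4 does not apply because it requires a vertex different from $u$.

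Your opening paragraph should be deleted, because its claim is false. A vertex reachable from a good vertex need not be good, and need not have out-degree at most $1$. The paper's own example in Fig.~\ref{fig:good-path} is a counterexample: $u=v_1$ is good, yet $v_2$ has out-neighbours $v_1$ and $v_3$, so it is bad of type~2. This is exactly the type-2 case you left as ``fiddly''. There, a length-2 back edge to the predecessor on $P$ leaves only one out-neighbour off $P$, and no contradiction arises. Only the weaker statement that no vertex of $N^*(u)$ lies on a simple cycle of length $\ge 3$ survives. Since your direct argument uses only that $u$ itself satisfies none of conditions 1--4, nothing downstream depends on the false claim.

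Finally, self-loops are not quite harmless for the literal statement. A loop at some $v_i$ with $i\ge 2$ keeps $u$ good, but it is not an edge to $v_{i-1}$, so it violates the second bullet. The lemma should be read for loopless graphs, as the paper does implicitly. This is automatic for transfer graphs, since $f_i^i$ is the identity and contributes no edge.
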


\begin{proof}
Let $H_u$ be an out-spanning tree of the graph $G[\neighbourhoodStar{u}]$ rooted in $u$.
It is trivial that it exists.
Observe that
$H_u$ must be a single branch, otherwise $u$ would be a bad vertex because it would violate the condition $4$ of definition~\ref{def:badVertices}.

Let vertices in $H_u$ be connected in the following way:
$u=v_1\to v_{2}\to v_3\to\cdots\to v_{n}$, where $n=\size{\neighbourhoodStar{u}}$.

Suppose $v_i\in \neighbourhoodStar{u}$ and $(v_i,v_j)$ is an edge in $G[\neighbourhoodStar{u}]$. We want to prove that $j=i-1$ or $j=i+1$. We prove by a contradiction. If:
\begin{itemize}
    \item $j<i-1$, then in 
$G[\neighbourhoodStar{u}]$ we have a cycle of length greater than $2$ which violates
the condition $3$ of the definition~\ref{def:badVertices}, and thus $u$ is bad;
\item $j>i+1$, then the condition $4$ of the definition~\ref{def:badVertices} is violated, and again $u$ is bad.  \qedhere
\end{itemize}
\end{proof}  

\begin{figure}[t]
\centering
\begin{tikzpicture}[>=Latex, node distance=6mm, every node/.style={draw,circle,minimum size=13mm,inner sep=0pt}]
\node (v1) {$v_1=u$};
\node[right=of v1] (v2) {$v_2$};
\node[right=of v2] (v3) {$v_3$};
\node[right=of v3] (v4) {$v_4$};
\draw[->] (v1)--(v2); \draw[->] (v2)--(v3); \draw[->] (v3)--(v4);
\draw[->] (v2) to[bend left=40] (v1);
\draw[->] (v4) to[bend left=40] (v3);
\end{tikzpicture}
\caption{Example $G[\neighbourhoodStar{u}]$ for a good vertex $u$.}
\label{fig:good-path}
\end{figure}

\section{Upper bound on integer reachability parametrized by transfer graph} \label{sec:upperbound}

The goal of this section is to prove the following theorem, which allows us to define the number $m$ in the statement of~\Cref{lem:complexityFromBound},
as a function of the numbers of good and bad vertices.

\begin{theorem}\label{thm:main:bound}
Let $\V$ be a single-transfer VASS, let $\initial, \final$ be its two 
configurations, and $\initial\xrightarrow{t_1,\ldots t_n}\final$ be an 
integer run. Let $g_0, g_1,\ldots g_n$ be a sequence of the 
summary functions 
for the run $t_1,\ldots,t_n$ and 
$\SummaryFunctions =\set{g_0,\ldots,g_n}$. Then
\[
\size{\SummaryFunctions}
\leq\left(1-|\good(G)|+
\sum_{v\in\good(G)}|\neighbourhoodStar{v}|\right)\cdot\prod_
{v\in\bad(G)}\size{\neighbourhoodStar{v}
\cap \neighbourhoodClosed{\bad(G)}}.
\]
\end{theorem}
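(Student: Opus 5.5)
The plan is to split every summary function $g\in\SummaryFunctions$ into a \emph{bad part} and a \emph{good part}, and to bound the number of possibilities for each part separately. The product in the statement comes from the bad part, and the factor $1-|\good(G)|+\sum_{v\in\good(G)}|\neighbourhoodStar{v}|$ from the good part.

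The basic observation is that a summary function $g_i=f(t_n)\circ\cdots\circ f(t_{i+1})$ sends a counter $v$ to the counter in which its content ends up. Hence $g_i(v)\in\neighbourhoodStar{v}$. Moreover, from~\Cref{eq:composition}, if $t_{i+1}$ transfers $a$ to $b$ then $g_i(a)=g_{i+1}(b)$, and all other values are copied from $g_{i+1}$.

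\textbf{Bad part.} For a bad vertex $v$ and an index $i$, follow the trajectory of $v$'s content in the run from time $i$ on. Define $h_i(v)$ as follows.
\begin{itemize}
\item If the content never leaves $\bad(G)$, then $h_i(v)=g_i(v)$.
\item Otherwise, $h_i(v)$ is the first good vertex $w$ that the content enters, at some time $j$.
\end{itemize}
In both cases $h_i(v)\in\neighbourhoodStar{v}\cap\neighbourhoodClosed{\bad(G)}$, because the first good vertex entered is an out-neighbour of a bad vertex. Good vertices only reach good vertices, since a vertex that reaches a bad one is bad by conditions 3 and 4. So once the content sits in $w$, it stays in the good region. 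I then need to show that its final position is $g_i(w)$, i.e.\ the content of $w$ from time $j$ on travels exactly like the content of $w$ from time $i$. This holds if $w$ is not a transfer source between $i$ and $j$. I expect to verify this through the choice of the "first entry" time, possibly by redefining $h_i(v)$ as the last good vertex at which the contents merge. With that, $g_i$ is determined by the pair $(h_i,\ g_i|_{\good(G)})$, and there are at most $\prod_{v\in\bad(G)}\size{\neighbourhoodStar{v}\cap\neighbourhoodClosed{\bad(G)}}$ choices for $h_i$.

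\textbf{Good part.} It remains to show that $\{g_i|_{\good(G)}\mid 0\le i\le n\}$ has at most $1+\sum_{u\in\good(G)}(|\neighbourhoodStar{u}|-1)$ elements. By~\Cref{lem:structure_of_the_graph}, each $\neighbourhoodStar{u}$ is a path $u=v_1,\dots,v_k$ whose only extra edges go back by one step. So transfers among good vertices only shift content one step along a path. I would scan the run backwards, from $i=n$, where $g_n|_{\good(G)}=\id$, down to $i=0$. The value $g_i|_{\good(G)}$ changes only at indices where $t_{i+1}$ transfers from a good vertex $a$ to $b$, and then the only update is $g_i(a):=g_{i+1}(b)$. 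The aim is a charging argument: every change of the restricted function should be charged to a pair $(u,x)$ with $x\in\neighbourhoodStar{u}\setminus\{u\}$, where $u$ is the updated vertex and $x$ is its new value, and each pair should be charged at most once.

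\textbf{Main obstacle.} The hard step is exactly this "at most once". Back edges $v_{i}\to v_{i-1}$ allow alternating transfers, so $g_i(u)$ is not obviously monotone along the path. I would first try to prove that along the backward scan, after $a$ is overwritten with $g_{i+1}(b)$, the new value is always farther along $a$'s path than any previous one. Equivalently, after a transfer $a\to b$ the contents of $a$ and $b$ are merged forever, and merges along a path only move "forward". If monotonicity fails, the fallback is to count distinct functions rather than changes. For this I would show that the set of restricted functions forms a chain under a merge order, namely that the partitions of $\good(G)$ induced by $g_i$ coarsen as $i$ decreases. The length of such a chain is bounded by the sum of path lengths.

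Combining the bad and good parts gives the stated bound on $\size{\SummaryFunctions}$.
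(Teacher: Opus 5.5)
The split into a factor for the restriction to $\good(G)$ times a product over $\bad(G)$ matches the paper. However, the bad part rests on a false step.

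\textbf{The bad part fails.} With $h_i(v)$ the first good vertex $w$ entered, at time $j$, you only get $g_i(v)=g_j(w)$. Equality $g_j(w)=g_i(w)$ can fail even between indices whose good restrictions coincide.

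Here is a counterexample. Let the transfer graph have only the edges $v\to w_1$, $v\to z$, $w_1\to w_2$, so $v$ is bad and $w_1,w_2,z$ are good. Take the run $t_1\colon v\to w_1$, $t_2\colon w_1\to w_2$, $t_3\colon v\to w_1$.
\begin{itemize}
\item $g_0|_{\good(G)}=g_1|_{\good(G)}$: both map $w_1\mapsto w_2$ and fix $w_2,z$.
\item $h_0(v)=h_1(v)=w_1$.
\item But $g_0(v)=w_2\neq w_1=g_1(v)$.
\end{itemize}

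Redefining $h_i$, for instance via a ``last merge'', cannot help as long as you decode using $g_i|_{\good(G)}$ alone, i.e.\ a bad $h$ stands for itself and a good $h$ stands for $g_i(h)$. Under that rule, $h\in\neighbourhoodStar{v}\cap\neighbourhoodClosed{\bad(G)}=\{v,w_1,z\}$ can only represent $\{v,w_2,z\}$, which misses $g_1(v)=w_1$.

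\textbf{What is missing: one fixed reference function per block.} Split $0,\dots,n$ into maximal blocks $(i_j,i_{j+1}]$ on which $g_l|_{\good(G)}$ is constant. Their number is one plus the number of changes, which is what your charging counts. Then prove, by downward induction on $l$ inside a block, that $g_l(v)\in g_{i_{j+1}}\bigl(\neighbourhoodStar{v}\cap\neighbourhoodClosed{\bad(G)}\bigr)$ for every $v\in\neighbourhoodClosed{\bad(G)}$.
\begin{itemize}
\item For good $v$ this is just constancy on the block.
\item For bad $v$ write $g_l(v)=g_{l+1}(f_{l+1}(v))$, where $f_{l+1}(v)\in\neighbourhoodClosed{\bad(G)}$ and $\neighbourhoodStar{f_{l+1}(v)}\subseteq\neighbourhoodStar{v}$, and apply the induction hypothesis.
\end{itemize}
In trajectory terms: stop following $v$'s content at the earlier of its entry into a good vertex and the block end $i_{j+1}$, and decode with $g_{i_{j+1}}$ rather than $g_i$. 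In the example the block is $\{0,1\}$, with $g_0(v)=g_1(w_1)$ and $g_1(v)=g_1(v)$. This bounds the number of distinct functions per block by the product.

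\textbf{The good part.} Your first plan, monotonicity, is correct, but it needs a lemma you do not state. Every $g_i$ is order-preserving on the path $\neighbourhoodStar{u}=v_1,\dots,v_k$ of \Cref{lem:structure_of_the_graph}. This follows by downward induction on $i$: a single transfer moves one vertex one step forward or backward, so $f_{i+1}$ is monotone there. Since $u=v_1$ can only be transferred to $v_2$, you get $g_i(u)\in\{g_{i+1}(u),g_{i+1}(v_2)\}$ with $g_{i+1}(v_2)\succeq g_{i+1}(u)$. So $g_i(u)$ only moves forward and changes at most $|\neighbourhoodStar{u}|-1$ times; back edges are harmless.

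Your fallback is false. For a good path $a\to b\to c$ and the run $b\to c$, $a\to b$, the partition induced by $g_1$ is $\{a,b\},\{c\}$, while the one induced by $g_0$ is $\{a\},\{b,c\}$, so partitions do not coarsen.

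Finally, ``good vertices only reach good vertices'' is false: in \Cref{fig:good-path}, $u=v_1$ is good while $v_2$ has out-degree $2$. Your argument does not actually need this claim, but it should be dropped.
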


Note that the bound above is independent from $n$.

From~\Cref{thm:main:bound} and \Cref{lem:complexityFromBound} we get the following corollary:

\begin{corollary}\label{col:main_complexity_upperbound}
    For any fixed $\bad\in \Nat$ let $\Class$ be a class of transfer graphs with at most $\bad$ bad vertices.
    The integer reachability problem is in NP for single-transfer VASSes with transfer graphs in $\Class$.
\end{corollary}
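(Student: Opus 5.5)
The plan is to combine \Cref{thm:main:bound} with \Cref{lem:complexityFromBound}. The only new work is checking that the quantity bounding $\size{\SummaryFunctions}$ in \Cref{thm:main:bound} is polynomial in $\size{I}$ once the number of bad vertices is bounded by the constant $\bad$.

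\emph{Step 1: bound the summary set.} Fix an instance $I=(\V,\initial,\final)$ with $G=G(\V)\in\Class$ and $d\le\size{\V}\le\size{I}$. Suppose $\initial\to^*\final$, and take any witnessing run $t_1\cdots t_n$. \Cref{thm:main:bound} holds for every run, so it applies to this one. We bound the two factors of its right-hand side.
\begin{itemize}
\item Each $\neighbourhoodStar{v}$ is a subset of $[d]$, so $\size{\neighbourhoodStar{v}}\le d$. Since $\size{\good(G)}\le d$, the first factor is at most $1+d\cdot d=1+d^2$.
\item For each bad vertex $v$ we have $\size{\neighbourhoodStar{v}\cap\neighbourhoodClosed{\bad(G)}}\le d$. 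There are at most $\bad$ bad vertices, so the product is at most $d^{\bad}$.
\end{itemize}
Hence the run satisfies
\[
\size{\SummaryFunctions}\le m:=(1+d^2)\,d^{\bad}\le 2\size{I}^{\bad+2},
\]
which is a polynomial in $\size{I}$ of fixed degree, because $\bad$ is a constant of the class $\Class$.

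\emph{Step 2: certificate and verification.} Apply \Cref{lem:complexityFromBound} with this $m$. It yields a certificate of size at most $Poly(m\cdot\size{I})\le Poly(2\size{I}^{\bad+3})$. The certificate consists of the restricted pure VASS $V''$ on $Q\times\SummaryFunctions$ together with the witness $Y$ for integer reachability in $V''$. It is verifiable in time polynomial in $m$ and $\size{I}$, hence polynomial in $\size{I}$.

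\emph{Step 3: the NP procedure.} The algorithm guesses a certificate of this polynomial size and runs the verifier. It accepts exactly the positive instances, for two reasons. First, completeness holds because every positive instance has a run, and by Step 1 that run has few summaries. Second, soundness holds because a verified $V''$ and $Y$ yield a run in $V'$, which corresponds to a run of the transfer VASS by the equivalence recalled before \Cref{lem:complexityFromBound}. Membership of $G(\V)$ in $\Class$ is either a promise of the input class or can itself be checked in polynomial time. The latter holds because the type-$1$ and type-$2$ conditions can be decided directly, and the path conditions of \Cref{def:badVertices} only need to be checked for the structure they enforce; alternatively, the bound in Step 1 already holds whenever the instance is in $\Class$.

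I expect no serious obstacle, since the heavy lifting is in \Cref{thm:main:bound}. The one point needing care is that the exponent of the polynomial bound must depend only on $\bad$, which is fixed, and not on the input. This is why the corollary gives NP for each fixed $\bad$ and not uniformly over all instances.
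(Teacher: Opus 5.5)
Your argument is correct and matches the paper's: you bound the right-hand side of \Cref{thm:main:bound} by a polynomial in $d$ whose degree depends only on $\bad$ (your $(1+d^2)d^{\bad}$, versus the paper's slightly tighter $d^{\bad+2}$, which uses $1+\size{\good(G)}(d-1)\le d^2$), and then apply \Cref{lem:complexityFromBound}. Your aside on deciding membership in $\Class$ is not needed, since the paper treats the class as a restriction on instances, and as written it is vague; if you want to keep it, justify it by noting that $v$ is good exactly when $G[\neighbourhoodStar{v}]$ has the path-with-back-edges shape of \Cref{lem:structure_of_the_graph}, which can be tested in polynomial time.
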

\begin{proof}
    Let $(\V, \initial, \final)$ be an instance of the problem, let $d$ be the number of counters of $V$.
    From ~\Cref{thm:main:bound} we can bound $\size{\Gamma}$ by $m = d^{2+B}$, which is polynomial in $\size{V}$.
    By ~\Cref{lem:complexityFromBound} there exists a reachability certificate of size polynomial in $m \cdot \size{V}$.
    \qedhere
\end{proof}

For ~\Cref{thm:main:bound} the proof strategy is to first consider what happens for good nodes, and then use it to prove the theorem. This is captured by~\Cref{lem:all_good_vertexes}; the~\Cref{lem:single_good_vertex} is an auxiliary lemma.

\begin{lemma}\label{lem:single_good_vertex}
Let $\V$ be a single-transfer VASS and let $G = G(\V)$. Let $t_1\ldots t_n$ be a run and $f_1,\ldots, f_n$ be the induced sequence of simple transfers ($f_i=f(t_i)$), and $g_0,\ldots g_n$ be a summary sequence of $t_1,\ldots t_n$.

Then, for any $v\in\good(G)$, the sequence $g_0(v), g_1(v),\ldots,g_{n-1}(v),g_{n}(v)=\id$ 
changes its value at most $|\neighbourhoodStar{v}|-1$ times.
\end{lemma}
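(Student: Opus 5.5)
We plan to track where the content of counter $v$ ends up, which reduces the statement to a monotonicity argument on the path given by \Cref{lem:structure_of_the_graph}. Write $g_i = f_n\circ f_{n-1}\circ\cdots\circ f_{i+1}$, so that $g_i(v)$ is the final position of a ``token'' placed on counter $v$ right after step $i$ and then moved by $f_{i+1},\dots,f_n$.

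The first step is to fix the geometry. Since $v$ is good, \Cref{lem:structure_of_the_graph} gives $\neighbourhoodStar{v}=\{v=v_1,\dots,v_m\}$ with $m=|\neighbourhoodStar{v}|$. The only edges of $G[\neighbourhoodStar{v}]$ are $(v_k,v_{k+1})$ and possibly $(v_{k+1},v_k)$. A token starting at $v$ never leaves $\neighbourhoodStar{v}$, because every transfer follows an edge of $G=G(\V)$. So we can assign to its position at every time an index $k\in[m]$. Each single-transfer $f_j=f_a^b$ moves only the token sitting on $a$, and only along an edge. Hence a single step changes the index of at most one position, and by at most $\pm 1$.

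The second step is a non-crossing claim. Take two tokens in $\neighbourhoodStar{v}$ at the same time, with indices $k\le k'$.
\begin{itemize}
\item If they ever coincide, they stay together forever, since $f_j$ is a function.
\item If they do not coincide, then after one step their indices are still ordered weakly in the same way. Only one of them moves, by one, so it can at most land on the other but cannot jump past it.
\end{itemize}
By induction over the remaining steps, the order $\le$ of indices is preserved up to time $n$.

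The third step compares consecutive summaries. The value $g_{i-1}(v)$ is the final position of a token that sits on $v$ at time $i-1$. At time $i$ this token is at $f_i(v)$. Since $v=v_1$ has out-degree at most $1$, this position is $v_1$ or $v_2$, so its index is $\ge 1$. The token defining $g_i(v)$ sits at $v_1$ at time $i$, which has index exactly $1$. By non-crossing, $\mathrm{idx}(g_{i-1}(v))\ge \mathrm{idx}(g_i(v))$. Therefore the sequence $\mathrm{idx}(g_0(v)),\dots,\mathrm{idx}(g_n(v))$ is monotone non-increasing with values in $[m]$, so it changes value at most $m-1$ times. Because the indexing is a bijection, the sequence $g_i(v)$ changes at most $|\neighbourhoodStar{v}|-1$ times.

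The main obstacle is the non-crossing step. It relies on the fact that every edge inside $\neighbourhoodStar{v}$ joins vertices with adjacent indices, which is exactly what \Cref{lem:structure_of_the_graph} provides. It also relies on single transfers moving only one counter per step.
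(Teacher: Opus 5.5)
Your proof is correct and is essentially the paper's argument. Your non-crossing step is the paper's first claim: each $g_i$ is order-preserving on the path $v_1\preceq\cdots\preceq v_k$, because a single transfer moves at most one vertex, by one index. Your third step is its second claim: $f_i(v)\in\{v_1,v_2\}$ gives $g_i(v)\preceq g_{i-1}(v)$. The only difference is that you phrase the monotonicity as forward token dynamics rather than as a downward induction on $i$, which is purely presentational.
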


\begin{lemma}\label{lem:all_good_vertexes}
Under the same assumptions as in~\Cref{lem:single_good_vertex} the sequence
    $
g_{0}|_{\good(G)},\ldots,g_{n}|_{\good(G)}
$
changes its value at most
\[
\sum_{v\in\good(G)}|\neighbourhoodStar{v}|-|\good(G)|\text{ times.}
\]
\end{lemma}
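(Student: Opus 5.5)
Lemma~\ref{lem:all_good_vertexes} follows from Lemma~\ref{lem:single_good_vertex} by a counting argument.

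\textbf{Key observation.} The restriction $g_i|_{\good(G)}$ is the tuple of values $(g_i(v))_{v\in\good(G)}$. Hence if $g_i|_{\good(G)}\neq g_{i+1}|_{\good(G)}$, then there is at least one good vertex $v$ with $g_i(v)\neq g_{i+1}(v)$. So every change of the restricted sequence at index $i$ is witnessed by a change of at least one coordinate sequence $g_0(v),\ldots,g_n(v)$ at the same index $i$.

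\textbf{Counting.} Map each index $i$ at which the restricted sequence changes to the set of good vertices $v$ whose coordinate changes there; this set is nonempty. The number of such indices is therefore at most the total number of pairs $(i,v)$ with $v\in\good(G)$ and $g_i(v)\neq g_{i+1}(v)$, that is,
\[
\sum_{v\in\good(G)} \#\{\,i : g_i(v)\neq g_{i+1}(v)\,\}.
\]

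\textbf{Applying the auxiliary lemma.} By Lemma~\ref{lem:single_good_vertex}, each summand is at most $|\neighbourhoodStar{v}|-1$. Summing over the good vertices gives
\[
\sum_{v\in\good(G)}\bigl(|\neighbourhoodStar{v}|-1\bigr)=\sum_{v\in\good(G)}|\neighbourhoodStar{v}|-|\good(G)|,
\]
which is the claimed bound.

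\textbf{Main obstacle.} The only real content is Lemma~\ref{lem:single_good_vertex}, which we may assume here. The one point to check is that it is stated for the same sequence $g_0,\ldots,g_n$ and for every good $v$ simultaneously. That holds, because the hypotheses of the two lemmas are identical. No interaction between different good vertices needs to be controlled, since the union bound above suffices.
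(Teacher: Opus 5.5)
Your proof is correct and follows the paper's argument. Each change of the restricted sequence is witnessed by a change of $g_i(v)$ for some good $v$, and summing the per-vertex bounds $|\neighbourhoodStar{v}|-1$ from Lemma~\ref{lem:single_good_vertex} gives the claim; the paper phrases this same union bound as a pigeonhole contradiction.
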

\begin{proof}[Proof of~\Cref{lem:all_good_vertexes}]
Each change of the restricted sequence 
$g_{0}|_{\good(G)},g_{1}|_{\good(G)},\ldots 
g_{n}|_{\good(G)}$ corresponds to a change, 
in at least one of the sequences 
$g_{0}(v),g_{1}(v),\ldots g_n(v)$ for some 
$v\in\good(G)$.

Thus, by contradiction, if there were more than
 $\sum_{v\in\good(G)}|\neighbourhoodStar{v}|-
 |\good(G)|$ changes in the restricted sequence, 
 by the pigeonhole principle, there would be 
 at least $|\neighbourhoodStar{v}|$ 
 changes in the sequence for some $v$. 
 But by~\Cref{lem:single_good_vertex} 
 there are at most $|\neighbourhoodStar{v}|-1$ 
 changes. \qedhere
\end{proof}    

\begin{proof}[Proof of \Cref{lem:single_good_vertex}]
Let $v_1=v,v_2,\ldots,v_k$ 
be the path described in~\Cref{lem:structure_of_the_graph}, such that $\neighbourhoodStar{v}=\{v_1,\ldots,v_k\}$ and, for any $j\in [k-1]$, the only edges are from $v_j$ to $v_{j+1}$ and possibly backwards, from $v_{j+1}$ to $v_j$.

We introduce a linear order on the set $\set{v_1,\ldots v_k}$, namely: 
\[
v_1\preceq v_2\preceq v_3\preceq \cdots\preceq v_k.
\]

First we prove that all $g_i$ satisfy the following monotonicity property:
\begin{claim}\label{clm:monotonicity_one}
If $v_j\preceq v_{j'}$ then $g_i(v_j)\preceq g_i(v_{j'})$.
\end{claim}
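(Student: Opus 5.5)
The plan is a backward induction on $i$, from $i=n$ down to $i=0$, using the recursion $g_i = g_{i+1}\circ f_{i+1}$ from~\Cref{eq:composition}. I read $\preceq$ as the non-strict linear order on $\neighbourhoodStar{v}=\{v_1,\ldots,v_k\}$ given by position on the path. Before comparing values under $\preceq$, I must check that every $g_i$ maps $\neighbourhoodStar{v}$ into $\neighbourhoodStar{v}$, so that the claim makes sense. I will prove this alongside the monotonicity, as part of the same induction hypothesis.

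First I will analyse a single transfer $f = f_a^b$ occurring in the run, restricted to $\neighbourhoodStar{v}$. There are three cases.
\begin{itemize}[nosep,leftmargin=1.3em]
\item If $a\notin\neighbourhoodStar{v}$ or $a=b$, then $f$ is the identity on $\neighbourhoodStar{v}$.
\item If $a=v_j$ with $a\neq b$, then $(a,b)$ is an edge of $G$, so $b\in\neighbourhoodStar{v}$. By~\Cref{lem:structure_of_the_graph}, $b\in\{v_{j-1},v_{j+1}\}$.
\item Hence $f$ moves only $v_j$, sending it to an adjacent path vertex, and fixes all other path vertices. In particular $f$ maps $\neighbourhoodStar{v}$ into itself.
\end{itemize}

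Next I will check directly that such a map is monotone. Take $b=v_{j+1}$; the case $b=v_{j-1}$ is symmetric. For $v_l\preceq v_{l'}$:
\begin{itemize}[nosep,leftmargin=1.3em]
\item if neither index equals $j$, nothing changes;
\item if $l=j$, then $l'\geq j$, and either $l'=j$ (both images equal $v_{j+1}$) or $l'\geq j+1$, so $v_{j+1}\preceq v_{l'}$;
\item if $l'=j$, then $l\le j$, so $v_l\preceq v_{j+1}$.
\end{itemize}
Thus $f|_{\neighbourhoodStar{v}}$ is a monotone self-map. This uses only the weak order, which is exactly why equality in the claim must be allowed.

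The induction then closes as follows. The base case is $g_n=\id$, which is trivially monotone and preserves $\neighbourhoodStar{v}$. For the step, $g_i|_{\neighbourhoodStar{v}} = g_{i+1}|_{\neighbourhoodStar{v}}\circ f_{i+1}|_{\neighbourhoodStar{v}}$. This is well defined because $f_{i+1}$ maps $\neighbourhoodStar{v}$ into itself. It is a composition of two monotone self-maps of a linear order, hence monotone, and it again maps $\neighbourhoodStar{v}$ into itself.

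The only real content is the case analysis of a single transfer. That analysis relies on the path structure from~\Cref{lem:structure_of_the_graph}, which guarantees that out-edges from path vertices go only to the adjacent path vertices.
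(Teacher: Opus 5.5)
Your proof is correct and follows essentially the same route as the paper. Both use downward induction on $i$ via $g_i = g_{i+1}\circ f_{i+1}$, and both rest on a case analysis showing that a single transfer, restricted to the path $\neighbourhoodStar{v}$, is monotone by \Cref{lem:structure_of_the_graph}; your explicit check that every $g_i$ maps $\neighbourhoodStar{v}$ into itself is a small point the paper leaves implicit.
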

\begin{proof}
    We prove this claim via downward induction over $i$. For $i=n$ the summary function $g_n$ is the identity, so the claim holds.

    Suppose it holds for $i$; we prove it for $i-1$. $g_{i-1}=g_i \circ f_i$.
    Take $v_j\prec v_{j'}$ (We do not consider the trivial case when $v_j= v_{j'}$). Observe that since $f_i$ is a single transfer we have that $f_i(v_j)\preceq f_i(v_{j'})$; this is because of~\Cref{lem:structure_of_the_graph} and the following case analysis (here we use the single-transfer property):
    \begin{itemize}
        \item $f_i(v_j) \in \set{v_{j-1},v_j}$ and $f_i(v_{j'})\in \set{v_{j'},v_{j'+1}}$, then it is trivial.
        \item $f_i(v_j) = v_{j+1}$ and $f_i(v_{j'})= v_{j'}$, then as $j+1\leq j'$ the observation holds.
        \item $f_i(v_j) = v_{j}$ and $f_i(v_{j'})= v_{j'-1}$, then as $j\leq j'-1$ the observation holds.
    \end{itemize}

    Now, we use the inductive assumption for $g_i(v_j)$ and $g_i(v_{j'})$ and get $g_{i-1}(v_j)=g_{i}(f_i(v_j))\preceq g_{i}(f_i(v_{j'}))=
    g_{i-1}(v_{j'})$, as required. \qedhere
\end{proof}

Using the claim, we show that:
\begin{claim}\label{clm:monotonicity_two}
For any $i\in [n-1]$ it holds $g_{i+1}(v)\preceq g_{i}(v)$.
\end{claim}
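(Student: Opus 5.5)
The plan is to derive the claim directly from \Cref{clm:monotonicity_one}, applied at index $i+1$, together with the observation that $v=v_1$ is the $\preceq$-minimal element of $\neighbourhoodStar{v}$. By \Cref{eq:composition} we have $g_i = g_{i+1}\circ f_{i+1}$, so $g_i(v) = g_{i+1}(f_{i+1}(v))$. It therefore suffices to show $v \preceq f_{i+1}(v)$ and then apply monotonicity of $g_{i+1}$.

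First I will check that every value in question lies in $\neighbourhoodStar{v}=\{v_1,\ldots,v_k\}$, so that the order $\preceq$ is defined on it. A single transfer $f_{i+1}$ either fixes a counter or moves it along an edge of $G$. Hence $f_{i+1}$ maps $\neighbourhoodStar{v}$ into itself, since out-neighbours of reachable vertices are reachable. Every $g_i$ is a composition of such maps, so by induction it also maps $\neighbourhoodStar{v}$ into itself. This is the same fact already used implicitly in \Cref{clm:monotonicity_one}.

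Next, $f_{i+1}(v)\in\{v,\,w\}$ for some out-neighbour $w$ of $v$ in $G$. By \Cref{lem:structure_of_the_graph}, $v=v_1$ has no backward edge, because there is no $v_0$. Its only possible out-neighbour inside $\neighbourhoodStar{v}$ is therefore $v_2$, and all its out-neighbours lie in $\neighbourhoodStar{v}$. So $f_{i+1}(v)\in\{v_1,v_2\}$, and in either case $v_1\preceq f_{i+1}(v_1)$. Applying \Cref{clm:monotonicity_one} to $g_{i+1}$ with $v_1\preceq f_{i+1}(v_1)$ yields
\[
g_{i+1}(v)=g_{i+1}(v_1)\preceq g_{i+1}(f_{i+1}(v_1)) = g_i(v),
\]
which is the claim.

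The only delicate point is the middle step: making sure that $v$ has no out-neighbour outside the path and no edge going ``down'' the order. This is exactly the content of \Cref{lem:structure_of_the_graph} for the first vertex $v_1=u$, so I expect no real obstacle. The claim then feeds the counting in \Cref{lem:single_good_vertex}: the sequence $g_0(v),\ldots,g_n(v)$ is $\preceq$-monotone in a chain of length $k$, so it changes at most $k-1$ times.
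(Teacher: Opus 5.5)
Your proof is correct and matches the paper's argument. Both note that $f_{i+1}(v)\in\{v_1,v_2\}$ by \Cref{lem:structure_of_the_graph}, and then apply \Cref{clm:monotonicity_one} to $g_{i+1}$ via $g_i=g_{i+1}\circ f_{i+1}$. The paper phrases this as a downward induction on $i$ but never uses the inductive hypothesis, so your direct version, with the added check that every $g_i$ maps $\neighbourhoodStar{v}$ into itself, is the same proof stated a little more carefully.
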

Notice that this claim is specifically about $v$, it might not hold for other $v_j$.
\begin{proof}
We prove this via downward induction over $i+1$.
For $i+1=n$ we have that $g_n(v)=v$ which is the smallest element in the order, so the induction hypothesis holds. 

We assume that the hypothesis holds for $i+1$, and we prove it for $i$.
Note that $g_{i}=g_{i+1} \circ f_{i+1}$. 
According to~\Cref{lem:structure_of_the_graph}, $f_{i+1}(v)=v=v_1$ or $f_{i+1}(v)=v_2$. We consider these two cases:
\begin{itemize}
    \item If $f_{i+1}(v)=v=v_1$ then $g_{i}(v)=g_{i+1}(f_{i+1}(v))=g_{i+1}(v)$ so,
    as required, $g_{i}(v)\succeq g_{i+1}(v)$.
    \item If $f_{i+1}(v)=v_2$ then $g_{i}(v)=g_{i+1}(f_{i+1}(v))=g_{i+1}(v_2)$.
    But now we use~\Cref{clm:monotonicity_one} for $g_{i+1}$ and $v\prec v_2$ and get
    $g_{i}(v)=g_{i+1}(v_2)\succeq g_{i+1}(v)$ as required.  \qedhere
\end{itemize}
\end{proof}

As the sequence $g_{i}(v)$ is monotone (according to~\Cref{clm:monotonicity_two}), it can change its value at most $|\neighbourhoodStar{v}|-1$ times. \qedhere
\end{proof}

\begin{proof}[Proof of~\Cref{thm:main:bound}]
Let $0 \le i_1<\ldots<i_m < n$ be the indices at which the restriction of the sequence
$g_{0},\dots,g_{n}$ to $\good(G)$ changes, namely all indices such that
$g_{i_j}|_{\good(G)}\neq g_{i_j+1}|_{\good(G)}$.
Observe that with this definition the restriction is constant on every interval
$i_j<l\le i_{j+1}$ (assuming $i_0 = -1$ and $i_{m+1} = n$).

We claim that:
\begin{claim}
For every $v\in\neighbourhoodClosed{\bad(G)}$, every $0\le j \le m$, and every
$i_j<l\le i_{j+1}$, we have
\[
g_{l}(v)\in g_{i_{j+1}}\bigl(\neighbourhoodStar{v}\cap \neighbourhoodClosed{\bad(G)} \bigr).
\]
\end{claim}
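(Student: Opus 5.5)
The plan is to track where the counter $v$ is transported by the transfers between positions $l$ and $i_{j+1}$, and to use the fact that on the interval $(i_j,i_{j+1}]$ the summary functions agree on good vertices. By \Cref{eq:composition} we have $g_{i}=g_{i+1}\circ f_{i+1}$, and unfolding this gives
\[
g_l \;=\; g_{i_{j+1}}\circ f_{i_{j+1}}\circ\cdots\circ f_{l+1}.
\]
Define the trajectory $u_0=v$ and $u_k=f_{l+k}(u_{k-1})$ for $1\le k\le K$, where $K=i_{j+1}-l$. By the same unfolding, $g_l(v)=g_{l+k}(u_k)$ for every $0\le k\le K$. Each $f_{l+k}$ is a single transfer whose edge lies in $G$. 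Hence either $u_k=u_{k-1}$ or $(u_{k-1},u_k)\in E$, so every $u_k$ lies in $\neighbourhoodStar{v}$.

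\textbf{Case 1.} All $u_k$ lie in $\neighbourhoodClosed{\bad(G)}$. We take $w=u_K$. Then $w\in\neighbourhoodStar{v}\cap\neighbourhoodClosed{\bad(G)}$ and $g_l(v)=g_{i_{j+1}}(u_K)$, which is what the claim asks for.

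\textbf{Case 2.} Some $u_k$ lies outside $\neighbourhoodClosed{\bad(G)}$. Let $k\ge 1$ be the least such index; note $k\ge 1$ because $u_0=v\in\neighbourhoodClosed{\bad(G)}$. We observe that $u_{k-1}$ must be good. Indeed, if $u_{k-1}$ were bad, then $u_k$ would be either $u_{k-1}$ itself or an out-neighbour of it, so $u_k\in\neighbourhoodClosed{\bad(G)}$, a contradiction. Now $l+k-1$ lies in $[l,i_{j+1}]\subseteq(i_j,i_{j+1}]$. The restriction $g_\cdot|_{\good(G)}$ is constant on this interval, by the choice of the indices $i_j$. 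Therefore
\[
g_l(v)=g_{l+k-1}(u_{k-1})=g_{i_{j+1}}(u_{k-1}).
\]
Moreover $u_{k-1}\in\neighbourhoodStar{v}\cap\neighbourhoodClosed{\bad(G)}$ by the minimality of $k$, so $w=u_{k-1}$ is the required witness.

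The case where $v$ itself is good is subsumed by this argument: it is simply the situation $k-1=0$ in Case 2, or is trivially covered by constancy on good vertices.

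The only delicate point is the direction of the indices. The summary functions are built backwards, so one must check that the index $l+k-1$ at which constancy on good vertices is invoked really falls in $(i_j,i_{j+1}]$. This holds because $l>i_j$ and $k\le K$. The rest of the argument is a direct unfolding of the definitions.
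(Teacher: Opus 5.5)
Your proof is correct and follows essentially the paper's argument. The paper takes a maximal counterexample $l$, which amounts to a downward induction on $l$, and relies on the same three facts you use: $g_l(v)=g_{l+1}(f_{l+1}(v))$, a bad vertex can only be transferred into $\neighbourhoodClosed{\bad(G)}$, and the summary functions are constant on $\good(G)$ over $(i_j,i_{j+1}]$. You unroll that induction into the explicit trajectory $u_k$ and stop at the first good vertex whose successor leaves $\neighbourhoodClosed{\bad(G)}$, which gives a direct rather than contradiction-based presentation.
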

\begin{proof}
We consider two cases: $v\in  \neighbourhoodClosed{\bad(G)} \setminus \bad(G)$ or 
$v\in \neighbourhoodClosed{\bad(G)} \cap \bad(G)$.
 
\begin{enumerate}
    \item In the first case note that $v$ is a good vertex. By the definition of the sequence of $i$-s, $g_l$ and $g_{i_{j+1}}$ are equal on good vertices, in particular on $v$, so we only need to know that $v\in \neighbourhoodStar{v}\cap \neighbourhoodClosed{\bad(G)}$ which trivially holds.
    \item The second case we show by contradiction. Suppose the claim fails for some $l$ in the interval $i_j<l\le i_{j+1}$. We choose the largest $l$ and a vertex $v$ such that $g_l(v)\notin g_{i_{j+1}}\bigl(\neighbourhoodStar{v}\cap \neighbourhoodClosed{\bad(G)} \bigr)$. Then $l<i_{j+1}$, since 
    $v\in\neighbourhoodStar{v}\cap \neighbourhoodClosed{\bad(G)}$. 
    
    Moreover, 
$g_{l}(v)\neq g_{l+1}(v)$, so $f_{l+1}(v)=w$ for some $w$ with
$g_{l}(v)=g_{l+1}(w)$. Note that, since $v$ is bad, we have $w\in\neighbourhoodClosed{\bad(G)}$.
We use the maximality of $l$, and conclude that 
\[
g_{l+1}(w)\in
g_{i_{j+1}}(\neighbourhoodStar{w}\cap \neighbourhoodClosed{\bad(G)}).
\]
As $\neighbourhoodStar{w}\subseteq \neighbourhoodStar{v}$ we get that
\[g_{l}(v)=g_{l+1}(w)\in
g_{i_{j+1}}(\neighbourhoodStar{w}\cap \neighbourhoodClosed{\bad(G)})\subseteq g_{i_{j+1}}(\neighbourhoodStar{v}\cap \neighbourhoodClosed{\bad(G)}),
\]
as required.  \qedhere
    
\end{enumerate}
\end{proof}

Therefore, the sequence $g_{i_j+1},\dots,g_{i_{j+1}}$ is constant on $\good(G)$ and each of its values on $\bad(G)$ is determined by one of the finitely many choices in
\[
\prod_{v\in\bad(G)} g_{i_{j+1}}(\neighbourhoodStar{v}\cap \neighbourhoodClosed{\bad(G)}).
\]
Hence,
\[
\left|\{g_{i_j+1},\dots,g_{i_{j+1}}\}\right|
\le 
\prod_{v\in\bad(G)} \left|g_{i_{j+1}}(\neighbourhoodStar{v}\cap \neighbourhoodClosed{\bad(G)})\right|.
\]
By \Cref{lem:all_good_vertexes}, we have
\[
m\le \sum_{v\in\good(G)}|\neighbourhoodStar{v}|-|\good(G)|.
\]
Combining the previous estimates yields
\begin{align}
|\SummaryFunctions|
&\le (m+1)\prod_{v\in\bad(G)} \left|\neighbourhoodStar{v}\cap \neighbourhoodClosed{\bad(G)}\right| \le  \\
& \left(1-|\good(G)|+\sum_{v\in\good(G)}|\neighbourhoodStar{v}|\right) 
\prod_{v\in\bad(G)} \left|\neighbourhoodStar{v}\cap \neighbourhoodClosed{\bad(G)}\right|.
\end{align}
This is exactly the claimed bound. \qedhere
\end{proof}

\section{Lower bound on integer reachability parametrized by transfer graph}\label{sec:six}

In this section, we show that for a given graph $G$ we can encode runs of a Turing machine over a binary alphabet, restricted to space of size linear in $|\bad(G)|$, by runs of a single-transfer VASS such that its transfer graph is precisely $G$.
This, together with ~\Cref{sec:upperbound}, shows that the number of bad vertices is indeed directly responsible
for the hardness of integer reachability parametrized by the transfer graph.  

\begin{definition}[single-tape Turing machine]
A \emph{single-tape Turing machine} is a tuple
$\mathcal{M} = (Q,\set{0,1},\delta,q_0,q_{\mathrm{acc}},q_{\mathrm{rej}})$,
where: $Q$ is a finite set of states; $\set{0,1}$ is a tape alphabet; $q_0 \in Q$ is the initial state; $q_{\mathrm{acc}}, q_{\mathrm{rej}} \in Q$ are distinct accepting
    and rejecting states;
    \[
    \delta \subseteq
    (Q \setminus \{q_{\mathrm{acc}},q_{\mathrm{rej}}\}) \times \set{0,1}
    \times Q \times \set{0,1} \times \{L,R\}
    \]
    is the transition function.

The machine has one tape, infinite in both directions, initially containing the input word $w \in 0^{\omega}$. In one transition,
if $(q,a,q',b,D)\in \delta$, then the machine, being in state $q$ and with the head pointing to some memory cell on the tape with the value
$a$, writes $b$ to this memory cell, changes the state to $q'$, and moves the tape head one cell
in direction $D \in \{L,R\}$ left or right.
\end{definition}

\newcommand{\BoundedMachineProblem}[2]{%
\begin{tcolorbox}[title=\textbf{Space-Bounded-TM-Acceptance}]
\textbf{Input:} A Turing machine $#1$,  
and $1^{#2}$, where $#2 \in \Nat$.\\
\textbf{Decide:} Does the machine $#1$ accept the 
empty word using at most $#2$ cells of the working tape?
\end{tcolorbox}%
}

\BoundedMachineProblem{M}{n}

The following standard PSPACE-completeness result follows
from the characterization of PSPACE by polynomial-space
Turing machines; see, e.g., \cite[Section~8.3]{Sipser2012}.

\begin{theorem}\label{thm:PSPACE_Hardness}
$\textsc{Space-Bounded-TM-Acceptance}$ is PSPACE-complete under
polynomial-time many-one reductions.
\end{theorem}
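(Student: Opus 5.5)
The plan is to follow the standard route: prove membership in PSPACE and PSPACE-hardness separately, and for hardness reduce from the acceptance problem for a generic polynomial-space machine.

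\emph{Membership.} I would give a nondeterministic algorithm that simulates $M$ on the empty input. It keeps only the current state, the head position in $[n]$, and the contents of the $n$ cells, which is $O(n + \log|Q|)$ bits. At each step it guesses a transition of $\delta$ that matches the current state and the scanned symbol. It rejects if the head would leave the window of $n$ cells, and it accepts when $q_{\mathrm{acc}}$ is reached. To make it halt, I would add a binary step counter bounded by $|Q|\cdot n\cdot 2^{n}$, which needs only polynomially many bits. Any accepting computation within $n$ cells can be shortened to one that never repeats a configuration, so this cutoff loses nothing. This puts the problem in NPSPACE, and Savitch's theorem gives $\text{NPSPACE}=\text{PSPACE}$.

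\emph{Hardness.} Let $L\in$ PSPACE be decided by a machine $N$ in space $p(|x|)$. For an input $x$, I would build in polynomial time a machine $M_x$ over $\{0,1\}$ as follows.
\begin{enumerate}
\item $M_x$ first writes an encoding of $x$ onto its blank tape. This uses $O(|x|)$ hard-coded states, one per written symbol.
\item It then simulates $N$. Each symbol of $N$'s tape alphabet is encoded by a block of $O(1)$ binary cells, and each of $N$'s steps is carried out by a constant-size gadget of $M_x$ states.
\item A fixed left boundary is placed, for example by starting the written input at the initial head position and never moving left of it. This ensures the head stays inside a window of $n = c\cdot(p(|x|)+|x|)$ cells.
\end{enumerate}
The reduction outputs the pair $(M_x, 1^{n})$. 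Then $x\in L$ holds if and only if $M_x$ accepts the empty word within $n$ cells. The construction is clearly polynomial time.

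\emph{Main obstacle.} I expect the delicate step to be the hardness direction: using the paper's specific model, with a two-way infinite tape, a binary alphabet, the empty input $0^\omega$, and acceptance defined by "using at most $n$ cells". The binary encoding of the larger alphabet and the choice of a fixed boundary are where care is needed. Since this is textbook material, I would cite \cite[Section~8.3]{Sipser2012} for the general space-bounded simulation and only sketch the encoding details.
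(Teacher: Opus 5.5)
Your proposal is correct and is the standard textbook argument (membership in NPSPACE plus Savitch's theorem, and a generic reduction that hard-codes $x$ into $M_x$); the paper gives no proof of its own and only cites \cite[Section~8.3]{Sipser2012}. The only details worth making explicit are these: encode $N$'s blank symbol as the all-$0$ block, so that the initial tape $0^\omega$ represents a blank tape for $N$; and, in the membership algorithm, fix the position of the $n$-cell window relative to the initial head position (e.g.\ by guessing its offset), since the tape is two-way infinite.
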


The main result of this section is as follows:

\begin{lemma}\label{lem:main_hardness}
Let $G$ be a graph with $n\in \Nat$ bad vertices, and let $(M, m)$ be the input to the Space-Bounded-TM-Acceptance problem with $m \leq \frac{n}{\Bigconst}$. Then in polynomial time we can construct a transfer VASS $\V$ with two configurations $\initial$ and $\final$
such that: the transfer graph of $\V$ is $G$ and $\initial\xrightarrow{{*}}\final$ if and only if the empty word is accepted by the machine $M$.
\end{lemma}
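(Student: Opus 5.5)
We plan to reduce from \textsc{Space-Bounded-TM-Acceptance} (\Cref{thm:PSPACE_Hardness}) by building, for each tape cell, a small gadget inside the transfer graph $G$ that stores one bit. The bit is stored in the \emph{summary function}, not in a counter value. The key insight from \Cref{sec:upperbound} is that good vertices have monotone summary behaviour and so cannot store reusable information. Bad vertices, however, admit non-monotone summaries: a vertex of out-degree at least $2$, or one on a simple cycle of length at least $3$, can have its image under the running composition of transfers switched back and forth arbitrarily often. So the first step is a combinatorial lemma. From $n$ bad vertices we extract at least $n/\Bigconst$ pairwise disjoint \emph{witness structures}, each of bounded size, and each of one of finitely many shapes:
\begin{itemize}[nosep,leftmargin=1.3em]
\item a vertex with two out-neighbours;
\item a simple cycle of length $\ge 3$;
\item a path leading to such a branching or cycle.
\end{itemize}
The extraction is greedy: pick a bad vertex, take a shortest certificate of its badness from \Cref{def:badVertices}, delete a bounded neighbourhood, and repeat. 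Certificates of types 3 and 4 may be long paths, so instead of using the whole path we charge bad vertices to the branching or cycle vertex their path reaches. Because the constant is generous, several bad vertices can share one witness. I expect this extraction, with the case analysis bounding the constant $\Bigconst$, to be the main obstacle, in particular handling long type-3/type-4 paths that overlap.

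Second, we design the VASS. We simulate $M$ in reverse time, since summaries are built backwards: $g_i = g_{i+1}\circ f_{i+1}$. The state space is $Q_M \times [m]$ plus bounded auxiliary states, so the control tracks the machine state and head position. For cell $c$, with witness structure $W_c$ having a branching vertex $u_c$ and out-neighbours $a_c,b_c$, the bit is encoded by where $u_c$ is eventually sent. Concretely, we use a test vector: the initial configuration puts a distinguished amount (say a large power of two, $2^c$) into $u_c$. Applying transfer $u_c\to a_c$ versus $u_c\to b_c$ at the right moments routes that mass, and reading a bit is realised by guarded transitions that add and subtract values so that a wrong guess leaves a nonzero residue. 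Since everything is integer-valued, we cannot test for zero directly. Instead we follow the standard integer-VASS technique: each guess adds a signature vector that can only be cancelled at the final configuration if every guess was consistent. Distinct powers of two per cell and per step phase prevent cancellation between cells. With cycles in place of branchings, the bit is encoded as the rotation position along the cycle. Counting how many times the mass circulates mod the cycle length, using additive bookkeeping, distinguishes the two values.

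Third, we must ensure the transfer graph of $\V$ is exactly $G$. For every edge of $G$ not used by the gadgets, we add a dummy transition from an isolated state, or a self-loop at an unreachable state, that performs that transfer. Such transitions are never taken on a run from $\initial$ and so do not affect correctness. Every gadget transfer uses only edges of $G$. Finally, we verify both directions. An accepting computation of $M$ yields a run by following the encoding. Conversely, in any run reaching $\final$, the bookkeeping residues force each simulated read to agree with the last written bit, so the run decodes to an accepting computation. Polynomiality follows because the gadgets have constant size, there are $m$ cells, and the numbers used have polynomially many bits.
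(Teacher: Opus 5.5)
Your proposal has two genuine gaps. \textbf{The extraction lemma is false.} Your fix for long paths is exactly what breaks the counting. Take $G$ to be a single directed cycle on $n\ge 3$ vertices: all $n$ vertices are bad, no vertex has out-degree $2$, and the only simple cycle is $G$ itself. Or take a path $l_1\to\cdots\to l_{n-1}\to x$ where $x$ has two sink out-neighbours: all $n$ of these vertices are bad, but there is only one branching vertex. In both cases there is a single witness structure. Charging the path or cycle vertices to it leaves you with $O(1)$ gadgets. Encoding a bit by where $u_c$ is routed, or by a rotation position (at most $\log n$ bits per cycle), then gives far fewer than $n/30$ memory cells. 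The missing idea is that the vertices of long paths and cycles must themselves be the memory. The paper stores one bit as a counter value $\pm1$ in every path vertex $l_i$ of a Y-shape or lasso, and in every third vertex of a cycle. It reads a cell by shift-register moves:
\begin{itemize}
\item guess the value $a$ of $l_m$ and subtract $a$ from $l_m$;
\item transfer the remainder of $l_m$ into a sink;
\item shift $l_{i-1}\to l_i$ down the path, and add $a$ back at $l_1$.
\end{itemize}
Repeating this $m$ times rotates the whole path past the test point; cycles are handled analogously by rotations. To make this possible, the paper first prunes $G$ to a clean-simple subgraph that keeps $n/6$ bad vertices (\Cref{lem:simple_graph,lem:clean_simple_graph}). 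It then uses at least $1/5$ of those as cells.

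\textbf{Your error detection does not rule out cancellation.} A run of $M$ may be exponentially long, so each cell is tested exponentially often in the same counters. A wrong guess leaving residue $+s$ can be undone by a later wrong guess leaving $-s$. Distinct powers of two separate cells and phases, but not different time steps of the same cell. With shared sinks, even $2\cdot 2^{c}-2^{c+1}=0$ mixes cells. What is needed is a sign invariant. The paper splits the counters into $C_{Pos}$, $C_{Neg}$, $C_{Cell}$. It arranges that a wrong guess of $1$ can only add a strictly negative amount to a $C_{Neg}$ counter, and a wrong guess of $-1$ only a strictly positive amount to a $C_{Pos}$ counter. Each gadget, between its entry and exit states, preserves considerable valuations and never makes a non-valid one valid. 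This is also why the clean property is needed. The two good out-neighbours of every type-$2$ vertex must be one positive and one negative sink, consistently across all type-$2$ vertices that share sinks. The paper obtains this by a local-search max-cut argument, at the cost of a factor $2$. Your third step, realising the remaining edges of $G$ with unreachable dummy transitions, coincides with \Cref{lem:remove_transfers} and is fine.
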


The proof of the above lemma is postponed after~\Cref{col:GN_reachability_hardness}, stated below. \Cref{col:GN_reachability_hardness} is a consequence of \Cref{lem:main_hardness} and \Cref{thm:PSPACE_Hardness}.

\begin{corollary}[Hardness for every fixed graph family]\label{col:GN_reachability_hardness}
Let $(G_n)_{n\geq 1}$ be a polynomial-time constructible family of
graphs such that $G_n$ has at least $n$ bad vertices for every
$n\geq 1$. Let the $G_n$-integer reachability problem be the reachability problem where instances are restricted to transfer VASSes with induced transfer graphs in the family $G_n$.

\noindent Then the $G_n$-integer reachability problem is PSPACE-hard.
\end{corollary}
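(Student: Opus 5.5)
The corollary follows by a polynomial-time many-one reduction from \textsc{Space-Bounded-TM-Acceptance}, which is PSPACE-complete by \Cref{thm:PSPACE_Hardness}, to the $G_n$-integer reachability problem. The reduction uses \Cref{lem:main_hardness} as a black box.

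Let $(M,1^m)$ be an input to \textsc{Space-Bounded-TM-Acceptance}. Set $N \coloneqq \Bigconst\cdot m$, or $N\coloneqq \max(1,\Bigconst\cdot m)$ if $m=0$. The reduction proceeds in three steps.
\begin{enumerate}
\item Construct $G_N$. Since the family is polynomial-time constructible and the input contains $1^m$ in unary, $N$ is polynomial in the input size, so $G_N$ can be built in time polynomial in $\size{(M,1^m)}$.
\item Check the hypothesis of \Cref{lem:main_hardness}. By assumption $G_N$ has $n\geq N$ bad vertices, so $m\le N/\Bigconst\le n/\Bigconst$.
\item Apply \Cref{lem:main_hardness}. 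It produces, in time polynomial in $\size{G_N}+\size{M}+m$, a single-transfer VASS $\V$ with transfer graph exactly $G_N$, together with configurations $\initial,\final$, such that $\initial\xrightarrow{*}\final$ if and only if $M$ accepts the empty word.
\end{enumerate}
The output $(\V,\initial,\final)$ is an instance of the $G_n$-integer reachability problem, because its transfer graph lies in the family. Composing the polynomial-time steps gives a polynomial-time reduction.

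The main point needing care is what ``accepts'' means in \Cref{lem:main_hardness}. The reduction is correct only if the lemma's equivalence is with acceptance within $m$ tape cells, i.e.\ with the space-bounded acceptance problem, and not with unrestricted acceptance. I would read the lemma that way, since the construction simulates a tape of $m$ cells. If it were literally unrestricted acceptance, I would first pad or modify $M$ into $M'$, a machine that accepts the empty word iff $M$ accepts within $m$ cells, by marking the tape boundaries. This is standard.

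A second, minor point is the one-parameter indexing. If some $G_N$ has many more bad vertices than $N$, the lemma still applies, since it only requires $m\le n/\Bigconst$. Therefore no exact count of bad vertices is needed.
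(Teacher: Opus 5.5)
Your proposal is correct and matches the paper's proof. Both reduce from \textsc{Space-Bounded-TM-Acceptance} by building $G_{\Bigconst m}$, which takes polynomial time because $m$ is given in unary, and then apply \Cref{lem:main_hardness}, reading its equivalence as acceptance within $m$ cells exactly as item (ii) of the paper's proof does. Your extra remarks, on the $m=0$ edge case and on surplus bad vertices being harmless, are minor points the paper leaves implicit.
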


\begin{proof}[Proof of~\Cref{col:GN_reachability_hardness}]
For every instance of the Space-Bounded-TM-Acceptance problem $(M,1^n)$, by~\Cref{lem:main_hardness}, we can construct in polynomial time a VASS $\V$ and its two configurations $\initial, \final$ such that:
{\it (i)} the transfer graph of $\V$ is $G_{\Bigconst n}$ and
{\it (ii)}  $\initial\xrightarrow{*}\final$ if and only if the empty word is accepted by $M$ using at most $n$ memory cells of the working tape.
As the Space-Bounded-TM-Acceptance problem is PSPACE-hard~(\Cref{thm:PSPACE_Hardness}), we conclude that the $G_n$-integer reachability problem is also PSPACE-hard. \qedhere
\end{proof}

\subsection{The Proof of~\Cref{lem:main_hardness}}
This VASS is constructed in three steps.
First, we find a large subgraph $G'$ of $G$ with some desired properties (we call such graphs clean-simple).
Next we show ~\Cref{lem:main_hardness} with a better constant for clean-simple graphs, obtaining a VASS $V'$.
Finally, we modify $V'$ so that its transfer graph is $G$ instead of $G'$ but it still satisfies ~\Cref{lem:main_hardness}.

We begin with the last step, as it is the easiest.

\begin{lemma}\label{lem:remove_transfers}
    Let $\V'=(d,Q',\Delta')$ be a single-transfer VASS with the transfer graph $G'=([d],E')$.
    Let $G=([d],E)$ be a graph such that $E'\subseteq E$. Then there is a polynomial-time constructible transfer VASS $\V=(d,Q,\Delta)$ with a transfer graph $G$ such that $Q'\subseteq Q$ and for any configurations $(p,x),(q,y)\in Q'\times \Int^d$ it holds that $(p,x)\xrightarrow{*}_{\V'}(q,y)$ if and only if 
    $(p,x)\xrightarrow{*}_{\V}(q,y)$.
\end{lemma}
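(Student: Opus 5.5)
The idea is to add, for each missing edge $(i,j)\in E\setminus E'$, a transition that performs the transfer $f_i^j$ but can never lie on a run between two configurations whose states are in $Q'$. We build $\V$ as follows. Start from $\V'$ and add one fresh state $\bot$, so $Q=Q'\cup\{\bot\}$. Keep all transitions of $\Delta'$. For every edge $(i,j)\in E\setminus E'$ add the transition $(\bot,f_i^j,\vec{0},\bot)$. Then the set of transfer edges of $\V$ is $E'\cup(E\setminus E')=E$, so $G(\V)=G$. If $Q'$ is empty, the statement is vacuous and we may also add a dummy state. The construction adds at most $d^2$ transitions, each of size $O(\log d)$, so it is polynomial.

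For the equivalence, note that no transition enters or leaves $\bot$ except the new self-loops. Hence any run of $\V$ starting in a state of $Q'$ never visits $\bot$, and therefore uses only transitions from $\Delta'$. The transition relation $\xrightarrow{t}$ is defined locally from $t$, so a run of $\V$ from $(p,x)$ to $(q,y)$ with $p,q\in Q'$ is exactly a run of $\V'$, and conversely every run of $\V'$ is a run of $\V$ since $\Delta'\subseteq\Delta$. This gives both directions.

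There is one subtle point. One might worry that the definition of the transfer graph counts only transitions that are in some sense usable, but \Cref{def:counterGraphs} depends only on $\Delta$, so unreachable transitions still contribute edges. If one wants to avoid the extra state, an alternative is to attach the loops to an existing state guarded by a reset-free gadget. That is harder, and it is unnecessary because the lemma only asks for $Q'\subseteq Q$. I therefore expect no real obstacle; the only care needed is to confirm that the isolated state $\bot$ is permitted, which the statement's $Q'\subseteq Q$ explicitly allows.
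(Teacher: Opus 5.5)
Your proposal is correct and is essentially the paper's own proof. The paper likewise adds fresh states, not connected to $Q'$, with transitions that realise the missing transfers and do not otherwise change the counters, so reachability between configurations with states in $Q'$ is unchanged; your isolated state $\bot$ with zero-effect self-loops $f_i^j$ for $(i,j)\in E\setminus E'$ is a concrete instance of that construction.
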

\begin{proof}
    We simply add new states and transitions between them that allow for missing transfers, and do not modify counters in any other way.
    Because the newly added elements of the control automaton are not connected to the $Q'$ part, they
     do not affect the reachability between configurations with states in $Q'$. \qedhere
\end{proof}

\subsubsection{The simplification}

This section begins by defining simple graphs (~\Cref{def:simple_graph}) and clean-simple graphs (~\Cref{def:simple_and_clean}), the latter being precisely the subgraphs that we look for.
\Cref{lem:simple_graph} says that we can always find a large simple subgraph and \Cref{lem:clean_simple_graph} says that within a simple graph we can always find a big clean-simple subgraph.

The section concludes with~\Cref{lem:main_lemma_simple}, a reformulation of~\Cref{lem:main_hardness} for clean-simple transfer graphs with better constants.
Together with~\Cref{lem:simple_graph,lem:clean_simple_graph,lem:remove_transfers}, it gives~\Cref{lem:main_hardness}.

\begin{definition}\label{def:simple_graph}
A graph is \emph{simple} if each bad vertex $x$ satisfies the following properties:
\begin{itemize}
    \item if $x$ is of type $1$ then $N^*(x)$ is a simple cycle,
    \item if $x$ is of type $2$ then $N^*(x)=N[x]$, and it contains two nodes, that are good,
    \item if $x$ is of type $3$ then $N^*(x)$ is a simple lasso,
    \item if $x$ is of type $4$ but not of type $2$ then $\size{N(x)}=1$. Its out-neighbour is a bad vertex of type $2$ or $4$.
\end{itemize}
Good vertices do not have outgoing edges.
\end{definition}

\begin{definition}\label{def:simple_and_clean}
We say that a simple graph is clean if its good vertices can be partitioned into two sets, green and red, in such a way that every bad vertex of type $2$ has one red and one green neighbour.
\end{definition}    

\begin{lemma}\label{lem:simple_graph}
Let $G$ be a graph with $n$ bad vertices. It contains a simple subgraph $G'$ with $\frac{n}{3}$ bad vertices.
\end{lemma}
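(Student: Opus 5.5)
The plan is to split the bad vertices of $G$ into two groups and handle each separately. The first group is $A=\bad_1(G)\cup\bad_3(G)$, the vertices that reach a simple cycle of length at least $3$. The second group is $B=\bad(G)\setminus A$; every vertex of $B$ is of type $2$ or $4$. One of the two groups has at least $n/2$ elements. I will build, for each group, a simple subgraph $G'$ on the vertex set $[d]$ (keeping all vertices, deleting edges) in which a constant fraction of that group is bad. The aim is a subgraph with at least $|A|$ bad vertices from $A$ and at least $|B|/3$ from $B$; since $\max(|A|,|B|/3)\ge n/4$, only the second construction needs care to reach the stated $n/3$, which I return to below. Throughout I use that in a subgraph where every vertex has out-degree at most $1$ (a functional graph), no vertex is of type $2$ or $4$. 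In such a graph $\neighbourhoodStar{x}$ is automatically a path, a simple cycle, or a simple lasso.

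\emph{Group $A$.} First pick a maximal family of pairwise vertex-disjoint simple cycles of length $\ge 3$ in $G$, and let $R$ be the union of their vertices. Every $x\in A$ reaches $R$: $x$ reaches some long simple cycle, which by maximality meets $R$. Keep the cycle edges on $R$. For every $x\in A\setminus R$, keep exactly one edge: the first edge of a fixed shortest path from $x$ to $R$. By shortest-path consistency all vertices of that path lie in $A$ and their chosen edges follow it. Delete all other edges. Then every vertex has out-degree $\le 1$, every $x\in A$ reaches a long cycle, and $\neighbourhoodStar{x}$ is a simple cycle (type $1$) or a simple lasso (type $3$). Every vertex outside $A$ has become a sink, and 2-cycles have disappeared because sinks have no edges. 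This gives a simple graph with exactly $|A|$ bad vertices.

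\emph{Group $B$.} Vertices of $B$ reach no long cycle, so $G[\neighbourhoodStar{B}]$ is acyclic apart from 2-cycles. Call a \emph{fork} a triple $(x,a,b)$ with $(x,a),(x,b)\in E$, $a\ne b$, $a,b\ne x$. Every $x\in B$ reaches, by a simple path $P$, a fork centre $y$ whose two branches $a,b$ avoid $P$; this is exactly conditions $2$ and $4$. I will choose greedily a maximal family of forks whose triples are pairwise disjoint and whose centres lie in $B$. In $G'$ each centre keeps its two fork edges and its leaves $a,b$ become sinks; these centres are the required type-$2$ vertices with two good neighbours. Then, as in group $A$, every remaining vertex of $B$ that can reach a chosen centre by a path avoiding all chosen leaves gets the single edge along a shortest such path. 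This makes it type $4$ with $\size{\neighbourhoodOpen{x}}=1$, leading into a type $2$ or $4$ vertex, and everything else becomes a sink.

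The main obstacle is the counting in group $B$: showing that at least $|B|/3$ vertices of $B$ end up bad. The loss comes from vertices of $B$ consumed as leaves, and from vertices whose only routes to a fork pass through a leaf, for instance via 2-cycles such as $x\to y\to x$, $y\to z$. I would try a charging argument that assigns each vertex of $B$ lost in this way to a chosen fork, with at most two lost vertices per fork (its leaves). The key point is that a vertex blocked by a leaf $a$ of fork $(y,a,b)$ can be rerouted into $y$ or can form a new disjoint fork, contradicting maximality. This yields $|\bad(G')|\ge |B|/3$. To reach the stated $n/3$ overall, I would not treat $A$ and $B$ as alternatives but combine them: the $A$-construction only uses edges inside $\neighbourhoodStar{A}$. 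So I would do the $B$-construction on $B$ with all vertices of $A$ forbidden as leaves and routes, then overlay the two edge sets, obtaining $|A|+|B|/3\ge n/3$. Checking that the two constructions do not interfere is a routine verification of \Cref{def:simple_graph}.
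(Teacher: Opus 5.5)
\textbf{There is a genuine gap: the counting for $B$, which is where all the difficulty of the lemma lies.} Your handling of $A$ (vertices reaching a simple cycle of length at least $3$) is sound, and is essentially a global version of the paper's long-cycle case. The overlay step is also harmless: edges of the $B$-part have tails in $B$ and heads outside $A$, and no vertex of $B$ reaches $A$.

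The charging argument you sketch for $B$ is false for an arbitrary maximal family of disjoint forks. Take vertices $y,u,b,c,d,t_1,\dots,t_N$ with edges $y\to u$, $y\to b$, $u\to c$, $u\to d$, and $t_i\to u$ for every $i$.
There are no long cycles, $y$ and $u$ are of type $2$, and every $t_i$ is of type $4$ (via $u$), so $n=|B|=N+2$.
Up to swapping leaves, the only forks with centre in $B$ are $(y,u,b)$ and $(u,c,d)$. They share $u$, so $\{(y,u,b)\}$ is a maximal family.
Then $u$ becomes a sink, and each $t_i$ has $u$ as its only out-neighbour, so no $t_i$ reaches $y$ while avoiding leaves. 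Your $G'$ therefore has exactly one bad vertex, which is below $n/3$ as soon as $N\ge 2$.
Your ``key point'' fails exactly here. A blocked $t_i$ cannot be rerouted into $y$, and the only other fork meets the chosen one in $u$, so maximality is not contradicted. The number of vertices lost behind a single leaf is unbounded, not at most two.

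What is missing is a rule for \emph{which} type-$2$ vertices to keep as centres, weighted by the type-$4$ vertices that can reach the structure only through them. The paper's case analysis supplies this.
\begin{itemize}
\item It first trims out-edges of good vertices and reduces type-$4$ vertices to out-degree $1$.
\item It then keeps a type-$2$ vertex with two good out-neighbours, together with everything reaching it. In the example this keeps $u$, and all $N+2$ vertices stay bad.
\item Otherwise it uses that, without long cycles, type-$2$ vertices reaching one another lie in one strongly connected component joined by $2$-cycles. It either chooses two branches of an out-degree-$\ge 3$ vertex by averaging, or keeps the odd or the even positions of the chain, whichever carries more weight.
\end{itemize}
That last step relies on a good leaf being shared by two neighbouring centres. \Cref{def:simple_graph} allows this, but your pairwise-disjoint triples forbid it. So even a weighted version of your scheme would need its own averaging argument, plus a check that the routing still reaches the kept centres.
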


The proof of~\Cref{lem:simple_graph} is conceptually simple but laborious: we clean the graph $G$
in several steps until a simple subgraph is obtained. The full proof is given in
~\Cref{app:two}.

\begin{lemma}\label{lem:clean_simple_graph}
Let $G$ be a simple graph with $n$ bad vertices. It contains a clean-simple subgraph $G'$ with $\frac{n}{2}$ bad vertices.
\end{lemma}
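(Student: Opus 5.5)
The plan is to observe that the clean condition only constrains bad vertices of type $2$. We then pick a $2$-colouring of the good vertices that "cuts" a large weighted fraction of the type-$2$ vertices, and discard the rest together with everything hanging on them.

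\textbf{Structure of a simple graph.} First I record what \Cref{def:simple_graph} gives. Good vertices are sinks. A type-$1$ or type-$3$ vertex $x$ has $\neighbourhoodStar{x}$ a simple cycle or lasso, and all its vertices are themselves bad of type $1$ or $3$. So $\neighbourhoodStar{x}$ contains no good vertex and no type-$2$ vertex.

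A type-$2$ vertex $x$ has exactly two out-neighbours $a_x\neq b_x$, both good. A type-$4$ vertex that is not of type $2$ has a unique out-neighbour, again of type $2$ or $4$. Following these unique out-edges from such a vertex $w$ gives a path that cannot cycle, since the graph is finite and a cycle would create type-$1$ vertices. So the path ends in a uniquely determined type-$2$ vertex $r(w)$. For each type-$2$ vertex $x$, let its weight be $\omega(x)=1+|\{w : r(w)=x\}|$. The weights of all type-$2$ vertices sum to $|\bad_2\cup\bad_4|$, and the remaining bad vertices are of type $1$ or $3$.

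\textbf{Choosing the colouring.} Consider the multigraph $H$ on $\good(G)$ with one edge $\{a_x,b_x\}$ of weight $\omega(x)$ for each type-$2$ vertex $x$; it has no loops since $a_x\neq b_x$. A uniformly random red/green colouring of $\good(G)$ cuts each edge with probability $1/2$, so some colouring cuts edges of total weight at least $\frac12|\bad_2\cup\bad_4|$. In polynomial time one finds such a colouring by the method of conditional expectations, or by the standard local-search argument for weighted max-cut. Fix this colouring and call a type-$2$ vertex $x$ \emph{cut} if $a_x$ and $b_x$ receive different colours.

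\textbf{Building $G'$ and verifying it.} Since \Cref{lem:remove_transfers} needs the same vertex set $[d]$, I keep all vertices and only delete edges. For every uncut type-$2$ vertex $x$ and every $w$ with $r(w)=x$, I delete all outgoing edges of $x$ and of $w$; all other edges stay. The deleted vertices become isolated sinks, hence good, and I colour them arbitrarily.

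The main check is that every other bad vertex keeps its type and that $G'$ is still simple. The key point is that for every kept bad vertex $y$, the set $\neighbourhoodStar{y}$ and the induced subgraph on it are unchanged. If $y$ is of type $1$ or $3$, no vertex of $\neighbourhoodStar{y}$ lost edges. If $y$ is a cut type-$2$ vertex, its two out-edges go to sinks and are untouched. If $y$ is a type-$4$ vertex with $r(y)$ cut, every vertex $w$ on its chain satisfies $r(w)=r(y)$, so no vertex on the chain was modified.

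Because bad-ness of $y$ and each of its types depend only on $G[\neighbourhoodStar{y}]$, the conditions of \Cref{def:simple_graph} continue to hold for all surviving bad vertices. Good vertices are still sinks, and cleanliness holds by the choice of colouring. The number of bad vertices of $G'$ is at least
\[
|\bad_1\cup\bad_3|+\tfrac12|\bad_2\cup\bad_4|\;\ge\;\tfrac n2 .
\]

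The only step needing care is this preservation check, in particular that no kept vertex has an edge into a deleted chain. This holds because any vertex pointing into the chain of $x$ is itself a type-$4$ vertex with $r=x$, and is therefore deleted too.
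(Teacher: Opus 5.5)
Your proposal is correct and is essentially the paper's proof. It uses the same weighted graph on $\good(G)$ with one edge of weight $1+|\{w : r(w)=x\}|$ per type-$2$ vertex $x$, a colouring that cuts at least half the total weight, and removal of the outgoing edges of every uncut type-$2$ vertex together with its attached type-$4$ chains. You skip the paper's induction over connected components and check the invariance of $\neighbourhoodStar{y}$ more carefully, which is fine.

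One small slip: a cycle of length $1$ or $2$ among chain vertices would not create type-$1$ vertices, so your argument that the chain cannot cycle does not cover these cases. Justify termination instead by noting that the simple path demanded by condition~4 is forced along the unique out-edges and must stop at a vertex of out-degree at least $2$, i.e.\ at a type-$2$ vertex.
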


The proof of the above lemma is also straightforward and can be found in~\Cref{app:two}.

\begin{lemma}\label{lem:main_lemma_simple}
Let $G$ be a clean-simple graph with $n\in \Nat$ bad vertices, and let $(M,1^{\frac{n}{5}})$ be the input to the Space-Bounded-TM-Acceptance problem. Then in polynomial time we can construct a transfer VASS $\V$ with two configurations $\initial$ and $\final$
such that: the transfer graph of $\V$ is $G$ and $\initial\xrightarrow{{*}}\final$ if and only if the empty word is accepted by the machine $M$ using at most $\frac{n}{5}$ work memory cells.
\end{lemma}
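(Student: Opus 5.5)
We encode the tape of $M$ directly into the summary function: each of the $n/5$ tape cells is simulated by a small gadget of bad vertices, and the gadget's current summary value stores the cell's bit. The starting point is the pure VASS over $Q\times\Tset$ from \Cref{Sec:BigReachability}. A run of the transfer VASS read backwards corresponds to a walk through summary functions $g_n=\id, g_{n-1},\dots,g_0$, where $g_{i}=g_{i+1}\circ f_{i+1}$. The information in $g_i$ is therefore a finite memory of size exponential in the number of bad vertices. The VASS control states will remember the machine state and head position, of which there are polynomially many, while the tape content lives in the summary function. Since the summary is determined only implicitly by the run, the counter values must certify that the run used the intended transfers. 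We will arrange this so that $\initial\to^*\final$ forces every "read" to be consistent with the stored bit.

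\textbf{Gadgets.} Since $G$ is clean-simple, its bad vertices split into the types of \Cref{def:simple_graph}. Each type-$2$ vertex $x$ has exactly two sinks: a green one $x_g$ and a red one $x_r$. Transferring $x\to x_g$ or $x\to x_r$ sets where $x$'s content ends up, and this can be overwritten later by transferring again. Hence the value $g_i(x)\in\{x,x_g,x_r\}$ is a natural carrier of one bit. Type-$1$ cycles and type-$3$ lassos give a similar carrier: the position $g_i(x)$ on the cycle is controlled by rotating. Type-$4$ vertices feed into type-$2/4$ vertices, and fixing their transfer at the start gives a one-shot switch. I would group five bad vertices per tape cell, which explains the constant $n/5$. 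Some of these vertices store the bit and the rest are auxiliary.

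\textbf{Checking reads.} Testing a bit is the crux. For a cell stored at a type-$2$ vertex $x$, a "read $a$" step adds $+1$ to $x$. It also adds a bookkeeping decrement, say $-1$ on the sink $x_g$ if $a=0$ and on $x_r$ if $a=1$. The $+1$ eventually flows to $g(x)$, the sink chosen by the next write, which in forward time comes later. The $+1$ and $-1$ cancel exactly when the read guess matches the written value. Because all good vertices are sinks, the final vector can then be required to be $0$ on all sinks and on all bad vertices. The main delicacy is that in forward time the bit is determined by the next transfer rather than the previous one, so reads must precede writes. I would therefore simulate $M$ in reverse, or equivalently guess the next write and verify it. For each machine step at cell $c$ with transition $(q,a,q',b,D)$, the gadget does the following: read $a$, then transfer $x_c\to$ the sink encoding $b$ to commit the overwrite, then move the head. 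Intermediate transfers must not destroy pending tokens, because a token already sitting in a sink stays there. A transfer only redirects the future image of $x$, so earlier tokens are safe; verifying this is the key invariant. Cycle and lasso gadgets need an analogous token argument, with the value held at the cycle position. I would handle them by assigning the bit to two fixed cycle positions and using the extra vertices of the group as auxiliary.

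\textbf{Correctness and the obstacle.} Soundness requires showing that any integer run reaching $\final$ corresponds to a valid computation. Cancellations across different cells or different read events could spuriously sum to zero. To prevent this I would use distinct counters per cell, and within a cell I would tie each read's token to the very next transfer of $x$. Enforcing "next" needs care, and for it I expect to use the auxiliary vertices of the group together with a linear potential argument. The initial tape $0^\omega$ is encoded by an initial transfer to the green sinks, and acceptance by the control state $q_{\mathrm{acc}}$ with zero counters. Finally, any missing edges of $G$ are added via \Cref{lem:remove_transfers}, so the transfer graph is exactly $G$. The hardest step is soundness on cycle and lasso components, where integers let tokens wrap around.
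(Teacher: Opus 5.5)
There is a genuine gap. Your route differs from the paper's. In the paper a cell is a counter in $C_{Cell}$ holding $\pm1$, and an update is a plain addition. A test guesses $a$, subtracts $a$ from the cell, and transfers the result (through a Y-path, lasso or cycle) into a counter of $C_{Pos}$ or $C_{Neg}$. You instead store the bit in the target of the \emph{next} transfer of a type-$2$ vertex $x$ and simulate $M$ backwards. That can be made to work on type-$2$ vertices, but your read check is unsound as designed.

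A wrong ``read $0$'' while the cell holds $1$ leaves $-1$ on $x_g$ and $+1$ on $x_r$. A wrong ``read $1$'' while it holds $0$ leaves $+1$ on $x_g$ and $-1$ on $x_r$. Two such errors cancel exactly, so the zero final vector does not certify a correct simulation. Your remedies do not address this:
\begin{itemize}
\item the token is already routed by the very next transfer, so ``tying'' it there changes nothing;
\item the two errors can occur in the same cell;
\item ``distinct counters per cell'' is unavailable, because the transfer graph is fixed to $G$ and in a clean-simple graph many type-$2$ vertices may share a good out-neighbour.
\end{itemize}
What is missing is sign-monotone error accumulation. Make the sign of the deposited token depend on the guess: for guess $0$, put $+1$ on $x$ and $-1$ on the green sink; for guess $1$, put $-1$ on $x$ and $+1$ on the red sink. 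Then every error pushes green sinks down and red sinks up, and can never be undone. This is precisely what the clean red/green colouring is for, and it is the paper's considerable/valid invariant. Also, as literally ordered, ``read $a$, then transfer to the sink of $b$'' routes the $a$-token by $b$. In forward VASS time the token must be deposited after the transfer, or the roles of token and transfer must be swapped.

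The remaining components are not constructed at all. A type-$1$ or type-$3$ component of a simple graph contains no good vertex, so there is no sink where a token ``stays forever''. Every counter there can be transferred onward and merged with others. You therefore need an explicit discipline that prevents opposite-sign errors from meeting. The paper's is a periodic Pos/Cell/Neg pattern on the cycle: the adjusted cell is merged only into an adjacent $C_{Pos}$ or $C_{Neg}$ counter, and full rotations restore the pattern.

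Likewise, type-$4$ vertices used as one-shot switches give no reusable memory, so your $n/5$ count is unsupported. The clean-simple Y-shaped graph with one type-$2$ vertex and $n-1$ type-$4$ vertices on its in-path yields a single reusable bit in your scheme. The paper makes every path vertex a cell via the cyclic shift with test of Proposition~\ref{prop:Y}. In the paper the constant $\frac{1}{5}$ comes from $5$-cycles, of which only one vertex can serve as a cell; ``five bad vertices per cell'' needs an actual assignment, not just a grouping.
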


The proof of~\Cref{lem:main_lemma_simple} is postponed to the next subsection. Here we only present how to prove~\Cref{lem:main_hardness} using it and~\Cref{lem:remove_transfers,lem:simple_graph,lem:clean_simple_graph}.

\begin{proof}[Proof of~\Cref{lem:main_hardness}]
Let $G'$ be a clean-simple subgraph of $G$ with at least $\frac{n}{6}$ bad vertices. It exists because of~\Cref{lem:simple_graph} and~\Cref{lem:clean_simple_graph}, composed together.
Using~\Cref{lem:main_lemma_simple} we can construct in polynomial time a
transfer VASS $\V'$ with a transfer graph $G'$ and two configurations $\initial$ and $\final$ such that $\initial\xrightarrow{*}\final$ if and only if the machine $M$ accepts the empty word using at most $\frac{n}{\Bigconst}$ memory cells.
Then we produce the required $\V$ using~\Cref{lem:remove_transfers}. \qedhere
\end{proof}

\subsubsection{Proof of~\Cref{lem:main_lemma_simple}}

We start by introducing a new intermediate model called an \emph{automaton with memory}, equivalent to a
Turing machine with linearly bounded tape. It is more convenient to encode reachability problem for such an automaton within integer reachability of transfer VASS, compared to the direct encoding of linearly bounded Turing machines. For the moment, let us disregard that the transfer VASS to be constructed must have a given transfer graph.
The integer run in the transfer VASS will encode the run of the automaton with memory, however a single step of the automaton will be matched by a sequence of moves of the transfer VASS. The intuition is that there will be a bijection between a subset of counters and cells of the memory in the automaton with memory. The value stored in such a counter, $1$ or $-1$, will match the value stored in the corresponding cell of the memory. Maintaining the control state and corresponding values of counters is easy assuming that we have a way to correctly recognize the current configuration of the transfer VASS. Note that for the above we do not need to use transfer transitions, thus this part is independent from the transfer graph. Transfers are used to recognize the current configuration.
As in the transfer VASS we do not have any test, the recognition of the current configuration is resolved by nondeterminism, but after each guess the transfer VASS goes into the testing mode that uses transfers to verify if the guess was correct, if not the transfers will accumulate tokens in some special counters and these accumulated tokens will stay there forever. Thus any incorrect guess automatically will make reaching the final marking impossible. The testing gadgets are described in~\Cref{prop:meta}~and~\Cref{prop:cycle,prop:Y,prop:lasso}. Different testing gadgets are compatible with bad vertices of different types, so the particular combination of testing gadgets used in the VASS depends on the given transfer graph.

\begin{definition}[automaton with memory]
An automaton with memory is a 5-tuple $(n,Q\times{\set{?,-1,1}}\times \set{1\ldots n},T,c_0,c_f)$ where
$n\in \Nat$ is the number of memory cells,
$Q\times{\set{?,-1,1}}\times \set{1\dots n}$ is a finite set of states,
$T$ is the set of transitions of two types:
\begin{itemize}
    \item test transitions $(q,?,i)\xrightarrow{j} (q,j,i)$ where $j\in \set{-1,1}$, are used to check the content of the memory cell number $i$ and if its value is $j$ then we record this in the second component of the place. Memory is not changed when we make this transition.
    \item update transitions $(q,j,i)\xrightarrow{l} (p,?,k)$
    where $j,l\in \set{-1,1}$. When performed, it simply modifies the value of the memory cell number $i$ to $l$, and it changes the control state to $(p,?,k)$.
\end{itemize} 
$c_0,c_f\in ((n,Q\times{\set{?,-1,1}}\times \set{1\ldots n})\times \set{-1,1}^{n})$ are the initial and final configurations.
The semantics is natural and explained in the description of
transitions.
\end{definition}

Note that transitions are designed in such a way that updating moves are interleaved with test moves.

\begin{tcolorbox}[title=\textbf{Reachability Problem for automata with memory}]
\textbf{Input:} An automaton with memory $M=(n,Q\times{\set{?,-1,1}}\times \set{1\ldots n},T,c_0,c_f)$.\\
\textbf{Decide:} $c_0\to^* c_f$ in $M$?
\end{tcolorbox}%

\begin{theorem}\label{thm:hardness_of_memory_automata}
There is a polynomial-time algorithm that takes as input
an instance of the Space-Bounded-TM-Acceptance problem $M,1^n$ and produces for it an instance of Reachability Problem for automata with memory of size $n$, such that answers for both instances agree.
\end{theorem}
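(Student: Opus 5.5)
We simulate a space-bounded Turing machine by an automaton with memory whose memory stores the tape, whose control stores the machine state, and whose cell index stores the head position. Given $(M,1^n)$ with $M=(Q,\set{0,1},\delta,q_0,q_{\mathrm{acc}},q_{\mathrm{rej}})$, we build an automaton with $n$ memory cells. Tape symbol $0$ is encoded as $-1$ and $1$ as $1$. Since the machine may use any $n$ consecutive cells of a two-way infinite tape, and starts on the all-zero tape, we first fix a standard normalization. By the usual shifting argument (or by guessing the offset of the initial head position within the window, which costs only a factor $n$ in states), we may assume the head starts at some cell $h_0\in\set{1,\ldots,n}$ and never leaves $\set{1,\ldots,n}$. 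A move leaving the window is simply not simulated, so such runs are discarded.

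The construction of the automaton is as follows.
\begin{itemize}
\item \emph{States.} Its states are $Q\times\set{?,-1,1}\times\set{1,\ldots,n}$, where the first component is the TM state, the second records the last read symbol, and the third is the head position.
\item \emph{Test transitions.} For every $q$, $i$ and $j\in\set{-1,1}$ we add $(q,?,i)\xrightarrow{j}(q,j,i)$. These read the current cell.
\item \emph{Update transitions.} For every TM transition $(q,a,q',b,D)\in\delta$ and every $i$ such that $i'=i\pm1$ (according to $D$) lies in $\set{1,\ldots,n}$, we add $(q,\mathrm{enc}(a),i)\xrightarrow{\mathrm{enc}(b)}(q',?,i')$. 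This writes $b$ and moves the head.
\end{itemize}
The automaton thus alternates tests with updates exactly as required by the definition. The construction is polynomial: it has $O(|Q|\cdot n)$ states and $O(|\delta|\cdot n)$ transitions.

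For the initial and final configurations, we take $c_0$ to be the state $(q_0,?,h_0)$ with memory $(-1)^n$. Acceptance does not fix the final tape, but $c_f$ is a single configuration. We therefore modify the machine (still in polynomial time, within the same $n$ cells) so that upon reaching $q_{\mathrm{acc}}$ it sweeps the window writing $0$ everywhere and parks the head at cell $1$ in a fresh state $q_{\mathrm{fin}}$. The sweep needs the boundaries, which the automaton knows through its index. Concretely, we add cleaning transitions in the automaton directly: from $(q_{\mathrm{acc}},?,i)$ we test, write $-1$, and move toward $1$ and then $n$ and back, using a few auxiliary control states. We then set $c_f=((q_{\mathrm{fin}},?,1),(-1)^n)$.

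Correctness is a straightforward step-by-step simulation in both directions. Each TM step corresponds exactly to one test transition followed by one update transition. The test transition is only enabled when its guessed value matches the memory, so the recorded symbol is the true one. Reaching $c_f$ is possible only via $q_{\mathrm{acc}}$ and the cleaning phase, and the cleaning phase always succeeds. The main obstacle is the bookkeeping at the tape boundaries: making the two-way infinite tape and the initial head position match the fixed window of $n$ cells, and normalizing acceptance to one final configuration. Everything else is routine.
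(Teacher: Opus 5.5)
Your proposal is correct and takes essentially the same route as the paper: the tape sits in memory via $0\mapsto -1$, $1\mapsto 1$; machine state and head position live in the control; each machine step becomes one test transition plus one update transition; and after $q_{\mathrm{acc}}$ a cleanup phase resets every cell to $-1$ and ends in a fixed fresh state at cell $1$. The differences are cosmetic. The paper simply assumes without loss of generality that the head starts at cell $1$ and stays in $\set{1,\ldots,n}$; your offset-guessing is a slightly more careful treatment of the two-way tape, realisable by one initial test/update pair that jumps to $(q_0,?,h)$. In its cleanup, the paper uses the fact that update transitions may jump to any cell index: it jumps to cell $1$ and resets the cells in a single left-to-right pass, rather than sweeping with unit moves.
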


Note that a consequence of this is that the Reachability Problem for automata with memory is PSPACE-hard. The proof of~\Cref{thm:hardness_of_memory_automata} is a trivial exercise and was moved to the appendix.
Now, to prove~\Cref{lem:main_lemma_simple} it is sufficient to show:
\begin{claim}\label{clm:last_claim}
There is a polynomial-time algorithm which takes as input an instance $M$ of
the reachability problem for automata with memory of size $\frac{n}{5}$ and a
clean-simple graph $G$ with $n$ bad vertices, and outputs a transfer VASS $\V$
with transfer graph $G$ and two configurations $\initial,\final$ such that
$
    \initial \xrightarrow{*}_{\V} \final
$
if and only if the instance $M$ is accepting.
\end{claim}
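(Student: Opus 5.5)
The plan is to simulate the automaton with memory step by step, storing its configuration in the transfer VASS and checking every nondeterministic guess with a transfer gadget that can only go wrong in an irreversible way.

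\textbf{Encoding.} First I decompose the clean-simple graph $G$ into disjoint \emph{gadget components}. In a clean-simple graph every bad vertex has a very rigid reachable set, as given by \Cref{def:simple_graph}: a simple cycle for type $1$, a lasso for type $3$, a ``Y'' (a type-$2$ vertex with one red and one green good child) for type $2$, and chains of type-$4$ vertices ending in a type-$2$ vertex. I group the bad vertices into at least $\frac{n}{5}$ disjoint groups. Each group should contain a structure on which one of the testing gadgets (\Cref{prop:meta} together with \Cref{prop:cycle,prop:Y,prop:lasso}) can be implemented. The constant $5$ is meant to absorb the worst case, namely how many bad vertices a single gadget may consume or share, for example several type-$4$ chains entering one Y. I assign one memory cell $i$ of the automaton $M$ to each group, and designate inside that group a memory counter $x_i$ holding $\pm1$.

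\textbf{Simulation.} The control states of $\V$ contain the states of $M$ plus auxiliary states for the gadgets. An update transition $(q,j,i)\xrightarrow{l}(p,?,k)$ becomes a pure VASS transition that adds $l-j$ to $x_i$ and involves no transfer. A test transition $(q,?,i)\xrightarrow{j}(q,j,i)$ becomes a nondeterministic guess of $j$, followed by entering the testing gadget of group $i$ with parameter $j$. The gadget uses transfers along the edges of its component. The property I need from it, via \Cref{prop:meta}, has two parts. If $x_i=j$, the gadget returns $x_i$ to its value and leaves every auxiliary counter unchanged. If $x_i\neq j$, it adds a nonzero amount to a ``trash'' counter, for instance a good red or green leaf, or a cycle counter. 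I must also ensure that no later operation can cancel this amount.

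\textbf{Initial and final configurations.} The configurations $\initial$ and $\final$ encode $c_0$ and $c_f$, with all auxiliary and trash counters set to $0$. Since $G$ itself must be the transfer graph, I add dummy transitions using every edge of $G$ not yet used, placed in a disconnected part of the control automaton as in \Cref{lem:remove_transfers}.

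\textbf{Correctness.} Soundness in one direction is straightforward: an accepting run of $M$ yields a run of $\V$ in which every guess is correct. For the converse I argue that any run of $\V$ reaching $\final$ made only correct guesses, and therefore projects onto a run of $M$.

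\textbf{Main obstacle.} The hard step is exactly this irreversibility argument in the integer setting. Integer counters can go negative, so errors could cancel one another. The gadgets must therefore exploit transfers so that error contributions are, for example, always of the same sign, or are multiplied into a counter that is later only transferred \emph{out} to a sink that is never decremented. The graph structure has to guarantee this: no other transitions are allowed to touch these sinks, and the gadgets for different groups must not interfere even when they share type-$2$ endpoints. I would first try to realise the test as follows: transfer $x_i$ into a good leaf, add $-j$, then transfer back along the structure. The red/green partition would then let the two guesses $\pm1$ be checked on separate leaves, so that residues from wrong guesses accumulate monotonically.
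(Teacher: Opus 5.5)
There is a genuine gap in your counting step, and it is not a matter of constants. You assign one memory cell to each of a family of \emph{disjoint} groups of bad vertices, where each group carries a whole gadget structure. A single gadget structure, however, can contain unboundedly many bad vertices. If $G$ is one simple cycle on $n$ vertices (all bad, of type $1$), every cycle gadget uses the entire cycle, so you get exactly one group. Likewise, a single Y whose type-$2$ vertex is fed by a chain of $n-1$ type-$4$ vertices gives one group, and so does a lasso with a long handle. No constant, in particular not $5$, can ``absorb'' how many bad vertices a gadget consumes, so your scheme does not even give $\Omega(n)$ cells.

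The missing observation is that \Cref{prop:cycle,prop:Y,prop:lasso} test one designated cell $x$ while returning \emph{every} other $C_{Cell}$ counter of the structure unchanged. The structures therefore need not be disjoint, and you should count cells rather than groups. The paper makes every bad vertex of type $2$, $3$ or $4$ a cell, together with every vertex $c_{3j+2}$ of every cycle. It tests cell $x$ with the gadget built on $N^*(x)$, which by simpleness is a cycle, a lasso with $x$ at the start of its handle, or a Y with $x=l_1$. Then $|C_{Cell}|\ge n/5$; the worst case is when all bad vertices lie on $5$-cycles, since only $c_2$ among $c_1,\dots,c_5$ is a cell.

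Once the structures overlap, the interference and cancellation problem you list as the main obstacle must be solved globally, and your proposal leaves it open. The paper uses a single split $C_{Pos},C_{Neg},C_{Cell}$ of all of $G$ whose restriction to each $N^*(x)$ is exactly the split required by the corresponding proposition:
\begin{itemize}
\item green good vertices go to $C_{Pos}$ and red ones to $C_{Neg}$;
\item cycle vertices are split by position modulo $3$, the same way for the cycle gadget and for every lasso entering that cycle.
\end{itemize}
The first item is precisely where cleanness is used: every type-$2$ vertex then has $c_r\in C_{Pos}$ and $c_l\in C_{Neg}$ at the same time, even when leaves are shared between many Y's. With this global split, ``considerable'' and ``valid'' become global invariants. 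A wrong guess in any gadget pushes some $C_{Pos}$ counter up or some $C_{Neg}$ counter down, and no gadget can undo that. This is what makes the converse direction of your correctness argument go through.

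Your tentative concrete test, transferring $x_i$ into a good leaf and back, cannot work anyway, because good vertices of a simple graph have no outgoing edges. The Y gadget instead dumps only the residue $l_m-a$ of a guess $a$ into the leaf of the matching colour, shifts the path, and re-inserts $a$ at $l_1$.
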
 

Since the key idea of the construction is nondeterministic guesses of the current configuration and mechanism to penalize incorrect guesses, we need to have some semi-invariant that is preserved along the correct run and any incorrect step breaks the semi-invariant so that it cannot be repaired. This semi-invariant is captured by the following definition.

\begin{definition}
    Let $C_{Pos},C_{Neg},C_{Cell}$ be three disjoint sets of counters. We say that a counter valuation $v$ is \emph{considerable} if $v(c)\geq 0$ for all $c\in C_{Pos}$,
    $v(c)\leq 0$ for all $c\in C_{Neg}$, and 
    $v(c)\in \set{-1,1}$ for all $c\in C_{Cell}$.
    We also say that a considerable counter valuation is 
    valid if $v(c)=0$ for all $c\in C_{Pos}\cup C_{Neg}$. 
\end{definition}

We extend the above definition to configurations, saying that a configuration is considerable/valid if the vector of its counters is considerable/valid.

Conceptually, we design the transfer VASS in such a way that it starts from a valid configuration and:
\begin{itemize}
    \item all reachable configurations (except some intermediate configurations inside gadgets described in~\Cref{prop:cycle,prop:lasso}) are considerable, 
    \item from any reachable configuration (except the intermediate configurations inside gadgets) that is not valid only invalid configurations can be reached,
    \item the target configuration is valid,
    \item if the final configuration of the automaton with memory can be reached then a valid configuration with a specific final state can be reached in the transfer VASS.
\end{itemize} 

The hard part of the construction is the gadgets corresponding to test transitions in the automaton with memory. We explain them now. All the technical constructions below follow the same pattern.

\begin{proposition}\label{prop:meta}[Meta proposition]
Let $G$ be a graph of some type, with a particular split of vertices into $C_{Pos},C_{Neg},C_{Cell}$, and $x\in C_{Cell}$. Then, in polynomial time, it is possible to construct a transfer VASS
with $3$ special states $p_{init},p_{-1},p_{1}$ such that:
\begin{itemize}
        \item  $G$ is its transfer graph,
        \item  {\bf (error propagation)}, if $v$ is a considerable counter valuation, but it is not valid  and from $(p_{init},v)$ it is possible to reach a configuration $(p_{-1},v')$ or $(p_{1},v')$ then $v'$ is not valid as well.
        \item  {\bf (value $-1$ guessing)}, if $v$ is a valid counter valuation such that $v(x)=-1$ and from 
        $(p_{init},v)$ it is possible to reach $(p_j,v')$, where $j\in \set{-1,1}$ then $v'$ is a valid counter valuation if and only if $v'=v$ and $j=-1$.
        \item  {\bf (value $1$ guessing)}, if $v$ is a valid counter valuation such that $v(x)=1$ and from 
        $(p_{init},v)$ it is possible to reach $(p_j,v')$, where $j\in \set{-1,1}$ then $v'$ is a valid counter valuation if and only if $v'=v$ and $j=1$.
    \end{itemize}   
\end{proposition}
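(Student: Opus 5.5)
The plan is to treat \Cref{prop:meta} as a template. It is not proved once for all graphs. Instead it is instantiated separately for each type of bad vertex that a clean-simple graph can contain (type $2$ via the two good neighbours, type $4$ via the chain leading to a type-$2$ vertex, type $1$ via the simple cycle, type $3$ via the lasso). Each instance becomes one of \Cref{prop:cycle,prop:Y,prop:lasso}.

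All instances share one mechanism. From $p_{init}$ the gadget branches nondeterministically into a branch for guess $j=1$ ending in $p_1$ and a branch for guess $j=-1$ ending in $p_{-1}$. The branch for guess $j$ does three things:
\begin{enumerate}
\item it shifts $x$ by $-j$, so $x$ becomes $0$ if the guess is right and $\pm 2$ otherwise;
\item it moves the content of $x$ by transfers into a ``garbage'' counter of the appropriate sign class;
\item it restores $x$ to $j$ by adding $j$ to the now-empty counter.
\end{enumerate}
The sign convention is what makes wrong guesses permanently visible. For guess $1$ the wrong residue is $-2$, so it is dumped into a counter of $C_{Neg}$. For guess $-1$ the residue is $+2$, so it is dumped into a counter of $C_{Pos}$. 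Every transition of the whole construction adds only nonnegative amounts to $C_{Pos}$ counters and only nonpositive amounts to $C_{Neg}$ counters. Hence a nonzero value in $C_{Pos}\cup C_{Neg}$ can never be cancelled, which gives \textbf{error propagation}. The two guessing properties then follow by a direct computation: starting from a valid $v$, the final valuation equals $v$ exactly when the guess matches $v(x)$, and otherwise some garbage counter is nonzero.

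For the type-$2$ instance (\Cref{prop:Y}) this is immediate. Let $x$ be bad of type $2$ with good out-neighbours $r$ (red) and $g$ (green), as guaranteed by \Cref{def:simple_and_clean}. Put the red vertices into $C_{Neg}$, the green ones into $C_{Pos}$, and use the single transfers $f_x^r$ and $f_x^g$. Good vertices have no outgoing edges, so garbage never leaves them. For type $4$ vertices the single out-edge leads, after finitely many steps, to a type-$2$ vertex. Here I would first move the residue of $x$ along this chain into that type-$2$ vertex, and then apply the type-$2$ dump. This needs the intermediate chain vertices to hold $0$ in valid configurations, so they are placed in $C_{Pos}\cup C_{Neg}$ with value $0$, or used only transiently.

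The step I expect to be the main obstacle is the cycle and lasso case (\Cref{prop:cycle,prop:lasso}). There is no good sink there, so residue pushed along the cycle comes back, and intermediate configurations are not considerable; this is why the paper exempts them. My first attempt would reserve two further cycle vertices $a,b$ as accumulators. With $x$ in $C_{Cell}$, I would transfer the residue of $x$ around the cycle while adding compensating constants. Once a full revolution is completed, the residue should be concentrated in $a$ for one guess and in $b$ for the other, with $a$ in $C_{Pos}$ and $b$ in $C_{Neg}$, and all other cycle vertices back to $0$. This requires the transfers into $a$ and $b$ to merge rather than erase their previous contents, so that earlier errors persist.

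For the lasso I would run the same gadget on the cycle part and treat the handle like a type-$4$ chain. The difficulty is that each cycle vertex is both a source and a target, so the ordering of transfers must be checked carefully. The paper's constant $5$ in \Cref{lem:main_lemma_simple}, i.e.\ roughly five bad vertices per memory cell, suggests that this is the budget for such accumulators.
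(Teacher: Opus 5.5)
Your guess--subtract--dump--restore scheme and the direct type-$2$ case agree with the paper. There is, however, a genuine gap: you have no mechanism for testing a cell that is not adjacent to a sign-matching sink while leaving the other cells intact, and your workaround for chains fails.

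\Cref{prop:Y} fixes the split $C_{Cell}=\{l_1,\dots,l_m\}$, $x=l_1$. You instead make $l_2,\dots,l_m$ zero-valued counters in $C_{Pos}\cup C_{Neg}$ and push the residue of $x$ through them. This is a different statement, with one cell per $Y$ where the paper has $m$, so the bound $|C_{Cell}|\ge n/5$ needed in \Cref{clm:last_claim} is lost. The same happens on cycles: your plan (one cell, two accumulators, everything else back to $0$) gives one cell instead of $\lfloor n/3\rfloor$. The constant $5$ is the worst cell ratio, attained on $5$-cycles, not a budget for accumulators.

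Your workaround also breaks error propagation. Take $l_2\in C_{Pos}$ and start from the considerable but invalid valuation $x=-1$, $l_2=2$, all else $0$. Guessing $1$ gives $x-1=-2$, the transfer $x\to l_2$ cancels it against the $2$, zero travels on to $c_l$, and $x\gets 1$ ends in a valid valuation. The case $l_2\in C_{Neg}$ with guess $-1$ is symmetric.

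So your invariant, that every transition adds only nonnegative amounts to $C_{Pos}$ and nonpositive amounts to $C_{Neg}$, fails for your own gadget. It cannot hold on a cycle either, because for guess $-1$ the residue $+2$ must pass through $c_{e+1}\in C_{Neg}$; only the net effect of a complete gadget can be monotone. ``Compensating constants'' cannot undo merges with unknown contents.

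The idea you are missing is rotation through a hole.
\begin{itemize}
\item In \Cref{prop:Y} the gadget guesses the value $a$ of $l_m$ and sets $l_m\gets l_m-a$. It transfers $l_m$ to $c_l\in C_{Neg}$ if $a=1$ and to $c_r\in C_{Pos}$ if $a=-1$. It then performs $l_{m-1}\to l_m,\dots,l_1\to l_2$, each into a just-emptied counter so nothing merges, and finally sets $l_1\gets l_1+a$. This rotates the chain by one. After $m$ rounds every cell is back in place and each guess has been checked against a sink of the right sign. The last guess concerns the original $l_1$ and decides whether to exit at $p_1$ or $p_{-1}$.
\item In \Cref{prop:cycle} the guess $1$ dumps $c_e-1$ directly into $c_{e+1}\in C_{Neg}$. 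The guess $-1$ adds $1$ to $c_e$ and transfers $c_{e-1}\to c_e$; this is the only merge, it is with a $C_{Pos}$ counter, and it empties $c_{e-1}$. It then continues with transfers always into the currently empty counter, $n(n-1)$ in total, i.e.\ a full revolution. The merged value ends in $c_{e-1}$, everything else is restored, and $c_e\gets c_e-1$ restores the cell. Intermediate configurations are not considerable, but the net effect is monotone, which is exactly what error propagation needs.
\item \Cref{prop:lasso} combines the two. A second guess zeroes the cell $c_2$, creating a hole that is carried along during rotations. The cycle is rotated so that $c_e$ holds the content of $c_3$ or $c_1$ according to $a$, and $l_m-a$ is dumped there. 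The handle is then rotated and the cycle rotated back. The whole procedure is repeated $m$ times, as for the chain.
\end{itemize}
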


Since all the constructions follow the same pattern, we specify only the graph, the splitting, and $x$.
\begin{proposition}\label{prop:cycle}
    The meta proposition~(\Cref{prop:meta}) for:
    Let $G$ be a cycle, with vertices $c_1,c_2,\ldots, c_{3k+r}$, where $0<k\in \Nat$ and $r<3$.
    \begin{itemize}
        \item     $C_{Pos}=\set{c_i|i=3j+1, j<k} \cup \set{c_i|i>3k}$,
        \item $C_{Neg}=\set{c_i|i=3j+3, j<k}$,
        \item $C_{Cell}=\set{c_i|i=3j+2, j<k}$.
    \end{itemize}
    We take $x\in C_{Cell}$.
\end{proposition}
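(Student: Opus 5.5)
The plan is to build the gadget directly on the cycle $c_1\to c_2\to\cdots\to c_{3k+r}\to c_1$. Every transition will be a single transfer along a cycle edge, possibly combined with a constant update $\delta$, so the transfer graph is exactly $G$. Fix $x=c_{3j+2}\in C_{Cell}$. Its only out-neighbour is the negative counter $y=c_{3j+3}$, and $y$'s only out-neighbour is $z=c_{3j+4}$ (indices taken cyclically). By the splitting, $z$ is a positive counter: either $c_{3(j+1)+1}$, or $c_{3k+1}$ when $r>0$, or $c_1$. From $p_{init}$ there will be two branches, one ending in $p_1$ and one in $p_{-1}$. Throughout I will use one monotonicity fact. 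Adding a nonnegative quantity to a $C_{Pos}$ counter, or a nonpositive quantity to a $C_{Neg}$ counter, keeps the valuation considerable, and the counter is $0$ afterwards only if both summands were $0$. This is what gives error propagation.

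\emph{Branch for $1$.} Use one transition $(p_{init},f_x^{y},\delta,p_1)$ with $\delta[x]=1$ and $\delta[y]=-1$. Afterwards $x=1$ and $y=v(y)+v(x)-1$. If $v(x)=1$, then $y=v(y)$ and every other counter is unchanged, so a valid input gives $v'=v$. If $v(x)=-1$, then $y=v(y)-2\le -2$, which is considerable but never valid. If the input is invalid, the added quantity $v(x)-1\le 0$ cannot repair $y$, and no other counter is touched, so the output stays invalid.

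\emph{Branch for $-1$.} Here the difference $v(x)+1\in\{0,2\}$ is nonnegative, so it must end up in a positive counter, and the only route from $x$ to a positive counter passes through $y$. I will first transfer $x$ into $y$ with $\delta[x]=-1$ and $\delta[y]=1$. This gives $y=v(y)+v(x)+1$, an intermediate and possibly non-considerable value. Then I transfer $y$ into $z$, which resets $y$ to $0$ and adds $v(y)+v(x)+1$ to $z$. For a valid input this gives $v'=v$ exactly when $v(x)=-1$, and $z=2$, which is invalid, when $v(x)=1$.

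The step I expect to be the main obstacle is error propagation in the $-1$ branch. The old content $v(y)\le 0$ is mixed into $z\ge 0$, and an invalid pair such as $v(y)=-1$, $v(z)=1$ could cancel to a valid output. To fix this I would first separate the negative content of $y$ from $z$ before $y$ is used as a carrier. Concretely, I would do a full turn of the cycle. Moving backwards from $y$, perform the transfers $c_{i}\to c_{i+1}$ one at a time, each onto a counter that has just been emptied. The first such transfer goes into the free slot created by resetting $x$ with the transfer $x\to y$ followed by a compensating constant. Doing this shifts every counter one step forward, and repeating it three times restores the pattern Pos/Cell/Neg, so that $C_{Pos}$ values only ever merge with $C_{Pos}$ values and $C_{Neg}$ values only with $C_{Neg}$ values. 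The extra $r<3$ positive counters $c_{3k+1},\dots,c_{3k+r}$ act as slack: merging positive counters among themselves is harmless by the monotonicity fact. I would then prove the three bullets of \Cref{prop:meta} by tracking the valuation through this shift and through the two branches above.
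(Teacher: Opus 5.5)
Your branch for $1$ is correct and is exactly the paper's Part~$1$. The real gap is in the branch for $-1$, and it is more basic than the $y$/$z$ cancellation you flag. Error propagation already breaks at the first merge: the transfer $x\to y$ with a compensating $+1$ on $y$, which you keep in your repaired version to open the free slot. That step adds the nonnegative surplus $v(x)+1\in\{0,2\}$ to the $C_{Neg}$ counter $y$, which is exactly what your own monotonicity fact forbids.

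Here is a concrete failure. Take the considerable but invalid valuation with $v(y)=-2$, $v(x)=1$, and every other $C_{Pos}\cup C_{Neg}$ counter equal to $0$. After the merge, $y=0$. Every later transfer of your turn goes into a just-emptied counter, so a full turn only returns all contents to their places. You therefore reach $p_{-1}$ with $x=-1$ and $y=0$, which is a valid valuation.

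This input is not artificial. A wrong guess $1$ on a cell holding $-1$ produces exactly $y=-2$, $x=1$, and this branch then repairs it. No choice of constants avoids this, because correctness on valid inputs with $v(x)=-1$ forces the final content of $y$ to be $v(y)+v(x)+1$. The ``repeat three times'' remark does not help either. A rotation by anything other than a multiple of $3k+r$ permutes cell contents among cells, and when $r\neq 0$ it misaligns types at the wrap-around. So it cannot return $v'=v$.

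The missing idea is to absorb the surplus on the other side of $x$. Transfers preserve the cyclic order of contents, so the content of $x$, moving forward, always meets the content of $y$ first. A sign-compatible partner must therefore come from behind: write $x=c_e$ and use its predecessor $c_{e-1}\in C_{Pos}$, merged by the transfer $c_{e-1}\to c_e$.

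This is what the paper does:
\begin{itemize}
\item It first adds $1$ to $c_e$ and then transfers $c_{e-1}\to c_e$. This gives $v(c_{e-1})+v(c_e)+1$, a sum of two nonnegative terms that is $0$ iff $v(c_{e-1})=0$ and $v(c_e)=-1$.
\item This single merge also creates the free slot.
\item With $n=3k+r$, it then performs $n$ rounds of $n-1$ transfers, each into the counter just emptied. This carries the merged value once around the cycle to $c_{e-1}$, leaves $c_e=0$, and restores all other counters.
\item A final $c_e\gets c_e-1$ restores the cell.
\end{itemize}
The intermediate configurations are non-considerable, which is why the paper exempts gadget-internal configurations from the invariant.

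With this one change of which merge opens the slot, your full-turn idea becomes the paper's proof. The three bullets of \Cref{prop:meta} then follow from your monotonicity fact applied to $c_{e-1}$.
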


\begin{proof}
The VASS has three parts $p_{init}$, Part $-1$ and Part $1$. In $p_{init}$ we choose whether $v(x)=-1$ or $v(x)=1$ and depending on this the VASS goes to the corresponding part. Each part is responsible for verifying whether the initial choice was correct.
Let $n=3k+r$, and $x=c_e$.

\noindent{\bf Part $-1$.}
 The control automaton is a path $p_{1}^{-1}, p_{2}^{-1}, p_{3}^{-1}\ldots p_{n\cdot (n-1)+1}^{-1}, p_{n\cdot (n-1)+3}^{-1}=p_{-1}$ that 
performs the following sequence of operations:

\begin{algorithmic}
\State $(p_{1}^{-1},\id, 
        c_e\gets c_e +1,
        p_{2}^{-1})$
\State $\ell \gets 2$
\For{$i \gets 0$ \textbf{to} $n-1$}
    \For{$j \gets 1$ \textbf{to} $n-1$}
        \State $(p_{\ell}^{-1},
        c_{(e+i-j)\bmod n}\ \text{transfer to}\
        c_{(e+i-j+1)\bmod n},
        \mathbf{0}, 
        p_{\ell+1}^{-1})$
        \State $\ell \gets \ell + 1$
    \EndFor
\EndFor
\end{algorithmic}

As we assume $v(c_e)=-1$ we add $1$ to it, and then we start to do cyclic shifts, until the counter $v(c_{e-1})$ will not be transferred back to itself. As we cyclic shift backward we are guaranteed that values of all counters except $c_e$ and $c_{e-1}$ are preserved. The value of $c_e$ is $0$ afterwards, and the new value of $c_{e-1}$ equals $v(c_{e-1})+ v(c_e)+1$.
Note that $c_{e-1}\in C_{Pos}$ and $v(c_e)\geq -1$ so
$v(c_{e-1})+ v(c_e)+1\geq 0$. Also, $v(c_{e-1})+ v(c_e)+1$ equals $0$ only if 
$v_{e-1}=0$ and $v_{e}=-1$, so the guess was correct.

What remains is to repair the value of $v(c_e)$ to $-1$, so we finish with:

\begin{algorithmic}
\State $(p_{n(n-1)+2}^{-1},\id,
        c_e\gets c_e-1,
        p_{n(n-1)+3}^{-1})$
\end{algorithmic}

\noindent{\bf Part $1$.}
Here the situation is much simpler.
The control automaton is a path $p_{1}^{1}, p_{2}^{1}, p_{3}^{1}, p_{4}^{1} = p_{1}$ that 
performs the following sequence of operations:

\begin{algorithmic}
\State $(p_{1}^{1},\id,
        c_e\gets c_e -1,
        p_{2}^{1})$
\State $(p_{2}^{1},
        c_{e}\ \text{transfer to}\ c_{e+1 \bmod n},
        \mathbf{0},
        p_{3}^{1})$
\end{algorithmic}

As we assume $v(c_e)=1$ we subtract $1$ from it, and then we transfer its contents to the next counter on the cycle.
The value of $c_e$ is $0$ afterwards, and the new value of $c_{e+1}$ equals $v(c_{e+1})+ v(c_e)-1$.
Note that $c_{e+1}\in C_{Neg}$ and $v(c_e)\leq 1$ so
$v(c_{e+1})+ v(c_e)-1\leq 0$. Also, $v(c_{e+1})+ v(c_e)-1$ equals $0$ only if 
$v(c_{e+1})=0$ and $v(c_{e})=1$, so the guess was correct.

What remains is to repair the value of $v(c_e)$ to $+1$, so we finish with:

\begin{algorithmic}
\State $(p_{3}^{1},\id,
        c_e\gets c_e+1,
        p_{4}^{1})$  \qedhere
\end{algorithmic}
\end{proof}

\begin{proposition}\label{prop:Y}
    The meta proposition~(\Cref{prop:meta}) for:
    Let $G$ be a $Y$ shape, with vertices $l_1,l_2,\ldots, l_m,c_l,c_r$ where $l_1,l_2,\ldots, l_m,c_l$ and $l_1,l_2,\ldots, l_m,c_r$ form two paths.   
    \begin{itemize}
        \item $C_{Pos}=\set{c_r}$,
        \item $C_{Neg}=\set{c_l}$,
        \item $C_{Cell}=\set{l_1 \ldots l_m}$.
    \end{itemize}
    We take $x=l_1$.
\end{proposition}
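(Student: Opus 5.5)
The plan is to mimic the two-part structure of the proof of \Cref{prop:cycle}. From $p_{init}$ the VASS nondeterministically enters Part $-1$ or Part $1$, which end in $p_{-1}$ and $p_{1}$ respectively. In Part $j$ the counter $x=l_1$ is first shifted by $-j$, so that on a valid valuation with the correct guess it becomes $0$. On any considerable valuation it then lies in $\{0,2\}$ for $j=-1$ and in $\{-2,0\}$ for $j=1$. The content of $l_1$ is then routed along the path $l_1\to l_2\to\cdots\to l_m$ into the sink matching its sign: $c_r\in C_{Pos}$ for $j=-1$ and $c_l\in C_{Neg}$ for $j=1$. Finally $l_1$ is restored by adding $j$. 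As in \Cref{prop:cycle}, the sink stays on the correct side of $0$ in every case, and it stays equal to $0$ exactly when it was $0$ before and the guess was correct. This gives value guessing and error propagation at the sink; a nonzero counter in $C_{Pos}\cup C_{Neg}$ is never touched afterwards, because the only transfers into $c_l,c_r$ add quantities of the right sign.

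The difficulty is that routing $l_1$ through $l_2,\dots,l_m$ merges it with the contents of those cells. Before moving $l_1$ I therefore clear the path, from the end backwards: for $i=m,m-1,\dots,2$ I guess $g_i\in\{-1,1\}$, add $-g_i$ to $l_i$, and transfer $l_i$ through the already-empty cells $l_{i+1},\dots,l_m$ into $c_r$ if $g_i=-1$ and into $c_l$ if $g_i=1$. By the same sign argument, each dump keeps the valuation considerable, and it keeps it valid only if $g_i$ was the true value. Once $l_2,\dots,l_m$ are $0$, I dump the adjusted $l_1$, and then I add $g_i$ back to each $l_i$ and $j$ to $l_1$. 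On a valid input with all guesses correct this returns exactly $v$ at $p_j$. Any wrong guess (of $j$ or of some $g_i$) leaves a nonzero value in $c_l$ or $c_r$ that can never be cancelled. The transfer graph used is exactly the $Y$: every edge $l_i\to l_{i+1}$, $l_m\to c_l$ and $l_m\to c_r$ is used, and no other transfer occurs. If in some later variant an edge were left unused, \Cref{lem:remove_transfers} would add it.

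The main obstacle I expect is keeping the construction polynomial. Restoring $l_2,\dots,l_m$ needs the guesses $g_2,\dots,g_m$, and remembering all of them in the control state costs $2^{m-1}$ states. My first attempt is to avoid global memory by interleaving. I would process cells one at a time in a nested order, e.g.\ via a recursive gadget $T_i$ that empties $l_i,\dots,l_m$ and later refills them, implemented by a loop-free control path that branches on $g_i$ only locally. The key point to verify is whether the refill of $l_i$ can be performed immediately after the part that depends on it, so that the two branches for $g_i$ re-merge into a common state and only $O(m)$ states are needed. If this interleaving fails, my fallback is to use the freedom in the statement of \Cref{prop:meta}. There only $x=l_1$ must be tested, so the other cells $l_2,\dots,l_m$ might be re-encoded elsewhere by the enclosing construction rather than restored. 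I would then check the error-propagation clause most carefully, since intermediate configurations inside the gadget may be non-considerable.
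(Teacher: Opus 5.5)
\textbf{There is a genuine gap: the construction you specify is not polynomial, and neither of your repairs fixes it.}

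Your sign bookkeeping is right and matches the paper's. You subtract the guess, send the residual of a guessed $1$ to $c_l\in C_{Neg}$ and of a guessed $-1$ to $c_r\in C_{Pos}$, and never touch a sink otherwise. The problem is the one you flag yourself.

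\emph{Why interleaving fails.} In your nested order, the residual of $l_i$ can reach a sink only through $l_{i+1},\ldots,l_m$. So none of these cells may be refilled before $l_i$ is dumped. A refilled cell would otherwise either be carried into the sink together with the residual, forcing you to subtract its guess again and hence still remember it, or keep the residual and possibly mask the error. For example, a residual $-2$ arriving in a cell refilled with $1$ gives $-1$, which is a legal cell value. The refills are therefore forced into LIFO order. The control state must then hold $g_{i+1},\ldots,g_m$ at once, which is the $2^{m-1}$ blow-up again.

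\emph{Why the fallback fails.} The meta proposition requires $v'=v$ at $p_j$. Moreover, in the proof of \Cref{clm:last_claim} the counters $l_2,\ldots,l_m$ are themselves memory cells of the automaton. There is no other place where their contents could be kept.

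\textbf{The missing idea} is to write each guess back at the head of the path, after shifting the rest forward. Since $l_1$ has no incoming edges, no later transfer merges anything into it. The paper's subroutine $ShiftY$ works as follows.
\begin{enumerate}
\item Guess the value $a$ of $l_m$ (not of $l_1$) and add $-a$ to $l_m$.
\item Transfer $l_m$ to $c_l$ if $a=1$ and to $c_r$ if $a=-1$.
\item Perform $l_{m-1}\to l_m,\ldots,l_1\to l_2$.
\item Add $a$ to $l_1$.
\end{enumerate}
With a correct guess this rotates the path contents by one position, and $a$ can be forgotten at the end of the subroutine.

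Running $ShiftY$ $m$ times restores every cell. In the $m$-th run, $l_m$ holds the original content of $l_1$, so that last guess decides whether to exit to $p_{-1}$ or $p_1$. Only one guess is stored at any time, which gives $O(m^2)$ states. Error propagation and value guessing then follow from your sign argument applied to each of the $m$ dumps.
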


\begin{proposition}\label{prop:lasso}
    The meta proposition~(\Cref{prop:meta}) for:
    Let $G$ be a lasso, with vertices $l_1,l_2, \ldots, l_m,c_1\ldots, c_{3k+r}$, where $0<m,k\in \Nat$ and $r<3$. Vertices 
    $l_1,l_2,\ldots, l_m, c_e$ form a path for some $1 \leq e \leq 3k+r$, while $c_1,c_2,\ldots c_{3k+r}$ form a cycle.  
    \begin{itemize}
        \item $C_{Pos}=\set{c_i|i=3j+1, j<k} \cup \set{c_i|i>3k}$,
        \item $C_{Neg}=\set{c_i|i=3j+3, j<k}$,
        \item $C_{Cell}=\set{c_i|i=3j+2, j<k}\cup \set{l_i|i\leq m}$.
    \end{itemize}
    We take $x=l_1$. It is not important to which node of the cycle the vertex $l_m$ is connected, i.e., what the value of $e$ is.
\end{proposition}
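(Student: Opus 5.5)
The plan is to follow the template of \Cref{prop:cycle}: build a part for the guess $1$ and a part for the guess $-1$, each a path in the control automaton. The key difference from the cycle case is that $x=l_1$ is the source of the lasso's tail. Its content can reach a counter of $C_{Pos}\cup C_{Neg}$ only by passing through $l_2,\dots,l_m$, which are themselves cells, and then entering the cycle at $c_e$. Since the tail has no backward edges, nothing that enters $l_{i+1}$ can ever return to $l_i$. So the gadget must ensure that whatever it pushes forward is zero on a correct guess.

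\textbf{The basic step.} This is the ``subtract, transfer, restore'' pattern. For a guess $b\in\{-1,1\}$ at cell $l_i$, we add $-b$ to $l_i$, transfer $l_i$ into $l_{i+1}$ (into $c_e$ when $i=m$), and then add $b$ back to $l_i$. Because $v(l_i)\in\{-1,1\}$, the residue $v(l_i)-b$ is $0$ on a correct guess and $-2b$ on a wrong one. In particular its sign is determined by the guess: it is $\le 0$ for $b=1$ and $\ge 0$ for $b=-1$. Afterwards $l_i$ holds $b$ again, and the residue sits in the next counter.

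\textbf{Testing inside the cycle.} Once a residue of known sign arrives in $c_e$, we reuse the rotation technique from \Cref{prop:cycle}. On a valid configuration every $C_{Pos}$ and $C_{Neg}$ counter is $0$, so such a counter can serve as a hole, and cyclic shifts through that hole permute the cycle contents without merging two cells. The gadget rotates until the residue in $c_e$ is transferred into a $C_{Neg}$ counter (for $b=1$) or a $C_{Pos}$ counter (for $b=-1$). It then completes the rotation so that every cycle counter returns to its original position, exactly as the $n(n-1)$ shifts do in \Cref{prop:cycle}. The merged counter remains considerable, and it is zero if and only if the residue was $0$ and the counter itself was $0$. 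This gives error propagation for free: a nonzero Pos or Neg counter can only grow in absolute value in the correct direction.

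\textbf{Order of the tests.} The cells are verified from the end of the tail backwards, so that a residue never has to be combined with an unverified guess of unknown sign. First we verify $l_m$ as above. Then, for $i=m-1,\dots,1$, we push the residue of $l_i$ forward through $l_{i+1},\dots,l_m$. Each intermediate cell $l_j$ has already been checked, so at $l_j$ we subtract its value, transfer everything to $l_{j+1}$, and restore; only the residue of $l_i$ travels. Finally we test that residue in the cycle. The value tested by the gadget is $x=l_1$, and the state $p_{-1}$ or $p_1$ is chosen by the guess for $l_1$. The guesses for the other tail cells are auxiliary and are checked in the same way.

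\textbf{Main obstacle.} The hardest step is to do this with a polynomial control automaton, without storing all guessed values of $l_2,\dots,l_m$ (that would need $2^m$ states), and to prevent cancellation of two wrong residues of opposite sign. My first attempt would be to check the tail one edge at a time. When the residue of $l_i$ sits in $l_{i+1}$, the gadget re-guesses $v(l_{i+1})$ and immediately runs the single-cell test on $l_{i+1}$. That test succeeds only if the residue is $0$ and the re-guess is right, so only $O(1)$ guesses are ever held in the state. The point to verify is that a nonzero residue arriving in $l_{i+1}$ cannot be cancelled. Combined with the value in $l_{i+1}$ and the re-guess, it produces a contribution in $\{0,\pm2,\pm4\}$, which could be zero. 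To block this I would route the combined contribution into $C_{Pos}$ and $C_{Neg}$ counters twice, once for each sign of the re-guess. If this fails, the fallback is to restrict the gadget to testing only $l_1$ and to argue that the positions of the other tail cells never need to be tested by this gadget.
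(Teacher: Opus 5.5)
\textbf{Gap: the residue cannot travel through the occupied tail.} The problem is at the step you flag as the main obstacle, and neither of your remedies fixes it.

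\begin{itemize}
\item \emph{The merge is forced.} The counters $l_{i+1},\dots,l_m$ hold cell values, and the only edge leaving $l_j$ goes to $l_{j+1}$. So a residue pushed out of $l_i$ is necessarily merged with the unknown value $v(l_{i+1})$. Restoring $l_{i+1}$ afterwards requires knowing that value.
\item \emph{Storing the values is too expensive.} Keeping the already verified values in the control state needs exponentially many states.
\item \emph{Re-guessing is unsound, and no later check can repair it.} Take $v(l_i)=1$ with guess $b=-1$, and $v(l_{i+1})=-1$ with re-guess $1$. Then $l_{i+1}$ holds $1$ and the residue pushed on is $0$. From then on the run is indistinguishable, control state included, from a correct run on the valuation with $l_i$ and $l_{i+1}$ swapped. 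You end in a valid $v'\neq v$; for $i=1$ you even end in the wrong state $p_{-1}$.
\item \emph{Routing ``twice'' is not implementable.} A transfer moves the whole content of a counter and cannot copy it, and $l_{i+1}$ has no out-edge other than to $l_{i+2}$. In any case, once the sum is $0$ the information is lost.
\item \emph{The fallback does not help.} Testing only $l_1$ still requires its content to cross the occupied counters $l_2,\dots,l_m$.
\end{itemize}

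\textbf{Missing idea: vacate the receiving counter first.} Never merge inside the tail; use the tail as a shift register. In the paper's subroutine $\mathit{ShiftLasso}$ one proceeds as follows.
\begin{enumerate}
\item Guess $a$ for $l_m$ and subtract $a$ from $l_m$.
\item Rotate the cycle so that $c_3\in C_{Neg}$ (for $a=1$) or $c_1\in C_{Pos}$ (for $a=-1$) sits at $c_e$. Either counter is $0$ in a valid configuration.
\item Transfer $l_m$ into $c_e$.
\item Perform $l_{m-1}\to l_m,\dots,l_1\to l_2$, each into a counter that is now empty.
\item Add $a$ to the emptied $l_1$ and rotate the cycle back.
\end{enumerate}
The value of $l_m$ can never physically return, so it is recreated at $l_1$ from a single bit of control state. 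Each residue meets only a zero counter of the matching sign.

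Repeating this $m$ times rotates the tail back to its original contents. Every original value is tested in isolation exactly once, when it leaves $l_m$. The guess of the last round is the original $v(l_1)$, and it selects $p_{-1}$ or $p_1$.

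The order at the junction also matters: the cycle must be rotated \emph{before} the transfer into $c_e$. In your plan the residue is transferred into $c_e$ first, where it may merge with a $C_{Cell}$ value and face the same cancellation.
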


The proofs of~\Cref{prop:lasso,prop:Y} are similar to~\Cref{prop:cycle}. Although they are tricky, they were moved to~\Cref{app:sec_six}.

\begin{proof}[Proof of~\Cref{clm:last_claim}]

Our intention is to use some bad counters of the transfer VASS as memory cells, and use Propositions~\ref{prop:cycle},~\ref{prop:Y},~\ref{prop:lasso} to encode test transitions of the automaton with memory. Thus, we need to define subsets
$C_{Pos}, C_{Neg}, C_{Cell}$ of vertices in $G$ in such a way that they are consistent with Propositions~\ref{prop:cycle},~\ref{prop:Y},~\ref{prop:lasso}. It is possible as the graph $G$ is a clean-simple graph, so its good nodes can be coloured red and green, and we put green vertices to the set $C_{Pos}$ while red to $C_{Neg}$. Splitting bad vertices is trivial.

The number of vertices in $C_{Cell}$ is not smaller than $\frac{n}{5}$. This is because in $C_{Cell}$ we have all bad vertices except some bad vertices of type $3$, for which at most $2$ on every cycle may be outside $C_{Pos}\cup C_{Neg} \cup C_{Cell}$ and $\frac{1}{3}$ of remaining are in $C_{Cell}$. $\frac{1}{5}$ is the worst case when all bad nodes are on cycles of length $5$. 

We identify each cell with a vertex in $C_{Cell}$.
Recall that $M=(n,Q\times{\set{?,-1,1}}\times \set{1\ldots n},T,c_0,c_f)$. The set of states of the transfer VASS is
\[
Q\times{\set{?,-1,1}}\times \set{1\ldots n}\cup 
\bigcup_{t\in T\text{-}\mathrm{test}} Q_t
\] 
where $T\text{-}\mathrm{test}$ is the set of testing transitions and
$Q_t$ are states of transfer VASSes defined by Propositions~\ref{prop:cycle},~\ref{prop:Y}, and~\ref{prop:lasso}. 

Transitions are then translated directly. Instead of a testing transition for a memory cell $c$ we look to a corresponding counter and put a testing gadget provided by one of \Cref{prop:cycle,prop:Y}, and \Cref{prop:lasso}.
For an update transition for the cell $c$, we update the corresponding counter in the set $C_{Cell}$.

The initial and final markings are produced from $c_0$ and $c_f$ 
by keeping the state, setting all counters outside $C_{Cell}$ to $0$, and assigning counters in $C_{Cell}$ according to the corresponding cell values. \qedhere
\end{proof}

\section{Conclusion}

We have identified structural properties of transfer graphs
that determine the complexity of integer reachability in
single-transfer VASS. We showed that the problem belongs to NP
when the number of bad vertices is bounded, by establishing
a corresponding bound on the number of summary functions.
Conversely, we proved PSPACE-hardness for every polynomial-time
constructible family of transfer graphs containing sufficiently
many bad vertices. Together, these results characterize
the transfer structures responsible for the complexity gap
between NP and PSPACE.




\bibliography{refs}

\appendix

\section{Missing proofs from~\Cref{sec:four}}\label{app:sec_four}

\begin{proof}[Proof of~\Cref{lem:simpleTransferEq}]
We simply encode each transition with a complicated transfer 
function as a sequence of transitions with simple transfer functions. 
The only tricky part is if the complicated transfer contains a cyclic transfer, i.e., $c_1\to c_2, c_2\to c_3, c_3\to c_1$. Here, 
there is no order in which we can do simple transfers, so that the effect is the same. That is why we introduce the additional counter 
$c_{+}$, and do $c_1\to c_{+}$, next simple transfers $c_{3}\to c_{1}$, $c_2\to c_3$, and finally $c_{+}\to c_2$.
If the original cycle was of length $2$, its vertices could still be good in the transfer graph. 
But after the transformation the cycle involves one additional 
counter, its length grows to $3$ and all the involved vertices become bad of type $1$. \qedhere   
\end{proof}

\section{Missing proofs from~\Cref{sec:six}}\label{app:sec_six}

\begin{proof}[Proof of~\Cref{prop:Y}]
This construction relies on a subroutine $ShiftY(l_1,\ldots,l_m, j)$.
The subroutine has the following properties:
\begin{enumerate}
\item If the configuration is invalid, then after executing the subroutine it remains invalid.
\item The subroutine makes one nondeterministic guess of the value stored in counter $l_m$; if the guess is incorrect, then the configuration reached at the end of the subroutine is invalid.
\item If the nondeterministic guess is correct, then
the vector of counters reached at the end is the same as the initial one, except that the path $l_1,l_2,\ldots,l_m$ is cyclically shifted by one, i.e., counter $l_1$ gets the value of $l_m$, $l_2$ gets the value of $l_1$, and so on.
\end{enumerate}

The idea for the implementation of $ShiftY(l_1,\ldots,l_m, j)$ is as follows:
first, we guess whether the value stored in $l_m$ is $1$ or $-1$,
and depending on this guess we choose one of two branches of the control automaton. 

If we guess that it is $1$, then we subtract $1$ from counter $l_m$ and transfer this value to $c_l$.
Observe that if our guess is correct, the value of $c_l$ is unchanged, whereas if it is incorrect,
the value of $c_l$ becomes negative, so the vector of counters ceases to be valid.
The value stored in $l_m$ is now $0$, so we can perform a sequence of transfers 
$l_{m-1} \to l_{m}, l_{m-2}\to l_{m-1}, \ldots, l_{2}\to l_{3}, l_1\to l_2$.
Finally, since we guessed the value that was stored in $l_m$, we can restore it in $l_1$, i.e., we add $1$ to $l_1$.
This way we obtain the desired shift if the guess was correct, or we reach an invalid configuration if the guess was incorrect.    

If we guess that it is $-1$, then the construction is analogous, but we subtract $-1$ from $l_m$ and transfer the result to $c_r$.

To formalize the description given above, the control automaton (for $j \neq m$, for $j = m$ the final states differ) consists of two paths: $p_{j, init}, p_{j, 1}^{1}, \ldots, p_{j, m+1}^{1}, p_{j+1, init}$
or $p_{j, init}, p_{j, 1}^{-1}, \ldots, p_{j, m+1}^{-1}, p_{j+1, init}$ and the whole procedure can be written as below.

\begin{algorithmic}[1]
\Procedure{ShiftY}{$l_1,\ldots, l_m, j$}
    \State nondeterministically choose $a\in\set{-1,1}$
    \State $(p_{j, init}, \id, l_m \gets l_m - a, p_{j,1}^{a})$ 
    \If{$a=-1$}
        \State $(p_{j, 1}^{-1}, l_m\ \textbf{transfer to}\ c_r, \mathbf{0}, p_{j, 2}^{-1})$
    \Else
        \State $(p_{j, 1}^{1}, l_m\ \textbf{transfer to}\ c_l, \mathbf{0}, p_{j, 2}^{1})$
    \EndIf
    \For {$i \gets m\ \textbf{to}\ 2$}
        \State $(p_{j, m-i+2}^{a}, l_{i-1}\ \textbf{transfer to}\ l_i, \mathbf{0}, p_{j, m-i+3}^{a})$
    \EndFor
    \If{$j = m$}
        \State $(p_{j, m+1}^{a}, \id, l_1 \gets l_1 + a, p_{a})$
    \Else
        \State $(p_{j, m+1}^{a}, \id, l_1 \gets l_1 + a, p_{j+1, init})$
    \EndIf
\EndProcedure
\end{algorithmic}

Now, our intention is to apply the subprocedure $ShiftY(l_1, \ldots, l_m, j)$ for $1 \leq j \leq m$.
Observe that (assuming all the guesses are correct) before the last iteration, the original contents of $l_1$ have been transferred to $l_m$,
so the guess in the last iteration reflects the original content of $l_1$.
Thus at the end of the last iteration, based on $a$, we go to either $p_{-1}$ or $p_{1}$.
We also begin by setting $p_{init} = p_{1, init}$.

The whole procedure is now just a for-loop:
\begin{algorithmic}
    \For {$j \gets 1\ \textbf{to}\ m$}
        \State $ShiftY(l_1,\ldots,l_m, j)$
    \EndFor
\end{algorithmic}

Notice that in the control space we have to store only the last guessed value, thus omitting an exponential blow-up. \qedhere
\end{proof}

\begin{proof}[Proof of~\Cref{prop:lasso}]
    This proof is the most complicated one as it combines the main ideas of both~\Cref{prop:cycle} and~\Cref{prop:Y}.
    The procedure requires a subprocedure
    \hfill \break $ShiftLasso((l_1,l_2,\ldots l_m),(c_1,c_2\ldots c_n), e, j)$ where $e\leq n$ is the index of $c_e$, the vertex on the cycle to which $l_m$ is connected.
    The subprocedure is a direct counterpart to $ShiftY$ from \Cref{prop:Y} and has the following properties:
    \begin{enumerate}
        \item If the configuration is invalid, then after executing it, it remains invalid.
        \item The subprocedure nondeterministically guesses the value stored in counter $l_m$, and if the guess is incorrect then the configuration reached at the end of the subprocedure is invalid.
        \item If the nondeterministic guess is correct, then the subprocedure has one nondeterministic guess of the value stored in counter $c_2$, and if the guess is incorrect then the configuration reached at the end of the subprocedure is invalid.
        \item If the nondeterministic guess is correct, then
        the vector of counters reached at the end is the same as the initial one; only the path $l_1,l_2,\ldots l_m$ is cyclically shifted i.e., counter $l_1$ gets the value of $l_m$, $l_2$ gets the value of $l_1$ and so on.
    \end{enumerate}

    Again, the main algorithm will be repeating $ShiftLasso(\ldots)$ for $1 \leq j \leq m$,
    reaching the same content of the counters.
    At the end of the last iteration we will go to $p_j$ where $j$ is the content of the counter $l_1$.

    Conceptually $ShiftLasso(\ldots)$ is performed in $5$ steps.
    First, we guess whether the value $a$ of $l_m$ is $1$ or $-1$. Depending on this, the control automaton enters one of two branches.
    Also, if we guess that the value of $l_m$ is $1$ then we subtract $1$ from this counter, and if we guess that the value of $l_m$ is $-1$ then we add $1$ to this counter.
    So, after this operation the value of $l_m$ is $-2,0$, or $2$; it is $0$ if the guess was correct and nonzero if it was incorrect. 
    
    The second step is to cyclic shift the cycle in such way that in $c_e$ we have the value of $c_3$ or $c_1$, depending on the choice in the first step.
    If it was $1$, then we shift to $c_3$ and if it was $-1$ then we shift to $c_1$.
    Remember that $c_1\in C_{Pos}$ and $c_3\in C_{Neg}$.
    
    In the third step we transfer $l_m$ to $c_e$, which now stores the value of $c_3$ or $c_1$.
    Note that if the guess was correct the value of $c_e$ is unchanged.
    If it was incorrect then the value of $c_e$ is the original value of $c_3$ minus $2$ or the original value of $c_1$ plus $2$.
    
    In the fourth step we perform a chain of transfers,
    $l_{m-1}\to l_m$, $l_{m-2}\to l_{m-1},\ldots ,l_2\to l_3, l_1\to l_2$, and restore the guessed value in $l_1$.
    This implements the cyclic shift of the sequence $l_1,l_2,\ldots, l_m$. 
    
    What remains is to make a cyclic shift of the cycle so that 
    the value in $c_e$ goes back to the counter $c_1$ or $c_3$, 
    restoring the original distribution of values on the cycle $c_1,c_2, \ldots ,c_n$.
    This is correct under the assumption that the guess was correct; otherwise, we finish with an invalid vector of counter values.

As in the proof of~\Cref{prop:cycle}, when shifting the cycle, we guess the value of the cell counter
$c_2$ and temporarily make it zero.  We use $s$ to denote the current
position of this zero counter; this cursor is part of the control state and
is advanced after every elementary rotation. 

To rotate the cycle by one we use a subprocedure $RotateCycle(s, b, j, k,a)$, where
$j, k,a$ are used only to label states of the control automaton:
\begin{itemize}
    \item $j$ signifies the current iteration number of $ShiftLasso(\ldots)$.
    \item Inside $ShiftLasso(\ldots)$ we perform the rotations of the cycle many times, and $k$ remembers which rotation we are currently in.
\item $a$ is used to store the guessed value of $l_m$, and based on it we identify the branches of the control automaton.
\end{itemize}

As stated before, inside $RotateCycle(s,b,j,k,a)$ we temporarily make the counter $c_s$ zero and shift the cycle by one.
The transfer sequence depends on the guess of $c_s$ (which is called $b$): for the guess $-1$ it begins
with a transfer from $c_{s - 1}$ to $c_s$, whereas for the guess $1$ it begins with a transfer from $c_s$ to $c_{s+1}$.
Notice that to shift the cycle by one we only need $n-1$ transfers.

In $ShiftLasso(\ldots)$ the same guess $b$ will be retained in the control state during
all consecutive partial rotations and is restored only after the whole sequence of cycle rotations has
been performed. 

The following pseudocode makes the construction of $RotateCycle(\ldots)$ and $ShiftLasso(\ldots)$ explicit.  Indices of the
cycle are taken modulo $n$ (with representatives in $\set{1,\ldots,n}$).

\begin{algorithmic}[1]
\Procedure{RotateCycle}{$s,b,j,k,a$}
    \State $(p_{j,k,1}^{a,b,s}, \id, c_s \gets c_s - b, p_{j,k,2}^{a,b,s})$
    \If{$b=-1$}
        \For{$i \gets 1$ \textbf{to} $n-1$}
            \State $(p_{j,k,i+1}^{a,b,s}, c_{s-i}\ \textbf{transfer to}\ c_{s-i+1}, \mathbf{0}, p_{j,k,i+2}^{a,b,s})$
        \EndFor
        \State $(p_{j,k,n+1}^{a,b,s}, \id, c_{s+1} \gets c_{s+1} - 1, p_{j,k+1,1}^{a,b,s})$
    \Else
        \For{$i \gets 0$ \textbf{to} $n-1$}
            \State $(p_{j,k,i+2}^{a,b,s}, c_{s-i}\ \textbf{transfer to}\ c_{s-i+1}, \mathbf{0}, p_{j,k,i+3}^{a,b,s})$
            \EndFor
        \State $(p_{j,k,n+2}^{a,b,s}, \id, c_{s+1} \gets c_{s+1} + 1, p_{j,k+1,1}^{a,b,s})$
    \EndIf
\EndProcedure
\Statex

\Procedure{ShiftLasso}{$(l_1,\ldots,l_m),(c_1,\ldots,c_n),e, j$}
    \State \Comment{Step $1$.}
    \State nondeterministically choose $a\in\set{-1,1}$
    \State $(p_{j, init}, \id, l_m \gets l_m-a, p_{j, 1}^{a})$
        \Comment{$l_m=0$ exactly when the guess is correct}
    \If{$a=1$}
        \State $q\gets 3$
    \Else
        \State $q\gets 1$
    \EndIf
    \State \Comment{Step $2$.}
    \State $s \gets 2$ \Comment{We start rotating cycle from vertex 2}
    \State nondeterministically choose $b\in\set{-1,1}$
    \State $(p_{j, 1}^{a}, \id, \mathbf{0}, p_{j,1,1}^{a, b, s})$
    \For {$i \gets q\ \textbf{to}\ e-1$}
        \Comment {$\bmod n$}
        \State \Call{RotateCycle}{$s,b,j,(i-q \bmod n)+1,a$}
        \Comment{move the original content of $c_q$ to $c_e$}
        \State $s \gets s +1 \bmod n$ \Comment{ We update s as the cycle is rotated by $1$}

    \EndFor
    \State \Comment{the original value of $c_q$ has been rotated to $c_e$.}
    \State \Comment{Step $3$.}
    \State $(p_{j, (e-q\bmod n)+1, 1}^{a, b, s}, l_m\ \textbf{transfer to}\ c_e, \mathbf{0}, p_{j, shift, 1}^{a, b, s})$
    \State \Comment{Step $4$.}
    \For {$i \gets m \textbf{ to } 2$}
        \State $(p_{j, shift, m-i+1}^{a, b, s}, l_{i-1}\ \textbf{transfer to}\ l_{i}, \mathbf{0}, p_{j, shift, m-i+2}^{a, b, s})$
    \EndFor
    \State $(p_{j, shift, m}^{a, b, s}, \id, l_1 \gets l_1+a, p_{j, n+1,1}^{a, b, s})$
    \Comment{restore value of $l_m$ at the counter $l_1$}

    \State \Comment{ step 5}
    \For{$i \gets e\ \textbf{to}\ q-1 $}
        \Comment {$\bmod n$}
        \State \Call{RotateCycle}{$s, b, j, n+(e-i \bmod n)+1,a$}
        \State $s \gets s +1$
    \EndFor

    \If {$j \neq m$}
        \State $(p_{j,n+(q-e \bmod n)+1, 1}^{a, b, s}, \id, \mathbf{0}, p_{j+1, init})$
    \Else
        \State $(p_{j,n+(q-e \bmod n)+1, 1}^{a, b, s}, \id, \mathbf{0}, p_{a})$
    \EndIf
\EndProcedure
\end{algorithmic}

Note that for-loops rotating the cycle are taken modulo $n$. This means that if the starting value $y$ is greater
than the last value $z$ we actually perform a for loop from $y$ to $n+z$.

For $RotateCycle(s,b,j,k,a)$ the control automaton is just a path $p_{j, k, 1}^{a, b, s},\ldots, p_{j, k, n+1}^{a, b, s}, p_{j, k+1, 1}^{a,b,s}$
or $p_{j, k, 1}^{a, b, s},\ldots, p_{j, k, n+2}^{a, b, s}, p_{j, k+1, 1}^{a,b,s}$ depending on whether $b=1$ or $b=-1$, so that we are ready for the next execution of $RotateCycle$.

For $ShiftLasso(\ldots)$, the control automaton consists of four paths, defined by the choice of $a$ and $b$.
A single path goes through $p_{j, init}, p_{j, 1}^{a}, (p_{j,1,1}^{a,b,s}, \ldots, p_{j, (e-q\bmod n)+1, 1}^{a, b, s})$ --- first sequence of cycle rotations, then
$p_{j,shift,1}^{a,b,s}, \ldots, p_{j, shift, m}^{a, b, s}$ --- shifts on the path, then
$(p_{j, n+1,1}^{a, b, s}, \ldots, p_{j, n+(q-e\bmod n)+1,1}^{a,b,s})$ --- second sequence of cycle rotations and at the end
$p_{j+1, init}$ if $j < m$ or $p_{a}$ if $j = m$.
As in~\Cref{prop:Y} we begin with $p_{init} = p_{1, init}$. \qedhere
\end{proof}

\paragraph*{Declaration on the use of generative AI.}
The authors used a generative AI tool to assist in drafting the exposition of the
standard reduction in the proof of \Cref{thm:hardness_of_memory_automata}. The proof, its notation,
and all mathematical claims were subsequently checked, corrected, and verified by
the authors, who take full responsibility for the content.
\newpage

\section{Simple graphs}\label{app:two}

\begin{proof}[Proof of~\Cref{lem:simple_graph}]
We construct $G'$ by induction on the number of edges. The following claim will be useful in the construction.

\begin{claim}
If we can split $G$ into two parts $G_1$ and $G_2$ such that edges go only from $G_1$ to $G_2$, and all good vertices in $G_1$ are also good in $G$, then removing all edges going from $G_1$ to $G_2$ does not change the number of bad vertices.
\end{claim}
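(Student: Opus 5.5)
The plan is to compare badness vertex by vertex in $G$ and in the graph $\hat G$ obtained by deleting all edges from $G_1$ to $G_2$. Observe that $\hat G$ is the disjoint union of the induced subgraphs $G[G_1]$ and $G[G_2]$. I read the hypothesis ``good vertices in $G_1$ are also good in $G$'' as: every vertex that is good in the graph $G[G_1]$ is good in $G$.

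\textbf{Step 1 (locality).} First I would record that whether a vertex $v$ is bad depends only on the induced subgraph $G[\neighbourhoodStar{v}]$. Each condition of \Cref{def:badVertices} speaks about the following objects:
\begin{itemize}
\item simple cycles through $v$;
\item out-neighbours of $v$;
\item simple paths starting at $v$;
\item out-neighbours of vertices on such paths.
\end{itemize}
Every vertex mentioned in these objects is reachable from $v$. Conversely, any out-neighbour of a reachable vertex is itself reachable. Hence all these objects live inside $G[\neighbourhoodStar{v}]$.

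\textbf{Step 2 (vertices of $G_2$).} No edge of $G$ leaves $G_2$, so for $v\in G_2$ the set $\neighbourhoodStar{v}$ is contained in $G_2$. Its induced subgraph is identical in $G$ and in $\hat G$, because only edges starting in $G_1$ were removed. By Step 1, $v$ is bad in $\hat G$ iff it is bad in $G$.

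\textbf{Step 3 (vertices of $G_1$).} For $v\in G_1$, its status in $\hat G$ equals its status in $G[G_1]$, since $G_1$ is a connected-component union of $\hat G$. I would then show a monotonicity fact: if $H$ is a subgraph of $G$, every vertex bad in $H$ is bad in $G$. Each witness remains present in $G$:
\begin{itemize}
\item a simple cycle of length at least $3$ is still there;
\item out-degree can only grow;
\item a simple path to a type-$1$ vertex survives, and that target remains type $1$;
\item for type $4$, the path survives, and the set of out-neighbours not on the path only grows.
\end{itemize}
So the bad vertices of $G[G_1]$ are bad in $G$. Conversely, the hypothesis says the good vertices of $G[G_1]$ stay good in $G$. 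Therefore $\bad(\hat G)\cap G_1=\bad(G)\cap G_1$.

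Combining Steps 2 and 3 gives $\bad(\hat G)=\bad(G)$, which is stronger than equality of the counts. The only point needing care is the monotonicity in Step 3 for conditions 3 and 4. These quantify over a simple path together with a property of its endpoint, and one must check that adding edges never destroys that endpoint property; this is exactly the check carried out above.
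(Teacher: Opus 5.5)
Your proof is correct and follows the paper's route: vertices of $G_2$ keep their status because their reachable set and its induced subgraph are untouched, and vertices of $G_1$ that are bad in $G$ stay bad by the hypothesis. You additionally prove monotonicity (bad in a subgraph implies bad in $G$). This supplies the reverse inclusion that the paper's proof leaves implicit, since the paper only concludes that the number of bad vertices has not decreased, so your argument establishes the full equality the claim asserts.
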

\begin{proof}
Note that if $x$ is bad and if we remove some edges from $G$, but without touching edges between nodes in $\neighbourhoodStar{x}$, then $x$ stays bad.
This is the case for every vertex in $G_2$.
Since all bad vertices in $G$ restricted to $G_1$ are bad in $G_1$, the number of bad vertices has not decreased. \qedhere
\end{proof}

So we are ready to start the induction. If $G$ is empty then the claim trivially holds. Suppose that the induction hypothesis holds for all graphs with less than $m$ edges. We will show that it holds for $G$ with $m$ edges.

We consider several cases:
\begin{enumerate}
    \item If $G$ is disconnected then we can apply the induction hypothesis to each connected component and take the union of the resulting simple subgraphs.
    \item Else, if $G$ has an outgoing edge from a good vertex, then it can be removed without changing the number of bad vertices.
    We can apply the induction hypothesis to the resulting graph.
    \item Else, if we can find a simple cycle with at least $3$ vertices in $G$, we take all vertices from which this cycle is reachable.
    The induced subgraph of those vertices is $G_1$. Note that all vertices in $G_1$ are bad of type $1$ or $3$, and there are no edges from $G_2=G - G_1$ to $G_1$.
    Thus, we can use our claim and remove all edges from $G_1$ to $G_2$. Now for $G_1$ we can keep only the simple cycle and edges witnessing that from all other vertices it is possible to reach the cycle.
    This subgraph is simple and all nodes within it are bad.
    For $G_2$ we can apply the induction hypothesis and get a simple subgraph with at least $\frac{1}{3}|B(G_2)|$ bad vertices.
    The union of the simple subgraph from $G_1$ and the simple subgraph from $G_2$ is a simple subgraph of $G$ with at least $\frac{1}{3}|B(G)|$ bad vertices.
    \item Else, if in $G$ there is a bad vertex of type $4$ with two outgoing edges then one of them can be removed without changing the number of bad vertices.
    This is because there are no bad vertices of type $1$ or $3$.
    If we remove the correct edge, then this vertex will stay bad, and all other vertices of type $4$ from which it was reachable stay bad as well.
    We can apply the induction hypothesis to the resulting graph. 
    \item Else if in $G$ there is a bad vertex $x$ of type $2$ with at least two good out-neighbours $z$ and $y$,
    then as $G_1$ we take a subgraph of $G$ induced by all vertices from which $x$ can be reached plus $y$ and $z$.
    $G_2 = G - G_1$. One more time we can use the claim and separate $G_1$ from $G_2$.
    For $G_2$ we can apply the induction hypothesis and get a simple subgraph with at least $\frac{1}{3}|B(G_2)|$ bad vertices.
    For $G_1$ we simply remove edges from $x$ other than to $y$ and $z$.

    All the vertices other than $x, y, z$ are bad of type $4$ and by the previous item have one out-neighbour each,
    $x$ has two out-neighbours and $y, z$ have none.
    Thus, this trimmed $G_1$ is simple and all vertices except for $y$ and $z$ are bad in it.
    The union of the simple subgraph from $G_1$ and the simple subgraph from $G_2$ is a simple subgraph of $G$ with at least $\frac{1}{3}|B(G)|$ bad vertices.
    \item Else:
    \begin{claim}\label{claim:sss_of_bad_2}
    Note that if from $x$, a bad vertex of type $2$, it is possible to reach $y$, another bad vertex of type $2$, then $x$ and $y$ are in the same strongly connected component.
    \end{claim}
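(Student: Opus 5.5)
The plan is to exploit what the failure of the earlier cases tells us about $G$ at this point of the induction. Cases 3 and 4 failing give two facts:
\begin{itemize}
\item $G$ contains no simple cycle of length at least $3$, so every bad vertex is of type $2$ or $4$;
\item no bad vertex of type $4$ has out-degree at least $2$.
\end{itemize}
Since $x$ is of type $2$, it has out-degree at least $2$, so $x$ cannot be of type $4$. By condition $4$ of \Cref{def:badVertices}, this means that for every simple path $P$ starting at $x$ and ending at some $z\neq x$, at most one out-neighbour of $z$ lies off $P$. This single consequence is the main tool.

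First step. Fix a shortest path $x=u_0,u_1,\dots,u_k=y$, which is simple, and apply the tool with $z=y$. Because $y$ is of type $2$ it has at least two out-neighbours, so at least one of them, say $u_i$, lies on the path. If $i<k-1$, then $y,u_i,u_{i+1},\dots,u_{k-1},y$ is a simple cycle of length at least $3$, which is excluded. Hence $(y,u_{k-1})\in E$, and $y$ has exactly one further out-neighbour $w$, lying off the path.

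Second step. Propagate backwards by downward induction on $i$: show $(u_{i+1},u_i)\in E$ for every $i$. Then $y$ reaches $x$ along the reversed path, so $x$ and $y$ lie in the same strongly connected component. For the inductive step, suppose $u_{i+1}$ reaches $y$ through reverse edges but has no edge back to $u_i$. I would look for a simple path $P$ from $x$ that is built from the prefix $u_0,\dots,u_i$, possibly extended through $y$ and $w$, and that ends at a vertex with two out-neighbours off $P$. Such a $P$ would make $x$ of type $4$, a contradiction. The candidate endpoints are $y$, whose off-path neighbours are $w$ and $u_{k-1}$ once $u_{k-1}$ is not on $P$, and $u_{i+1}$.

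Fallback. If this direct extension fails, I would switch roles and use that $y$ is also not of type $4$: it too has out-degree at least $2$, so the same tool applies to simple paths starting at $y$. Combining this with the absence of long cycles should again force every edge of the shortest path to be two-way.

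The main obstacle is the second step. Showing that $u_{i+1}$ has an edge back to $u_i$ needs a carefully chosen simple path that witnesses $x$ (or $y$) being of type $4$. The difficulty is that the extra out-neighbour $w$ of $y$, or the extra out-neighbour of $x$, may already lie on that path, so the witness must be chosen to avoid it. I would try first the path that runs from $x$ to $u_i$, then jumps to $u_{i+1}$ and follows the already established reverse edges.
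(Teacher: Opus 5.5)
Your first step is correct. The second step, however, is only a plan, and it cannot be completed from the two facts you extract (no simple cycle of length $\geq 3$, no type-4 vertex of out-degree $\geq 2$). Under those hypotheses alone the claim is false.

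Take vertices $x,a,y,b,c,h,g$ and edges $x\to a$, $x\to b$, $b\to x$, $b\to h$, $a\to y$, $y\to a$, $y\to c$, $c\to y$, $c\to g$.
\begin{itemize}
\item All simple cycles have length $2$.
\item Every vertex of out-degree $2$ has all its in-neighbours among its out-neighbours, so no vertex is of type 4.
\item $x$ and $y$ are of type 2, $x$ reaches $y$, but $y$ does not reach $x$.
\end{itemize}
Your first step goes through here ($(y,a)\in E$, $w=c$), and the missing back edge is $(a,x)$. The fallback via $y$ fails as well, since $y$ is not of type 4 either. This graph is connected and passes cases 3, 4 and 5. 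The only thing excluding it is case 2: $a$ is a good vertex with an outgoing edge. So the missing ingredient is the failure of case 2, which makes every vertex with an outgoing edge bad, hence of type 2 or 4.

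With that ingredient no induction is needed. Fix a shortest path $x=u_0,\dots,u_k=y$.
\begin{itemize}
\item Suppose an inner vertex $u_{i+1}$ had out-degree $1$. It would then be of type 4, and its witness path begins $u_{i+1},u_{i+2}$.
\item Neither that witness path nor the (at least two) off-path out-neighbours of its endpoint can meet $u_0,\dots,u_i$ without closing a simple cycle of length $\geq 3$.
\item So prefixing $u_0,\dots,u_i$ shows that $x$ is of type 4, contradicting case 4. Hence every $u_1,\dots,u_k$ has out-degree $\geq 2$.
\item Now if $(u_{i+1},u_i)\notin E$, your own candidate path $u_0,\dots,u_i,u_{i+1}$ witnesses that $x$ is of type 4. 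Indeed $u_{i+1}$ has no out-neighbour on it: not $u_i$ by assumption, and no earlier $u_j$ by the cycle condition.
\end{itemize}
This is essentially the paper's argument. It first shows that all vertices on the path are of type 2. It then observes that an edge $(x',y')$ between type-2 vertices with no edge back makes $x'$ of type 4 via the one-edge path $x',y'$, which is impossible for a vertex of out-degree $\geq 2$.
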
 
    \begin{proof}
    Indeed, if on the path from $x$ to $y$ there is a vertex of type $4$ then $x$ is also of type $4$ (no cycles of length at least $3$).
    But by item $4$ bad vertices of type $4$ have one out-neighbour, contradiction with $x$ being bad of type $2$.
    Thus all vertices on the path from $x$ to $y$ are of type $2$.

    If $x$ and $y$ are in different strongly connected components then there are two vertices $x',y'$ on this path such that $y'\in \neighbourhoodOpen{x'}$ but $x'\notin \neighbourhoodOpen{y'}$.
    But in this case $x'$ is of type $4$ (we established that all vertices on the path, so $y'$ as well, are bad of type $2$) so we have a contradiction. \qedhere
    \end{proof}

    As a corollary we have:
    \begin{claim}\label{claim:two_way_edges}
    If $x$ and $y$ are bad vertices of type $2$ and there is an edge $(x, y)$ then there is an edge $(y, x)$.
    \end{claim}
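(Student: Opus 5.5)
This is a direct corollary of \Cref{claim:sss_of_bad_2}, so the plan is a short deduction. Let $x,y$ be bad vertices of type $2$ with $(x,y)\in E$. Since $y$ is reachable from $x$, \Cref{claim:sss_of_bad_2} puts $x$ and $y$ in the same strongly connected component. Hence there is a path from $y$ back to $x$. It remains to show that this path can be taken to be the single edge $(y,x)$.

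Take a shortest path $y=u_0,u_1,\dots,u_k=x$. Since it is shortest, it is simple, and it does not contain $x$ before its end. Now consider the closed walk $x,u_0,u_1,\dots,u_k=x$.
\begin{itemize}
\item If $x$ does not occur among $u_0,\dots,u_{k-1}$, then $x,u_0,\dots,u_{k-1}$ is a simple cycle of length $k+1$.
\item If $k\ge 2$, this cycle has length at least $3$. Then $x$ lies on a simple cycle of length at least $3$, so $x$ is bad of type $1$.
\end{itemize}
In the current case of the induction (the \emph{Else} branch after item 3), $G$ has no simple cycle of length at least $3$, so this is a contradiction. Hence $k=1$, i.e.\ $(y,x)\in E$. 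The degenerate case $k=0$ would mean $y=x$, and then the required edge $(y,x)=(x,y)$ is just the given edge.

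The only point needing care is to confirm that the hypothesis ``no simple cycle of length $\ge 3$'' is really in force at this stage. It is: case 3 of the induction handles every graph containing such a cycle, so all later cases assume there is none. \Cref{claim:sss_of_bad_2} was itself proved under the same assumption.

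I do not expect a real obstacle. Once strong connectivity is obtained from \Cref{claim:sss_of_bad_2}, the absence of long cycles forces every return path to be a single back edge.
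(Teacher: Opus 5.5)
Your proof is correct and is essentially the paper's argument. \Cref{claim:sss_of_bad_2} gives a simple path from $y$ back to $x$; together with the edge $(x,y)$, any such path of length at least $2$ would close a simple cycle of length at least $3$, which case 3 of the induction has already excluded. Your treatment of the self-loop case $k=0$ and your check that the no-long-cycle hypothesis is in force are small additions the paper leaves implicit.
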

    \begin{proof}
    By~\Cref{claim:sss_of_bad_2} we know that $x$ and $y$ belong to the same strongly connected component,
    so there is a simple path from $y$ to $x$.
    That path cannot have length at least $2$, otherwise we would have a cycle of length at least $3$.
    Thus, it has to be a single edge from $y$ to $x$. \qedhere
    \end{proof}

    We consider further subcases:
    \begin{enumerate}
        \item In $G$ there is a vertex with out-degree at least $3$ and out-neighbours $u_1,u_2,\ldots , u_l$, where $l\geq 3$.
        We know that it has to be of type $2$.
        An example can be seen in~\Cref{fig:out_deg3}.
\begin{figure}[t]
\centering
\begin{tikzpicture}[>=Latex, node distance=10mm, every node/.style={draw,circle,minimum size=7mm,inner sep=0pt}]
\node (u1) {$u_1$};
\node[above left=of u1] (u2) {$u_2$};
\node[above=of u2] (u3) {$u_3$};
\node[above=of u1] (u4) {$u_4$};
\node[above right=of u1] (u5) {$u_5$};
\node[above=of u5] (u6) {$u_6$};

\draw[->] (u3) to (u2);
\draw[->] (u6) to (u5);

\draw[->, bend left=15] (u1) to (u2);
\draw[->, bend left=15] (u2) to (u1);

\draw[->, bend left=15] (u1) to (u4);
\draw[->, bend left=15] (u4) to (u1);

\draw[->, bend left=15] (u1) to (u5);
\draw[->, bend left=15] (u5) to (u1);

\node[right=3cm of u1] (u1') {$u_1$};
\node[above left=of u1'] (u2') {$u_2$};
\node[above=of u2'] (u3') {$u_3$};
\node[above=of u1'] (u4') {$u_4$};
\node[above right=of u1'] (u5') {$u_5$};
\node[above=of u5'] (u6') {$u_6$};

\draw[->] (u3') to (u2');

\draw[->] (u2') to (u1');

\draw[->] (u1') to (u4');

\draw[->] (u1') to (u5');

\end{tikzpicture}
\caption{Example graph (left) with a bad vertex of type $2$ with out-degree greater than $2$ and the graph's simplification (right)}.
\label{fig:out_deg3}
\end{figure}

        We denote by $Pre(u_i, u_j,x)$ the set of vertices from which $x$ can be reached without going through $u_i$ and $u_j$.
        Let $G_1$ be the subgraph of $G$ induced by all vertices from which $x$ can be reached and $G_2=G-G_1$.
        Then there are $i\neq j$ in $\set{1,2,\ldots , l}$ such that $Pre(u_i,u_j,x)$ contains at least $\frac{1}{3}$ of vertices in $G_1$.
        We handle $G_2$ using the induction hypothesis.

        For a simplification of $G_1$ we retain edges $(x,u_i), (x,u_j)$ and edges needed to witness that from every node in $Pre(u_i,u_j,x)$ it is possible to reach $x$, without going through $u_i$ and $u_j$.
        The resulting graph is simple and has at least $\frac{1}{3}|B(G_1)|$ bad vertices.

        \item Else $G$ consists of a strongly connected component of vertices of type $2$, some attached trees of vertices of type $4$ and good vertices.
        \begin{claim}
        Note that a vertex of type $4$ cannot be a part of the strongly connected component containing vertices of type $2$.
        \end{claim}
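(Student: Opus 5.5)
The claim will follow from the fact that, at this point of the case analysis, a vertex of type $4$ that is not of type $2$ has exactly one out-neighbour. This holds because item $4$ no longer applies, so no bad vertex of type $4$ has two outgoing edges. I will derive a contradiction from the assumption that some vertex $z\in\bad_4(G)$ lies in the same strongly connected component $S$ as a vertex $x\in\bad_2(G)$. The plan is to show that $x$ must then be of type $4$ as well. Since $x$ has out-degree at least $2$, this contradicts the invariant above.

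First I will record the structure of $S$. There are no simple cycles of length at least $3$ (item $3$ does not apply), so every edge inside $S$ is two-way, by the same argument as in \Cref{claim:two_way_edges}. Moreover, the underlying undirected graph of $S$ is acyclic: an undirected cycle of length at least $3$ in $S$, with all its edges two-way, would give a directed simple cycle of length at least $3$. Hence $S$ is a bidirected tree, and between any two vertices of $S$ there is a unique simple path.

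Next, since $z$ is of type $4$, fix a simple path $Q=z,\dots,u$ with $u\neq z$ such that $u$ has two out-neighbours $w_1,w_2$ not on $Q$. Since $x$ and $z$ are in $S$, let $R$ be the unique simple path in $S$ from $x$ to $z$. The concatenation $RQ$ is a walk from $x$ to $u$; I will extract from it a simple path $P$ from $x$ to a vertex $u'\neq x$ with two out-neighbours off $P$. If $RQ$ is already simple and $w_1,w_2\notin R$, take $P=RQ$ and $u'=u$.

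Otherwise the bidirected-tree structure lets me shortcut. Let $y$ be the last vertex of $Q$ that also lies on $R$, and let $P$ follow $R$ from $x$ to $y$ and then $Q$ from $y$ to $u$; this $P$ is simple. The remaining issue is that some $w_i$ may lie on the $R$-part of $P$. Then the edge $u\to w_i$ closes a cycle. Since there are no simple cycles of length at least $3$, this cycle has length $2$, so $w_i$ is the predecessor of $u$ on $P$. Then $u$ still has an out-neighbour $w_{3-i}$ off $P$, and I need a second one. Here I will use that $u\in S$ or $u$ is reachable from $S$, together with the degree bound of item 6(a), to argue as follows. Either $u$ has a third out-neighbour, or we can move the witness one step back along $P$, to a vertex whose two out-neighbours are the next vertex of $P$ and a branch off $P$. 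In either case we obtain a vertex $u'\neq x$ on $P$ with two out-neighbours off the truncated path, so $x\in\bad_4(G)$.

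The main obstacle is exactly this last repair: guaranteeing that the witnessing vertex differs from $x$ and keeps two out-neighbours off the path after shortcutting. My first attempt would be to argue purely from the tree structure of $S$ and the fact that good vertices have no outgoing edges. In the degenerate situation where the only candidate is $x$ itself, I would instead use that $z\neq x$ lies on a branch of the tree: the vertex of $R$ adjacent to $x$ then serves as $u'$, because it has the edge back to $x$ and the edge continuing towards $z$ or $u$, and I must check that both of these lie off the chosen path.
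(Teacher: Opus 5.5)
\textbf{There is a genuine gap, at exactly the step you flag as the main obstacle.}

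Your route differs from the paper's. The paper shows directly that the type-$4$ vertex $z$ has no valid witness. By \Cref{claim:sss_of_bad_2} the witness $u$ lies in $S$, so the last edge of the witness path is two-way. Then $u$, having exactly two out-neighbours, keeps at most one of them off the path. You instead want to show that the type-$2$ vertex $x$ is itself of type $4$. That route is viable, but your repair step does not go through.

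The problematic case (some $w_i$ on the $R$-part of $P$) forces $w_i$ to be the predecessor of $u$ on $P$. Since $w_i\notin Q$, this means $u=y\in R$. Your first branch ($u$ has a third out-neighbour) is in fact the one that always occurs, and it contradicts 6(a), but you do not show that it is forced. Instead you allow a second branch, and that branch fails:
\begin{itemize}
\item Moving the witness one step back to $w_i\neq x$ does not help. By the two-way edges inside $S$ and the degree bound of 6(a), the out-neighbours of $w_i$ are exactly $u$ and its own predecessor on $R$. After truncating the path at $w_i$, only one of them lies off it.
\item Your ``degenerate'' fix fails for the same reason. The edge from the neighbour of $x$ back to $x$ ends on the chosen path.
\end{itemize}

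\textbf{The missing observation is that $u\notin S$, which makes this case vacuous.}
\begin{itemize}
\item If $u\in S$, then all of $Q$ lies in $S$, because a path between two vertices of one strongly connected component stays inside it. So the last edge $(b,u)$ of $Q$ is two-way.
\item Since $b$ lies on $Q$, the vertex $u$ would then have the three out-neighbours $b,w_1,w_2$, contradicting 6(a). Hence $u\notin S$, and in particular $u\neq x$.
\item Also $w_1,w_2\notin S$, since otherwise $u$ would reach $S$ and belong to it. So no $w_i$ lies on $R\subseteq S$.
\end{itemize}
Your shortcut path $P$ (along $R$ from $x$ to $y$, then along $Q$ to $u$) is then already a simple path from $x$ to $u\neq x$ with $w_1,w_2$ off it. Thus $x$ is of type $4$ while having out-degree $2$, contradicting item 4.

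Completed this way, your route needs neither \Cref{claim:sss_of_bad_2} nor the tree structure of $S$. It uses only the two-way edges inside $S$ and the degree bounds from items 4 and 6(a). Without this observation, however, your argument is incomplete.
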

        \begin{proof}
        Indeed, if that was the case then in the strongly connected component we would have a vertex $x$ of type $4$ with an edge to a vertex $y$ of type $2$ in the same strongly connected component.
        Since there are no simple cycles of length at least $3$, then there would be an edge $(y, x)$.

        By the previous item, we can assume that every vertex of type $2$ has exactly two out-neighbours.

        By definition of type $4$, we know that from $x$ we can reach some bad vertex $z$ (potentially $y$) of type $2$, such that two out-neighbours of $z$
        do not belong to the path from $x$ to $z$.
        
        But this is impossible. Because of~\Cref{claim:sss_of_bad_2} we can assume that all vertices on the path from $x$ to $z$ are of type $2$.
        The last edge of the path goes from some $z'$ to $z$ and because of~\Cref{claim:two_way_edges} there is an edge $(z,z')$ in the graph,
        thus $z$ does not have two out-neighbours not on the path from $x$ to $z$. \qedhere
        \end{proof}

        From \Cref{claim:sss_of_bad_2} and since every vertex of type $4$ is connected to only one strongly connected component of vertices of type $2$,
        the only way that the strongly connected components can be connected (in the undirected sense) is through edges to common good vertices.
        The whole graph looks as in \Cref{fig:type2-scc}.

\begin{figure}[t]
\centering
\begin{tikzpicture}[>=Latex, node distance=10mm, every node/.style={draw,circle,minimum size=7mm,inner sep=0pt}]
\node (u1) {$u_1$};
\node[right=of u1] (u2) {$u_2$};
\node[right=of u2] (u3) {$u_3$};
\node[right=of u3] (u4) {$u_4$};
\node[above=of u1] (t11) {$t_{1,1}$};
\node[above=of u2] (t21) {$t_{2,1}$};
\node[above=of u3] (t31) {$t_{3,1}$};
\node[above=of u4] (t41) {$t_{4,1}$};

\node[below=of u1] (g1) {$g_{1}$};
\node[below right=of u4] (g4) {$g_{4}$};

\node[above=of t11] (t12) {$t_{1,2}$};
\node[above left=of t31] (t32) {$t_{3,2}$};
\node[above right=of t31] (t33) {$t_{3,3}$};
\node[above=of t33] (t34) {$t_{3,4}$};

\node[right=of u4] (u5) {$u_5$};
\node[right=of u5] (u6) {$u_6$};

\node[above=of u5] (t51) {$t_{5,1}$};
\node[above=of u6] (t61) {$t_{6,1}$};

\draw[->, bend left=30] (u1) to (u2);
\draw[->, bend left=30] (u2) to (u1);

\draw[->, bend left=30] (u2) to (u3);
\draw[->, bend left=30] (u3) to (u2);

\draw[->, bend left=30] (u3) to (u4);
\draw[->, bend left=30] (u4) to (u3);

\draw[->] (u1) -- (g1);
\draw[->] (u4) -- (g4);

\draw[<-] (u1) -- (t11);
\draw[<-] (u2) -- (t21);
\draw[<-] (u3) -- (t31);
\draw[<-] (u4) -- (t41);

\draw[<-] (t11) -- (t12);
\draw[<-] (t31) -- (t32);
\draw[<-] (t31) -- (t33);

\draw[<-] (t33) -- (t34);

\draw[->, bend left=30] (u5) to (u6);
\draw[->, bend left=30] (u6) to (u5);

\draw[<-] (u5) -- (t51);
\draw[<-] (u6) -- (t61);

\draw[->] (u5) -- (g4);
\draw[->] (u6) -- (g4);

\end{tikzpicture}
\caption{Two strongly connected components of bad vertices of type $2$. Vertices labelled with $u_?$ are of type $2$. We also have attached trees of bad vertices of type $4$ labelled with $t_{?,?}$. Vertices $g_1,g_4$ are good.}
\label{fig:type2-scc}
\end{figure}

        We order bad vertices of type $2$ from left to right $u_1,u_2\ldots u_k$.
        Every vertex $u_i$ of type $2$ we label with $l_i$ --- the number of vertices of type $4$ from which $x$ can be reached without going through any other $u$-s plus $1$ for $u_i$. 

        Observe that one of the sums
        $\sum_{i\leq k, i\text{ is odd}} l_{i}$ or $\sum_{i\leq k, i\text{ is even}}  l_{i}$ is at most $\frac{1}{2}|\bad(G)|$. 

        We pick the smaller sum and remove all edges outgoing from the vertices of type $2$ that are in this sum. Next we remove outgoing edges from all vertices that start to be good because of this operation.
        The resulting graph is simple and has at least $\frac{1}{2}|\bad(G)|$ bad vertices.
        
        The result of this operation performed on \Cref{fig:type2-scc} can be seen in \Cref{fig:simplified-type2-scc}.
    
\begin{figure}[t]
\centering
\begin{tikzpicture}[>=Latex, node distance=10mm, every node/.style={draw,circle,minimum size=7mm,inner sep=0pt}]
\node (u1) {$u_1$};
\node[right=of u1] (u2) {$u_2$};
\node[right=of u2] (u3) {$u_3$};
\node[right=of u3] (u4) {$u_4$};
\node[above=of u1] (t11) {$t_{1,1}$};
\node[above=of u2] (t21) {$t_{2,1}$};
\node[above=of u3] (t31) {$t_{3,1}$};
\node[above=of u4] (t41) {$t_{4,1}$};

\node[below=of u1] (g1) {$g_{1}$};
\node[below right=of u4] (g4) {$g_{4}$};

\node[above=of t11] (t12) {$t_{1,2}$};
\node[above left=of t31] (t32) {$t_{3,2}$};
\node[above right=of t31] (t33) {$t_{3,3}$};
\node[above=of t33] (t34) {$t_{3,4}$};

\node[right=of u4] (u5) {$u_5$};
\node[right=of u5] (u6) {$u_6$};

\node[above=of u5] (t51) {$t_{5,1}$};
\node[above=of u6] (t61) {$t_{6,1}$};

\draw[->] (u1) to (u2);

\draw[->] (u3) to (u2);

\draw[->] (u3) to (u4);

\draw[->] (u1) -- (g1);

\draw[<-] (u1) -- (t11);
\draw[<-] (u3) -- (t31);

\draw[<-] (t11) -- (t12);
\draw[<-] (t31) -- (t32);
\draw[<-] (t31) -- (t33);

\draw[<-] (t33) -- (t34);

\draw[<-] (u6) -- (u5);

\draw[<-] (u5) -- (t51);

\draw[->] (u5) -- (g4);

\end{tikzpicture}
\caption{Result of the simplification of~\Cref{fig:type2-scc}, retaining at least half of the bad vertices}.
\label{fig:simplified-type2-scc}
\end{figure}
     \qedhere
    \end{enumerate}
\end{enumerate}
\end{proof}

\begin{proof}[Proof of~\Cref{lem:clean_simple_graph}]
We proceed by induction on the number of bad vertices. The lemma trivially holds when there are no bad vertices.
Assume the lemma holds for all simple graphs with at most $n-1$ bad vertices; we prove it holds for simple graphs with $n$ bad vertices.

Suppose $G$ contains two connected components with at least one bad vertex each.
We can apply the induction hypothesis to each component and merge the solutions.
Thus, without loss of generality, we may assume $G$ has a single connected component containing all bad vertices (possibly with some isolated good vertices, which do not affect the construction).
If this component contains only bad nodes of type $1$ or $3$, then the graph is already clean and simple.
Therefore, we focus on the case where the graph contains only bad vertices of types $2$ and $4$, along with some good vertices.
Note that by a definition of a simple graph, a single connected component cannot have bad vertices of types $1$ or $3$ and $2$ or $4$ simultaneously.

The graph $G'$ is produced by removing edges outgoing from certain bad vertices of type $2$,
thereby converting them to good vertices.
As this can also change some bad vertices of type $4$ into good ones, we also remove edges outgoing from them,
since, in simple graphs, good vertices have no outgoing edges.
Note that removing edges from a single vertex of type $2$ can substantially reduce the number of bad vertices in the entire graph,
as many previously bad vertices of type $4$ may be transformed to good ones by this operation.

To identify which vertices of type $2$ should have their outgoing edges removed, we use the following construction.
We build a weighted graph $H$ whose vertices are the good vertices of $G$.
Each edge in $H$ is labelled with a natural number representing a weight.
For every pair of vertices $x,y$ in $H$, we create an edge weighted as follows: if $z$ is a bad vertex of type $2$ in $G$ with exactly two good out-neighbours $x$ and $y$,
we add to the weight of the edge $\{x,y\}$ the value $1$ plus the number of type $4$ vertices from which $z$ is reachable without going through other vertices of type $2$.

We seek to colour the vertices of $H$ with red and green such that the sum of weights of edges with endpoints of different colours is at least $\frac{1}{2}$ of the total weights sum.
Once such a colouring is found, we remove outgoing edges from bad vertices of type $2$ whose out-neighbours are all monochromatic (same colour), as described $2$ paragraphs earlier.
This produces a clean-simple subgraph with at least $\frac{n}{2}$ bad vertices, as required.

To compute such a colouring of $H$, we use the following algorithm.
We initialize all vertices of $H$ to green.
We then repeatedly apply the following subroutine: for each vertex,
compute the sum of weights of incident edges with endpoints of the same colour and the sum of weights of incident edges with endpoints of different colours.
If the latter sum is smaller for a vertex $x$, recolour this vertex. We continue until no further improvement is possible.

Note that this operation increases the sum of weights of edges with bi-chromatic endpoints.
Indeed, all edges that were monochromatic adjacent to $x$ become bi-chromatic, while all bi-chromatic edges adjacent to $x$ become monochromatic.
Since the sum of monochromatic edge weights incident to $x$ is greater than the sum of bi-chromatic edge weights incident to $x$, this operation increases the total sum of bi-chromatic edge weights.

When no further improvement is possible, every vertex in $H$ has a sum of bi-chromatic incident edge weights greater than or equal to the sum of monochromatic incident edge weights.
Therefore, the same property holds for the entire graph $H$, completing the proof. \qedhere
\end{proof}

\begin{proof}[Proof of~\Cref{thm:hardness_of_memory_automata}]
    We give a polynomial-time reduction from the problem
\textsc{Space-Bounded-TM-Acceptance}, which is PSPACE-complete by
\Cref{thm:PSPACE_Hardness}.

Let
$
\mathcal{M} =
(Q,\set{0,1},\delta,q_0,q_{\mathrm{acc}},q_{\mathrm{rej}})
$ and $(1^n)$ be an instance of
\textsc{Space-Bounded-TM-Acceptance}.
Without loss of generality, we assume that the machine uses the tape
cells $(1,\ldots,n)$, that its head initially points to cell $1$,
and that any transition leaving this range leads to rejection.

\newcommand{\enc}{\operatorname{enc}}
We construct an automaton with memory
\[
\mathcal{A}=
(n,Q_{\mathcal{A}}\times(\set{?}\cup\set{-1,1})
\times\set{1,\ldots,n},T,c_0,c_f).
\]
The value $(-1)$ represents the tape symbol $(0)$, whereas $(1)$
represents the tape symbol $(1).$ Formally, let
$
\enc(0)=-1
\qquad\text{and}\qquad
\enc(1)=1.
$ The set of control states is
$
Q_{\mathcal{A}}=
Q\cup\set{r_1,\ldots,r_n,q_f},
$
where $(r_1,\ldots,r_n,q_f)$ are fresh states used to erase the
memory after acceptance.

The initial configuration is
$
c_0=
\bigl((q_0,?,1),(-1,\ldots,-1)\bigr),
$
since the empty input word is represented by an all-zero tape.
The final configuration is
$
c_f=
\bigl((q_f,?,1),(-1,\ldots,-1)\bigr).
$

{\bf Simulation of Turing-machine transitions.}
Consider a transition
$
(q,a,q',b,D)\in\delta.
$
For every tape position $(i\in\set{1,\ldots,n})$, let
\[
i_D=
\begin{cases}
i-1, & \text{if }D=L,\\
i+1, & \text{if }D=R.
\end{cases}
\]
Whenever $(i_D\in\set{1,\ldots,n})$, we add the following two
transitions to $(T)$:
$
(q,?,i)
\xrightarrow{\enc(a)}
(q,\enc(a),i),
$
and
$
(q,\enc(a),i)
\xrightarrow{\enc(b)}
(q',?,i_D).
$
The first transition tests whether the currently scanned tape cell
contains $a$. The second transition writes $b$, changes the
control state to $(q')$, and moves the head according to $D$.

Thus, every Turing-machine step is represented by exactly one test
transition followed by one update transition.

{\bf Cleanup after acceptance.}
The accepting state of the Turing machine may be reached with an
arbitrary tape content. Since reachability asks for one fixed final
configuration, we add a cleanup phase that resets all memory cells to
(-1).

For every $(i\in\set{1,\ldots,n})$ and $(j\in\set{-1,1})$, add
$
(q_{\mathrm{acc}},?,i)
\xrightarrow{j}
(q_{\mathrm{acc}},j,i)
$
and
$
(q_{\mathrm{acc}},j,i)
\xrightarrow{j}
(r_1,?,1).
$
These transitions preserve the currently scanned memory value and
move the memory head to cell (1).

Next, for every $(i\in\set{1,\ldots,n})$ and every
$(j\in\set{-1,1})$, add
$
(r_i,?,i)
\xrightarrow{j}
(r_i,j,i).
$
For (i<n), add
$
(r_i,j,i)
\xrightarrow{-1}
(r_{i+1},?,i+1),
$
and, for (i=n), add
$
(r_n,j,n)
\xrightarrow{-1}
(q_f,?,1).
$
Hence, after entering $(q_{\mathrm{acc}})$, the automaton visits all
memory cells in order, resets each of them to $(-1)$, and finally
reaches $(c_f)$.

{\bf Correctness.}
We prove that
$
\mathcal{M}
\text{ accepts the empty word using at most }n\text{ cells}
\quad\Longleftrightarrow\quad
c_0\to_{\mathcal{A}}^* c_f.
$

Suppose first that $(\mathcal{M})$ accepts. Initially, the memory
vector of $(\mathcal{A})$ represents the all-zero tape, and the
memory head is at position $(1).$ By construction, every transition
of $(\mathcal{M})$ is simulated by one enabled test transition and
one enabled update transition of $(\mathcal{A}).$ Therefore,
$(\mathcal{A})$ reaches a configuration of the form
$
\bigl((q_{\mathrm{acc}},?,i),v\bigr)
$
for some (i) and some memory valuation (v). The cleanup phase is
then enabled and leads to $(c_f).$ Hence, $
(c_0\to_{\mathcal{A}}^* c_f).$

Conversely, suppose that
$
c_0\to_{\mathcal{A}}^* c_f.
$
The only transitions entering $(q_f)$ belong to the cleanup phase.
Therefore, before reaching $(q_f)$, the automaton must reach a state
of the form $((q_{\mathrm{acc}},?,i)).$ Before entering
$(q_{\mathrm{acc}})$, the automaton can only execute transitions
simulating transitions of $(\mathcal{M}).$ Thus, the corresponding
sequence of simulated Turing-machine transitions reaches
$(q_{\mathrm{acc}}).$ Hence, $(\mathcal{M})$ accepts the empty word
using at most $(n)$ tape cells.

Finally, the construction contains
$(O(n\cdot |\delta|+n))$ transitions and is computable in polynomial
time, because (n) is given in unary. Therefore, this is a
polynomial-time many-one reduction from
\textsc{Space-Bounded-TM-Acceptance} to reachability for automata
with memory. Consequently, the latter problem is PSPACE-hard. \qedhere
\end{proof}

\end{document}